\definecolor{cblue}{rgb}{0.16, 0.32, 0.75}
\definecolor{cred}{rgb}{0.7, 0.11, 0.11}
\newtheorem{theorem}{Theorem}[section]
\newtheorem{lemma}[theorem]{Lemma}
\newtheorem{proposition}[theorem]{Proposition}
\theoremstyle{definition}
\newtheorem{definition}[theorem]{Definition}
\newtheorem{assumption}[theorem]{Assumption}
\theoremstyle{remark}
\newtheorem{remark}[theorem]{Remark}
\numberwithin{equation}{section}
\newcommand{\triang}{\hfill$\triangle$}
\newcommand{\e}{\mathrm{e}}
\renewcommand{\i}{\mathrm{i}}
\begin{document}

\title{Double or nothing: a Kolmogorov extension theorem for multitime (bi)probabilities in quantum mechanics}

\author{Davide Lonigro}
\orcid{0000-0002-0792-8122}
\address{Department Physik, Friedrich-Alexander-Universität Erlangen-Nürnberg, Staudtstraße 7, 91058 Erlangen, Germany}
\address{Dipartimento di Matematica, Università degli Studi di Bari Aldo Moro, via E. Orabona 4, 70125 Bari, Italy}
\address{Istituto Nazionale di Fisica Nucleare, Sezione di Bari, via G. Amendola 173, 70126 Bari, Italy}

\author{Fattah Sakuldee}
\orcid{0000-0001-8756-7904}
\address{Wilczek Quantum Center, School of Physics and Astronomy, Shanghai Jiao Tong University, 800 Dongchuan Road, Minhang, 200240 Shanghai, China}
\address{The International Centre for Theory of Quantum Technologies, University of Gda\'nsk, Jana Ba\.zy\'nskiego 1A, 80-309 Gda\'nsk, Poland}

\author{{\L}ukasz Cywi{\'n}ski}
\orcid{0000-0002-0162-7943}
\address{Institute of Physics, Polish Academy of Sciences, al.~Lotnik{\'o}w 32/46, PL 02-668 Warsaw, Poland}

\author{Dariusz Chru\'{s}ci\'{n}ski}
\orcid{0000-0002-6582-6730}
\address{Institute of Physics, Faculty of Physics, Astronomy and Informatics, Nicolaus Copernicus University,
Grudziadzka 5/7, 87-100 Toru\'n, Poland}

\author{Piotr Sza{\'n}kowski}
\orcid{0000-0003-4306-8702}
\address{Institute of Physics, Polish Academy of Sciences, al.~Lotnik{\'o}w 32/46, PL 02-668 Warsaw, Poland}
\email{piotr.szankowski@ifpan.edu.pl}

\maketitle

\begin{abstract}
    The multitime probability distributions obtained by repeatedly probing a quantum system via the measurement of an observable generally violate Kolmogorov's consistency property. Therefore, one cannot interpret such distributions as the result of the sampling of a single trajectory. We show that, nonetheless, they do result from the sampling of one \emph{pair} of trajectories. In this sense, rather than give up on trajectories, quantum mechanics requires to double down on them. To this purpose, we prove a generalization of the Kolmogorov extension theorem that applies to families of complex-valued bi-probability distributions (that is, defined on pairs of elements of the original sample spaces), and we employ this result in the quantum mechanical scenario. We also discuss the relation of our results with the quantum comb formalism.
\end{abstract}

\section{Introduction}\label{sec:intro}

The theory of stochastic processes finds one of its cornerstones in Kolmogorov's extension theorem~\cite{Kolmogorov}. Under suitable assumptions, the theorem ensures that any \textit{consistent} family of multitime joint probabilities---typically used to describe the distribution of measurement outcomes of a physical system repeatedly probed by the experimenter---arises from an underlying ``master probability'' on a space of \textit{trajectories}. Here, focusing on the case of discrete-valued outcomes in a finite set $\Omega$, ``consistent'' means the following: given any $n$-tuple of times $t_1< t_2<\ldots< t_n$, one has
\begin{equation}\label{eq:kolmogorov_consistency}
    \sum_{f_j\in \Omega}P_{t_n,\ldots,t_j,\dots,t_1}(f_n,\ldots,f_j,\ldots,f_1)=P_{t_n,\ldots,\cancel{t_j},\ldots,t_1}(f_n,\ldots,\cancel{f_j},\ldots,f_1),
\end{equation}
the two functions in the above equation respectively representing the $n$-time and $(n-1)$-time probability distributions. The extension theorem then asserts the existence of a probability measure $\mathcal{P}[f]\,[\mathcal{D}f]$ on the space of trajectories $t\mapsto f(t)\in\Omega$, from which all multitime probability distributions can be obtained as restrictions to discrete time grids. In terms of functional integrals, this corresponds to the formal relation
\begin{equation}\label{eq:functional_integral}
    P_{t_n,\ldots,t_1}(f_n,\ldots,f_1)=\int\Big(\prod_{j=1}^n\delta_{f(t_j),f_j}\Big)\mathcal{P}[f][\mathcal{D}f],
\end{equation}
that is, the $n$-time probability $P_{t_n,\ldots,t_1}(f_n,\ldots,f_1)$ is the outcome of summing over all trajectories $f(t)$ that pass through the specified values at the corresponding times, $f(t_1)=f_1,\ldots,f(t_n)=f_n$. Equivalently, all multitime distributions can be obtained by an underlying \textit{stochastic process}. The phenomenological importance of the extension theorem cannot be understated: it guarantees that, when consistent, the measured chronological sequences are nothing but discrete-time samples of an underlying trajectory that is traced over time in accordance with the dynamical laws of the system.

Such a simple paradigm is a defining feature of classical theories, but it is nowhere to be found in quantum mechanics\footnote{
Generalizations of the Kolmogorov consistency condition for a quantum counterpart of stochastic processes, known as ``quantum combs'' or ``process matrices'', were already investigated in the literature on multitime quantum processes
\cite{Milz2020,Chiribella2009}. In principle, such a construction is based on a different generalization of the trajectory picture than in our work, and hence the resulting analyses are given in different setups. We will revisit the connection of our results to the quantum comb formalism in Section~\ref{subsec:quantum_combs}.
}.
The multitime probability distributions obtained by repeatedly probing a quantum system do \textit{not}, in general, satisfy the consistency property~\eqref{eq:kolmogorov_consistency}. The standard Hilbert space--based formalism of quantum mechanics reflects this basic observation. A quantum system initially prepared in a state $\hat{\rho}$, evolving via a unitary propagator $\hat{U}_{t,t_0}$ generated by a (possibly time-dependent) Hamiltonian $\hat{H}(t)$, and periodically probed with a measuring device represented by a family of orthogonal projectors $\{\hat{P}(f)\}_{f\in\Omega}$ (physically, the spectral resolution of the probed observable) yields a family of multitime probabilities given by
\begin{equation}\label{eq:quantum_prob_0}
    P_{t_n,\ldots,t_1}(f_n,\dots,f_1)=\operatorname{tr}\bigg[\Big(\prod_{j=n}^1\hat{P}_{t_j}(f_j)\Big)\hat{\rho}
    \Big(\prod_{j=1}^n\hat{P}_{t_j}(f_j)\Big)\bigg],
\end{equation}
where $\hat{P}_t(f)=\hat{U}_{0,t}\hat{P}(f)\hat{U}_{t,0}$ and $\prod_{j=n}^1\hat A_j$ ($\prod_{j=1}^n\hat A_j)$ indicates an ordered composition, $\hat A_n\cdots\hat A_1$ ($\hat A_1\cdots\hat A_n$). From Eq.~\eqref{eq:quantum_prob_0} one easily shows that the consistency~\eqref{eq:kolmogorov_consistency} is generally violated, albeit it can be satisfied for specific systems~\cite{Smirne_2019,Strasberg_PRA19,lonigro2022classicality,chruscinski2023markovianity,Strasberg_SciPost23,Szankowski_2024}. 
Therefore, in general there is no underlying stochastic process generating the family of probability distributions representing the phenomenology of the system---in other words, we cannot interpret quantum mechanical probabilities as the result of the probing of one underlying, objective trajectory. 

\subsection{Main result}\label{subsec:main_result}

The violation of consistency by the distributions~\eqref{eq:quantum_prob_0}, and the consequent incompatibility with the Kolmogorov extension theorem, forces the quantum theory to abandon the trajectory picture interpretation, that can be instead regarded as the foundational tenet of classical theories. Indeed, there is no way to reconstruct the outcomes of a quantum mechanical experiment from \textit{one} trajectory. In this paper we will show that, nevertheless, \textit{two} trajectories are enough for this purpose.

The basic idea is remarkably simple. Instead of the probabilities~\eqref{eq:quantum_prob_0}, we shall consider the following family of complex-valued functions defined on the space of \textit{pairs} of $n$-tuples,
\begin{equation}\label{eq:quantum_biprob_0}
    Q_{t_n,\ldots,t_1}(f_n^+,f_n^-;\ldots,f_1^+,f_1^-)=\operatorname{tr}\bigg[\Big(\prod_{j=n}^1\hat{P}_{t_j}(f^+_j)\Big)\hat{\rho}
    \Big(\prod_{j=1}^n\hat{P}_{t_j}(f^-_j)\Big)\bigg].
\end{equation}
They are \textit{normalized}, $\sum_{f^\pm_n,\ldots,f_1^\pm}Q_{t_n,\ldots,t_1}(f_n^+,f_n^-;\ldots;f_1^+,f_1^-)=1$, and their ``diagonal'' values coincide with the probability distributions~\eqref{eq:quantum_prob_0}, $Q_{t_n,\ldots,t_1}(f_n,f_n;\ldots;f_1,f_1) = P_{t_n,\ldots,t_1}(f_n,\ldots,f_1)$. As such, we will refer to these functions as \textit{bi-probability} distributions, even though $Q_{t_n,\ldots,t_1}$, being complex-valued, are not proper probability distributions themselves. A simple calculation shows that the family of bi-probabilities \textit{do} satisfy a consistency--like condition,
\begin{equation}\label{eq:kolmogorov_biconsistency}
    \sum_{f_j^+,f_j^-}\!\!\!Q_{t_n,\ldots,t_j,\ldots,t_1}(f_n^+,f_n^-;...;f_j^+,f_j^-;...;f_1^+,f_1^-)=Q_{t_n,\ldots,\cancel{t_j},\ldots,t_1}(f_n^+,f_n^-;\ldots;\cancel{f_j^+,f_j^-};\ldots;f_1^+,f_1^-).
\end{equation}
We will denote this property as \textit{bi-consistency} to stress the fact that the sum has to be performed \textit{separately} on the two $n$-tuples $(f_n^+,\ldots,f_1^+)$ and $(f_n^-,\ldots,f_1^-)$. 

Motivated by this observation, we will prove that a \textit{generalization} of Kolmogorov's theorem applies to quantum mechanics: there exists a master, complex-valued bi-probability measure $\mathcal{Q}[f^+,f^-][\mathcal{D}f^+][\mathcal{D}f^-]$ on the space of trajectory \textit{pairs} $t\mapsto (f^+(t),f^-(t))$, such that a relation analogous to Eq.~\eqref{eq:functional_integral} holds:
\begin{equation}\label{eq:functional_biintegral}
    Q_{t_n,\dots,t_1}(f_n^+,f_n^-;\ldots;f_1^+,f_1^-)=\iint\Big(\prod_{j=1}^n
        \delta_{f^+(t_j),f^+_j}\,\delta_{f^-(t_j),f^-_j}
    \Big)\mathcal{Q}[f^+,f^-][\mathcal{D}f^+][\mathcal{D}f^-],
\end{equation}
whence, in particular,
\begin{equation}\label{eq:functional_biintegral2}
    P_{t_n,\dots,t_1}(f_n;\dots;f_j)=\iint\Big(\prod_{j=1}^n
            \delta_{f^+(t_j),f_j}\,\delta_{f^-(t_j),f_j}
    \Big)\mathcal{Q}[f^+,f^-][\mathcal{D}f^+][\mathcal{D}f^-],
\end{equation}
and the $n$-time probability $P_{t_n,\ldots,t_1}(f_n,\ldots,f_j)$ is an outcome of the superposition of all \textit{pairs} of trajectories $(f^+(t),f^-(t))$ such that \textit{both} pass through the same sequence of values $f_1,\dots,f_n$ at the corresponding times $t_1,\dots,t_n$. We refer to Theorem~\ref{thm:quantum_extension} for the precise statement. As a ``bonus'', the link between quantum mechanics and the classical theory of stochastic processes becomes manifest: Eq.~\eqref{eq:functional_biintegral2} reduces to its classical counterpart~\eqref{eq:functional_integral} in the limit in which the only pairs of trajectories $(f^+,f^-)$ that contribute to the integral are those with $f^+(t)=f^-(t)$ at all times.

The significance of this result reaches beyond the context of sequential measurements. The bi-probability formalism is a general purpose parameterization that find its use in virtually any context of quantum theory~\cite{Szankowski_SciRep20,Szankowski_PRA21,Szankowski_SciPostLecNotes23,Szankowski_2024}. One illustrative example of such an application is the description of open system dynamics. Let $\hat\rho_{O} \mapsto \Lambda_t\hat\rho_{O}$ be the dynamical map for an open system $O$ that is coupled to an observable $\sum_{f\in\Omega} f\hat P(f) = \hat F$ via the Hamiltonian $\hat H_{OE}(t) = \hat H_{O}\otimes\hat 1+ \hat 1\otimes \hat H(t) + \lambda \hat V_{O}\otimes\hat F$, where the original system plays the role of the environment. Then $\Lambda_t$ can be parameterized as a ``bi-average'' over trajectory pairs~\cite{Szankowski_SciRep20,Szankowski_SciPostLecNotes23,Szankowski_2024},
\begin{align}
\nonumber
    \Lambda_t\hat\rho_{O} &=\operatorname{tr}_{E}\Big[\big(\mathcal{T}\e^{-\i\int_0^t\hat H_{OE}(s)\mathrm{d}s}\big)\hat\rho_{O}\otimes\hat\rho\,
        \big(\mathcal{T}\e^{-\i\int_0^t\hat H_{OE}(s)\mathrm{d}s}\big)^\dagger\Big]\\
    &= \iint
        \e^{-\i \int_0^t\big(\hat H_{O} + \lambda f^+(s)\hat V_{O}\big)\mathrm{d}s}\hat\rho_{O}\,\e^{+\i \int_0^t\big(\hat H_{O} + \lambda f^-(s)\hat V_{O}\big)\mathrm{d}s}\,\mathcal{Q}[f^+,f^-][\mathcal{D}f^+][\mathcal{D}f^-].
\end{align}
As such, the extension theorem proven here provides the rigorous mathematical justification for the use of the \textit{bi-trajectory picture} in the bi-probability formulation, a justification that was previously lacking.

\subsection{Strategy and outline}

The goal of the paper is achieved in two steps:
\begin{itemize}
     \item the generalization of the extension theorem to \textit{complex-valued} (bi)probabilities is stated and subsequently proven in Section~\ref{sec:gen_extension_thrm};
    \item the generalized extension theorem is shown in Section~\ref{sec:quantum_extension} to be applicable to quantum mechanical bi-probabilities given by Eq.~\eqref{eq:quantum_biprob_0}. 
\end{itemize}
The first step is made necessary by the fact that Kolmogorov's extension theorem, in its standard formulation (see e.g.~\cite{Davies_76,bhattacharya2017basic}), only handles \textit{positive} and \textit{normalized} distributions. Fortunately, it is possible to adapt the proof of its standard version to the case of complex-valued functions. Notably, an additional requirement with respect to the standard one is needed: for a family of general bi-probabilities $\{B_{t_n,\dots,t_1}:n\in\mathbb{N},t_1<\dots< t_n\}$ to be extendable it must satisfy, on top of bi-consistency, a \textit{uniform bound} condition:
\begin{equation}\label{eq:uniform_bound_intro}
    \sup\Big\{
    \sum_{f_n^+,f_n^-}\cdots\sum_{f_1^+,f_1^-}\left|B_{t_n,\dots,t_1}(f_n^+,f_n^-;\dots,f_1^+,f_1^-)\right|
    \,:\,n\in\mathbb{N},\,t_1<\cdots<t_n
    \Big\}<\infty,
\end{equation}
which, while obvious for normalized positive distributions, is generally nontrivial when signed or complex-valued functions are involved, like in this case. A thorough explanation of the mathematics behind this construction (even beyond the discrete scenario discussed here) is reported in Appendix~\ref{app:measure_theory}.

In the second step we go back to quantum mechanics and prove that the functions defined in Eq.~\eqref{eq:quantum_biprob_0} do satisfy the assumptions of the generalized extension theorem. This will essentially boil down to showing that the crucial property~\eqref{eq:uniform_bound_intro} does hold as long as the family of quantum bi-probabilities is confined to a finite time window. In fact, demonstrating uniform boundedness for quantum mechanical bi-probabilities turns out to be the main technical challenge in the way towards the desired result.

The concluding discussion of our results, their connections with related concepts found in the literature, and possible future developments, are gathered in Section~\ref{sec:discussion}. The summary and final remarks are collected in Section~\ref{sec:conclusion}.

\section{Kolmogorov extension theorem for multitime bi-probabilities}\label{sec:gen_extension_thrm}

Before we investigate the extension theorem for the specific family of bi-probabilities encountered in the quantum setting (cf.~Eq.~\eqref{eq:quantum_biprob_0}), we shall discuss how it operates in the general context. Let $\Omega$ be a set of real numbers, and $I$ an arbitrary (continuous or discrete) index set. Physically, $\Omega$ represents the set of possible outcomes of a single experiment, and $I$ could be regarded as the time interval over which the experiment or the dynamics of the system takes place. In the abstract context considered here, the purpose of $I$ is to keep track of the order of the supplied arguments. Now, consider a family of functions
\begin{align}
    \mathbb{B}_I &:= \{ B_{t_n,\ldots,t_1}(f_n^+,f_n^-;\ldots;f_1^+,f_1^-)\ :\ n\in\mathbb{N},\,t_1,\ldots,t_n\in I,\,t_1<\ldots<t_n \},
\end{align}
where each member is an abstract complex-valued function of a pair of $n$-tuples,
\begin{align}
    \Omega^n\times\Omega^n\ni (\bm{f}^+_n,\bm{f}^-_n)&\mapsto B_{\bm{t}_n}(\bm{f}^+_n,\bm{f}^-_n)\in\mathbb{C}. 
\end{align}
Here and in the following, we will be using the following notation: $n$-tuples of elements either of $\Omega$ or of $I$ are written as quasi-vectors $\bm{f}_n=(f_n,\ldots,f_1)$ and $\bm{t}_n=(t_n,\ldots,t_1)$, with the subscript in $\bm{f}_n$ and $\bm{t}_n$ denoting the number of elements. Even when not explicitly indicated, we shall always assume that the times $t_1,\ldots,t_n\in I$ are distinct and \textit{ordered}, $t_1<t_2<\ldots<t_n$. Besides, with a slight abuse of notation, we will use either of the following symbols to denote the values of $B_{\bm{t}_n}$:
\begin{equation}
    B_{\bm{t}_n}(\bm{f}_n^+,\bm{f}_n^-),\qquad B_{\bm{t}_n}(f_n^+,f_n^-;\ldots;f_1^+,f_1^-),\qquad B_{t_n,\ldots,t_1}(f_n^+,f_n^-;\ldots;f_1^+,f_1^-),
\end{equation}
the second or third one being used whenever we will need to single out one particular pair of entries.

Presently the goal is to examine the specific mechanisms behind the extension theorem for the family of complex-valued functions. Hence, we shall leave the family $\mathbb{B}_I$ largely unconstrained except for two properties that are sufficient for the extension to work.
\begin{assumption}\label{assump:extendable_family}
 The family of functions $\mathbb{B}_I$ satisfy the following properties:
    \begin{enumerate}[label=\textnormal{(B\arabic*)}]
    \item \label{assump:extendable_family:bi-consistency} \textit{Bi-consistency}: for every $n\in\mathbb{N}$, $j=1,\dots,n$, and every $\bm{t}_n=(t_n,\dots,t_1)$, we have\small
    \begin{equation}
        \sum_{f^+_j,f^-_j\in\Omega}B_{\bm{t}_n}(\bm{f}^+_n,\bm{f}^-_n)=B_{t_1,\dots,\cancel{t_j},\dots,t_n}(f^+_n,f^-_n;\dots;\cancel{f^+_j,f^-_j};\dots;f^+_1,f^-_1).
    \end{equation}\normalsize
    \item \label{assump:extendable_family:uni_bound} \textit{Uniform boundedness}:
        \begin{equation}\label{eq:abstract_uni_bound}
            \sup\Big\{\|B_{\bm{t}_n}\|_1\,:\, B_{\bm{t}_n}\in\mathbb{B}_I\Big\} <\infty,\quad\text{where}\quad\|B_{\bm{t}_n}\|_1:=
             \sum_{\bm{f}^+_n,\bm{f}^-_n}\left|B_{\bm{t}_n}(\bm{f}^+_n,\bm{f}^-_n)\right|.
        \end{equation}\normalsize
    \end{enumerate}
\end{assumption}
Note that the uniform boundedness is trivially satisfied in the case of standard (that is, positive-valued) probability distributions. Indeed, if the functions satisfy $0\leq B_{\bm{t}_n}(\bm{f}_n^+,\bm{f}_n^-)\leq 1$, and are normalized, $\sum_{\bm{f}^\pm_n}B_{\bm{t}_n}(\bm{f}_n^+,\bm{f}_n^-)=1$, then by definition $\|B_{\bm{t}_n}\|_1 = 1$ for the whole family. For this reason, in the traditional statement of the extension theorem, the second assumption would be superfluous. However, for our purposes (cf.~Section~\ref{sec:quantum_extension}), we shall be dealing with the complex-valued case, whence the need for~\ref{assump:extendable_family:uni_bound}.

The idea at the root of the extension theorem is that each distribution in $\mathbb{B}_I$, each defining a discrete complex-valued measure on $\Omega^n\times\Omega^n$, can be unambiguously obtained from a unique, complex-valued ``master measure'' on the space $\Omega^I\times\Omega^I$ of \textit{pairs of trajectories}, i.e., the space of functions
 \begin{align}
         I\ni t \mapsto (f^+(t),f^-(t))\in \Omega\times\Omega
    \end{align}
Technically, complex-valued measures are set functions defined on a properly chosen class of sets---a $\sigma$-algebra. There is indeed a standard procedure for constructing a $\sigma$-algebra on spaces of trajectories, cf.~Definition~\ref{def:product_space} in the Appendix~\ref{app:measure_theory} collecting all technical details; this construction shall be tacitly assumed in the following.
\begin{theorem}[Extension theorem for bi-probabilities]\label{thm:extension_theorem}
    Let $\mathbb{B}_I$ satisfy Assumptions~\ref{assump:extendable_family:bi-consistency} and~\ref{assump:extendable_family:uni_bound}. Then there exists a unique complex-valued measure $\mathcal{B}_I[f^+,f^-][\mathcal{D}f^+][\mathcal{D}f^-]$ on the space of trajectory pairs that uniquely extends every member of $\mathbb{B}_I$ in the following sense: $B_{\bm{t}_n}(\bm{f}^+_n,\bm{f}^-_n)$ equals the integral of $\mathcal{B}_I[f^+,f^-][\mathcal{D}f^+][\mathcal{D}f^-]$ on the set of all pairs of trajectories satisfying $f^\pm(t_j)=f^\pm_j$, $j=1,\dots,n$, i.e.,
    \begin{align}\label{eq:gen_extension_thrm:extension_formula}
        B_{\bm{t}_n}(\bm{f}^+_n,\bm{f}_n^-) = \iint \Big(\prod_{j=1}^n
            \delta_{f^+(t_j),f_j^+}\,\delta_{f^-(t_j),f_j^-}
        \Big)\mathcal{B}_I[f^+,f^-][\mathcal{D}f^+][\mathcal{D}f^-].
    \end{align}
\end{theorem}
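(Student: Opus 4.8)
The plan is to adapt the Carath\'eodory-based proof of the classical Kolmogorov extension theorem (as in~\cite{Davies_76,bhattacharya2017basic}) to the complex-valued setting, the two novelties being the use of bi-consistency~\ref{assump:extendable_family:bi-consistency} in place of the usual Kolmogorov consistency and, crucially, the exploitation of the uniform bound~\ref{assump:extendable_family:uni_bound} to control the total variation. First I would set up the algebra $\mathcal{A}$ generated by the \emph{cylinder sets} on $\Omega^I\times\Omega^I$, i.e.\ sets of trajectory pairs constrained at finitely many times (cf.\ Definition~\ref{def:product_space}). On an elementary cylinder fixing $f^\pm(t_j)=f_j^\pm$ for $j=1,\dots,n$ I would declare the pre-measure to take the value $B_{\bm{t}_n}(\bm{f}_n^+,\bm{f}_n^-)$, and extend it to $\mathcal{A}$ by finite additivity. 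The first thing to check is that this assignment is \emph{well defined}: a cylinder admits many representations differing by the insertion of redundant times, and consistency of the two definitions is precisely what bi-consistency~\ref{assump:extendable_family:bi-consistency} guarantees, since summing $B_{\bm{t}_n}$ over the pair of entries attached to a redundant time reproduces the lower-order function. This yields a finitely additive, complex-valued set function $\mu$ on $\mathcal{A}$.

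The second ingredient is to promote $\mu$ from a finitely additive set function to a genuine complex \emph{pre-measure} of finite total variation. Here the uniform bound plays its role: the total variation of $\mu$ over $\mathcal{A}$, computed as the supremum over finite cylinder partitions of $\sum_k|\mu(C_k)|$, is controlled by refining the partition down to elementary cylinders, where it equals $\|B_{\bm{t}_n}\|_1$; since adding times can only increase this sum (by the triangle inequality applied to~\ref{assump:extendable_family:bi-consistency}), the total variation equals the supremum in~\eqref{eq:abstract_uni_bound}, which is finite by~\ref{assump:extendable_family:uni_bound}. Finiteness of the total variation is what allows the complex problem to be reduced to the positive one: one may either split $\mu$ into its real and imaginary, positive and negative parts and treat four finite positive set functions, or work directly with the positive total-variation set function and recover $\mu$ through a unit-modulus Radon--Nikodym density. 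Either way, the remaining task is the extension of a \emph{positive}, finite, finitely additive set function.

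The heart of the matter---and the step I expect to be the main obstacle---is therefore $\sigma$-additivity on $\mathcal{A}$, after which Carath\'eodory's theorem delivers the unique extension. In the setting relevant to the quantum application, where $\Omega$ is a finite set, this step is clean: equipping $\Omega$ with the discrete topology makes $\Omega^I\times\Omega^I$ compact by Tychonoff's theorem, and every cylinder is simultaneously open and closed. Consequently a cylinder cannot be partitioned into infinitely many nonempty disjoint cylinders---a finite subcover would already exhaust it---so any countable cylinder decomposition is effectively finite and $\sigma$-additivity collapses onto the finite additivity already established. For general $\Omega\subset\mathbb{R}$ this compactness shortcut is unavailable and one must instead invoke the regularity/tightness machinery of the classical theorem applied to the positive total-variation measure, which is the content deferred to Appendix~\ref{app:measure_theory}. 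Once $\sigma$-additivity is in hand, Carath\'eodory extension produces a unique complex measure $\mathcal{B}_I[f^+,f^-][\mathcal{D}f^+][\mathcal{D}f^-]$ on the generated $\sigma$-algebra. Finally I would verify the representation~\eqref{eq:gen_extension_thrm:extension_formula} by noting that the product of Kronecker deltas is the indicator of the elementary cylinder fixing $f^\pm(t_j)=f_j^\pm$, so that its integral against $\mathcal{B}_I$ returns $\mu$ of that cylinder, namely $B_{\bm{t}_n}(\bm{f}_n^+,\bm{f}_n^-)$; uniqueness follows because the cylinders form an intersection-closed generating system and the measures involved are finite, so a Dynkin-system argument pins down $\mathcal{B}_I$ on the whole $\sigma$-algebra.
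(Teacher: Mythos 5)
Your proposal is correct, but it follows a genuinely different route from the paper. You build the measure set-theoretically: a finitely additive complex set function on the cylinder algebra (well-defined by bi-consistency~\ref{assump:extendable_family:bi-consistency}), with total variation controlled by refinement to a common time grid and hence bounded by the supremum in~\ref{assump:extendable_family:uni_bound}; $\sigma$-additivity then comes essentially for free because, for finite $\Omega$, the space $\Omega^I\times\Omega^I$ is compact and cylinders are clopen, so countable disjoint cylinder decompositions are effectively finite; finally you reduce the complex case to four positive pre-measures by Jordan decomposition, apply Carath\'eodory, and get uniqueness from a $\pi$--$\lambda$ argument. The paper instead works dually, at the level of the averages $\operatorname{E}_{\bm{t}_n}$: bi-consistency makes these functionals compatible across time grids, the uniform bound makes the net $(\operatorname{E}_{\bm{t}_n}[X_{\bm{t}_n}])$ Cauchy, Stone--Weierstrass density of finite-coordinate continuous functions in $\mathrm{C}(\Omega^I\times\Omega^I)$ yields a bounded linear functional on the whole space, and the Riesz--Markov theorem then produces the unique regular complex Borel measure in one stroke, with the extension formula checked via a measure-determining lemma. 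The two assumptions play exactly parallel roles in both arguments (bi-consistency $=$ well-definedness; uniform bound $=$ finite total variation, respectively boundedness of the functional). What each approach buys: yours is more elementary and makes the measure-theoretic role of~\ref{assump:extendable_family:uni_bound} maximally transparent, but it leans on finiteness of $\Omega$ for the clopen-cylinder compactness shortcut---beyond that case you must import the tightness/regularity machinery you defer to, and you need some care with Jordan decomposition of finitely additive set functions; the paper's functional-analytic route handles complex values with no decomposition at all (Riesz--Markov applies to arbitrary bounded functionals, not just positive ones) and extends verbatim to arbitrary compact metric outcome spaces $M_t$, which is exactly the generality of Theorem~\ref{thm:extension_theorem_pro} in the appendix, at the price of invoking Stone--Weierstrass and Riesz--Markov as black boxes.
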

This theorem is a special case (see Remark~\ref{rem:pro}) of a more general statement, Theorem~\ref{thm:extension_theorem_pro}, that essentially generalizes the traditional Kolmogorov extension theorem to include complex-valued measures. The full proof of this generalization, together with an overview of all the needed mathematical theory, is presented in Appendix~\ref{app:measure_theory}. Referring to the appendix for all due details, below we sketch a simplified formulation of the proof, adapted to the case examined here. This should suffice in illustrating the ideas behind it and the necessity of \textit{both} assumptions~\ref{assump:extendable_family:bi-consistency} and~\ref{assump:extendable_family:uni_bound}.

\begin{proof}[Sketch of the proof]

In plain words, the theorem asserts that, for a bi-consistent and uniformly bounded family of multi-variable functions, it is possible to take the limit of infinitely many arguments. However, instead of performing the limit on the functions $B_{\bm{t}_n}$ themselves, the idea is to carry it out at the level of the corresponding \textit{averages}---that is, the linear functionals defined by
\begin{align}
    \operatorname{E}_{\bm{t}_n}[X_{\bm{t}_n}]:= \sum_{\bm{f}_n^\pm}B_{\bm{t}_n}(\bm{f}_n^+,\bm{f}_n^-)X_{\bm{t}_n}(\bm{f}_n^+,\bm{f}_n^-)\label{eq:average}
\end{align}
with $X_{\bm{t}_n}:\,\Omega^n\times\Omega^n \to \mathbb{C}$. The proof then boils down to demonstrating the existence of a unique ``master average'', that is, a functional $\mathcal{E}_I$ satisfying the following property: given any continuous $\mathcal{X}_I:\Omega^I\times\Omega^I\to\mathbb{C}$, we have
\begin{align}\label{eq:gen_extension_thrm:limit_formal_def}
    \mathcal{E}_I[\mathcal{X}_I] = \lim_{\bm{t}_n\to I}\operatorname{E}_{\bm{t}_n}[X_{\bm{t}_n}],
\end{align}
for any sequence of multi-variable functions $(X_{\bm{t}_n})_{\bm{t}_n}$ such that $\|X_{\bm{t}_n} - \mathcal{X}_I\|_\infty \to 0$ as $\bm{t}_n \to I$, where $\|\cdot\|_\infty$ denotes the supremum norm. The Stone--Weierstrass theorem ensures that such sequence always exists for any continuous $\mathcal{X}_I$. Here, $\bm{t}_n \to I$ is used as a shorthand for the corresponding set limit, $\lim_{n\to\infty} I_n\cap I = I$, with
\begin{align}
    I_n := \{ t\in I\,:\,\bm{t}_n = (t_n,\ldots,t_1),\  t=t_n,\ldots,t_1\},
\end{align}
where the sequence $(\bm{t}_n)_{n=1}^\infty$ is chosen in such a way that $I_1\subset I_2 \subset\cdots\subset I$. 

We must therefore show that the convergence of $(X_{\bm{t}_n})_{\bm{t}_n}$ to a given $\mathcal{X}_I$ implies the convergence of $(\operatorname{E}_{\bm{t}_n}[X_{\bm{t}_n}])_{\bm{t}_n}$, \textit{independently} of the choice of the sequence of functions. To this end it is sufficient and necessary to verify that $(\operatorname{E}_{\bm{t}_n}[X_{\bm{t}_n}])_{\bm{t}_n}$ is a Cauchy sequence, which in this case follows from~\ref{assump:extendable_family:bi-consistency} and~\ref{assump:extendable_family:uni_bound},
\begin{align}\label{eq:extension}
\nonumber
    \big|\operatorname{E}_{\bm{t}_n}[X_{\bm{t}_n}]-\operatorname{E}_{\bm{t}_m}[X_{\bm{t}_m}]\big| &=
        \big|\operatorname{E}_{\bm{t}_n\cup\bm{t}_m}[X_{\bm{t}_n}]-\operatorname{E}_{\bm{t}_n\cup\bm{t}_m}[X_{\bm{t}_m}]\big|
        =\big|\operatorname{E}_{\bm{t}_n\cup\bm{t}_m}[X_{\bm{t}_n}-X_{\bm{t}_m}]\big|\\
\nonumber
    &\leq \|X_{\bm{t}_n}-X_{\bm{t}_m}\|_\infty \|B_{\bm{t}_n\cup\bm{t}_m}\|_1\\
    &\leq \Big(\sup_{B_{\bm{t}_k}\in\mathbb{B}_I}\|B_{\bm{t}_k}\|_1\Big)\|X_{\bm{t}_n}-X_{\bm{t}_m}\|_\infty
    \xrightarrow{\bm{t}_n,\bm{t}_m\to I} 0,
\end{align}
with $\bm{t}_n\cup\bm{t}_m$ indicating $N$-tuple $\bm{t}_N$ consisting of elements from both $\bm{t}_n$ and $\bm{t}_m$ (and thus, corresponding to set $I_N = I_n\cup I_m$). The first two equalities in~\eqref{eq:extension}, where the linearity of $\operatorname{E}$ is used on two functions with \textit{unequal} tuples $\bm{t}_n\neq \bm{t}_m$, are enabled by bi-consistency~\ref{assump:extendable_family:bi-consistency}: indeed, if we take e.g. $\bm{t}_{n}=(t_n,\ldots,t_1)$ and $\bm{t}_m = (t_n,\ldots,\cancel{t_j},\ldots,t_1)$, then
\begin{align}
\nonumber
    \operatorname{E}_{\bm{t}_m}[X_{\bm{t}_m}] &= \sum_{\bm{f}^\pm_m} B_{\bm{t}_m}(\bm{f}_m^+,\bm{f}_m^-)X_{\bm{t}_m}(\bm{f}_m^+,\bm{f}_m^-)
        = \sum_{\bm{f}_m^\pm}\Big(\sum_{f_j^\pm}B_{\bm{t}_n}(\bm{f}_n^+,\bm{f}_n^-)\Big)X_{\bm{t}_m}(\bm{f}_m^+,\bm{f}_m^-)\\
    &=\sum_{\bm{f}_n^\pm}B_{\bm{t}_n}(\bm{f}_n^+,\bm{f}_n^-)X_{\bm{t}_m}(\bm{f}_m^+,\bm{f}_m^-)
        = \operatorname{E}_{\bm{t}_n}[X_{\bm{t}_m}] = \operatorname{E}_{\bm{t}_n\cup\bm{t}_m}[X_{\bm{t}_m}].
\end{align}
Of course, the last inequality in~\eqref{eq:extension} is due to the uniform boundedness~\ref{assump:extendable_family:uni_bound} of $\mathbb{B}_I$.

Thus, the existence of a functional $\mathcal{E}_I$ satisfying Eq.~\eqref{eq:gen_extension_thrm:limit_formal_def} has been proven. Since measures and the averages are in a one-to-one correspondence (Theorem~\ref{thm:riesz-markov}), this ultimately guarantees the existence of a measure on trajectories satisfying
\begin{align}
    \iint \mathcal{X}_I[f^+,f^-]\,\mathcal{B}_I[f^+,f^-][\mathcal{D}f^+][\mathcal{D}f^-] &= \mathcal{E}_I[\mathcal{X}_I]
\end{align}
for every continuous $\mathcal{X}_I:\,\Omega^I\times\Omega^I\to\mathbb{C}$. Finally, to prove Eq.~\eqref{eq:gen_extension_thrm:extension_formula}, take $\bm{t}_n\in I^n$ and $\bm{f}^\pm_n\in\Omega^n$, let $\chi_{\bm{t}_n}$ be the corresponding characteristic function:
\begin{equation}
    \chi_{\bm{t}_n}(\bm{\phi}_n^+,\bm{\phi}_n^-)=\delta_{\bm{\phi}^+_n,\bm{f}_n^+}\delta_{\bm{\phi}_n^-,\bm{f}_n^-}
        =\prod_{j=1}^n\delta_{\phi^+_j,f^+_j}\delta_{\phi^-_j,f^-_j}
\end{equation}
and define $\chi_{I}$ by
\begin{equation}
    \chi_{I}[f^+,f^-]:=\chi_{\bm{t}_n}\left(f^+(t_n),f^-(t_n);\ldots;f^+(t_1),f^-(t_1)\right)
        =\prod_{j=1}^n\delta_{f^+(t_j),f^+_j}\delta_{f^-(t_j),f^-_j}.
\end{equation}
Then by construction $\mathrm{E}_{\bm{t}_n}[\chi_{\bm{t}_n}]=\mathcal{E}_I[\chi_{I}]$, and a direct check shows that the former and the latter respectively correspond to the left- and right-hand sides of Eq.~\eqref{eq:gen_extension_thrm:extension_formula}.

\end{proof}

\section{The extension theorem in quantum mechanics}\label{sec:quantum_extension}

Let us now consider a quantum system represented by a finite-dimensional Hilbert space $\mathcal{H}$, the dynamics of which is described by the unitary evolution operator ($\hbar=1$):
\begin{align}
    \hat U_{t,t'}&=\mathcal{T}\e^{-\i\int_{t'}^t\hat H(s)ds}
        = \sum_{k=0}^\infty (-\i)^k\int_{t'}^t\mathrm{d}s_k\int_{t'}^{s_{k}}\mathrm{d}s_{k-1}\cdots\int_{t'}^{s_{2}}\mathrm{d}s_1\prod_{j=k}^1\hat H(s_j),
\end{align}
with $\hat{H}(t)$ being the (possibly time-dependent) Hamiltonian of the system.
The initial state of the system is represented by a density operator $\hat{\rho}$ on $\mathcal{H}$, and it is considered here to be fixed.

In this setting, one constructs the family of \textit{quantum} bi-probabilities associated with the time interval $I\subseteq\mathbb{R}_+$ and an observable $F$ represented by a discrete-valued projection-valued measure (PVM) $\{\hat{P}(f)\}_{f\in\Omega}$, with $\Omega$ being a set of finite cardinality,
\begin{align}
    \mathbb{Q}_{I}^F := \{ Q_{\bm{t}_n}(\bm{f}_n^+,\bm{f}^-_n)\ :\ n\in\mathbb{N},\,0=t_0<t_1<\ldots<t_n,\,t_1,\ldots,t_n\in I\}
\end{align}
consisting of functions defined according to
\begin{equation}\label{eq:quantum_bi-prob}
    Q_{\bm{t}_n}(\bm{f}^+_n,\bm{f}^-_n):=\operatorname{tr}\Big[\Big(\prod_{j=n}^1\hat{P}_{t_j}(f^+_j)\Big)\hat{\rho}
    \,\Big(\prod_{j=1}^n\hat{P}_{t_j}(f^-_j)\Big)\Big],
\end{equation}
with $\hat{P}_t(f)=\hat U_{0,t}\hat{P}(f)\hat U_{t,0}$. 

Bi-probabilities find their use in multiple contexts of standard quantum theory as a flexible parameterization of the dynamics of the observable they are associated with. Examples include the descriptions of the quantum--classical transition~\cite{Szankowski_2024} and, as already mentioned in Section~\ref{subsec:main_result}, of open system dynamics~\cite{Szankowski_SciRep20,Szankowski_PRA21,Szankowski_SciPostLecNotes23}. Of course, the family of bi-probabilities also contains the description of sequential measurements of the observable $F$~\cite{Szankowski_2024,Szankowski_PRA21}. Accordingly, together with $\mathbb{Q}_I^F$, we shall also consider the family of probabilities describing sequences of projective measurements of $F$ performed on the system,
\begin{align}\label{eq:quantum_prob}
    \mathbb{P}_I^F := \{ P_{\bm{t}_n}(\bm{f}_n)\ :\ n\in \mathbb{N},\ 0=t_0<t_1<\ldots<t_n,\ t_1,\ldots,t_n\in I\},
\end{align}
where each member function is defined as the diagonal part of the corresponding quantum bi-probability:
\begin{equation}
    P_{\bm{t}_n}(\bm{f}_n):=Q_{\bm{t}_n}(\bm{f}_n,\bm{f}_n)
        = \operatorname{tr}\Big[\Big(\prod_{j=n}^1\hat P_{t_j}(f_j)\Big)\hat\rho\,\Big(\prod_{j=1}^n\hat P_{t_j}(f_j)\Big)\Big].
\end{equation}

\begin{theorem}[Extension theorem for quantum bi-probabilities]\label{thm:quantum_extension}
    Let $\mathcal{H}$ be a Hilbert space with dimension $\dim\mathcal{H}=d<\infty$, $\hat{\rho}$ a density operator on $\mathcal{H}$, $\{\hat{P}(f)\}_{f\in\Omega}$ a projection-valued measure, and $\hat{H}(t)$ a continuously time-dependent Hamiltonian over a compact interval $I\subset\mathbb{R}$. Let $\mathbb{Q}^F_I$ the corresponding family of complex bi-probability distributions as defined in Eq.~\eqref{eq:quantum_bi-prob}. 
    Then there exists a complex-valued measure $\mathcal{Q}_I[f^+,f^-]\,[\mathcal{D}f^+][\mathcal{D}f^-]$ on the space of trajectory pairs that uniquely extends $\mathbb{Q}^F_I$ in the following sense:
    \begin{equation}
        Q_{\bm{t}_n}(\bm{f}^+_n,\bm{f}^-_n)=\iint \Big(\prod_{j=1}^n 
                \delta_{f^+(t_j),f^+_j}\,\delta_{f^-(t_j),f^-_j}
        \Big)\mathcal{Q}_I[f^+,f^-][\mathcal{D}f^+][\mathcal{D}f^-],
    \end{equation}
and, in particular,
\begin{equation}
        P_{\bm{t}_n}(\bm{f}_n)=\iint \Big(\prod_{j=1}^n 
            \delta_{f^+(t_j),f_j}\,\delta_{f^-(t_j),f_j}
        \Big)\mathcal{Q}_I[f^+,f^-][\mathcal{D}f^+][\mathcal{D}f^-].
    \end{equation}
\end{theorem}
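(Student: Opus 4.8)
The plan is to deduce Theorem~\ref{thm:quantum_extension} from the general extension theorem, Theorem~\ref{thm:extension_theorem}, by verifying that the quantum family $\mathbb{Q}^F_I$ fulfils the two hypotheses~\ref{assump:extendable_family:bi-consistency} and~\ref{assump:extendable_family:uni_bound}. Bi-consistency~\ref{assump:extendable_family:bi-consistency} is the easy half. Since $\{\hat{P}(f)\}_{f\in\Omega}$ is a PVM, the resolution of identity $\sum_{f}\hat{P}(f)=\hat 1$ is preserved under the conjugation defining $\hat{P}_t(f)$, whence $\sum_{f}\hat{P}_{t_j}(f)=\hat U_{0,t_j}\hat 1\hat U_{t_j,0}=\hat 1$. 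Summing Eq.~\eqref{eq:quantum_bi-prob} over $f_j^+$ collapses the $j$-th factor of the left ordered product $\prod_{j=n}^1\hat{P}_{t_j}(f_j^+)$ to the identity, while summing over $f_j^-$ does the same for the right product; the surviving trace is exactly $Q$ evaluated on the tuple with $t_j$ removed. I would emphasize here that it is precisely the \emph{bi}-structure---the $+$ arguments sitting to the left of $\hat\rho$ and the $-$ arguments to its right---that lets the two sums act on two independent factors, which is why bi-consistency holds while the single-trajectory consistency~\eqref{eq:kolmogorov_consistency} fails.

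The whole difficulty lies in the uniform bound~\ref{assump:extendable_family:uni_bound}. Writing $\hat A(\bm{f}_n)=\prod_{j=n}^1\hat{P}_{t_j}(f_j)$ and diagonalizing $\hat\rho=\sum_k p_k\ket{\psi_k}\bra{\psi_k}$, I would first recast $Q_{\bm{t}_n}(\bm{f}_n^+,\bm{f}_n^-)=\sum_k p_k\braket{\phi^{(k)}_{\bm{f}_n^-}|\phi^{(k)}_{\bm{f}_n^+}}$ with $\ket{\phi^{(k)}_{\bm{f}_n}}=\hat A(\bm{f}_n)\ket{\psi_k}$, so that the triangle and Cauchy--Schwarz inequalities give $\|Q_{\bm{t}_n}\|_1\le\sum_k p_k\big(\sum_{\bm{f}_n}\|\hat A(\bm{f}_n)\ket{\psi_k}\|\big)^2$. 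The task thus reduces to bounding $S_n:=\sum_{\bm{f}_n}\|\hat A(\bm{f}_n)\ket{\psi}\|$ uniformly in $n$ and in the grid $\bm{t}_n$. The naive route---combining $\sum_{\bm{f}_n}\|\hat A(\bm{f}_n)\ket\psi\|^2=1$ with $\ell^1\le\sqrt{N}\,\ell^2$---is useless, since it only yields $S_n\le|\Omega|^{n/2}$, which diverges as the grid is refined.

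The key idea I would exploit is that the \emph{continuity} of $\hat H$ forces the branching of $\ket\psi$ under successive measurements to be almost diagonal whenever two times are close. Concretely, for $f_j\neq f_{j-1}$ one has $\hat{P}_{t_{j-1}}(f_j)\hat{P}_{t_{j-1}}(f_{j-1})=0$, so that $\hat{P}_{t_j}(f_j)\hat{P}_{t_{j-1}}(f_{j-1})=\big[\hat{P}_{t_j}(f_j)-\hat{P}_{t_{j-1}}(f_j)\big]\hat{P}_{t_{j-1}}(f_{j-1})$, and the increment is controlled via $\|\hat{P}_{t_j}(f)-\hat{P}_{t_{j-1}}(f)\|\le\int_{t_{j-1}}^{t_j}\|[\hat H(s),\hat{P}(f)]\|\,\mathrm{d}s\le 2h\,(t_j-t_{j-1})$, where $h:=\sup_{s\in I}\|\hat H(s)\|<\infty$ by continuity on the compact $I$. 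Recalling that $\ket{\phi^{(j-1)}}$ lies in the range of $\hat{P}_{t_{j-1}}(f_{j-1})$, I split the sum over $f_j$ into the diagonal term $f_j=f_{j-1}$ (bounded by $\|\phi^{(j-1)}\|$, projectors being contractions) and the $|\Omega|-1$ off-diagonal terms (each bounded by $2h(t_j-t_{j-1})\|\phi^{(j-1)}\|$). This gives the multiplicative recursion $S_j\le S_{j-1}\big(1+2h(|\Omega|-1)(t_j-t_{j-1})\big)$. Telescoping, using $1+x\le\e^x$ and $\sum_{j}(t_j-t_{j-1})=t_n-t_1\le|I|$, and noting $S_1\le\sqrt{|\Omega|}$ from a single Cauchy--Schwarz step at the first time, I obtain the grid-independent estimate $S_n\le\sqrt{|\Omega|}\,\e^{2h(|\Omega|-1)|I|}$, hence $\|Q_{\bm{t}_n}\|_1\le|\Omega|\,\e^{4h(|\Omega|-1)|I|}$ uniformly, which is~\ref{assump:extendable_family:uni_bound}.

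With both hypotheses in hand, Theorem~\ref{thm:extension_theorem} applies verbatim and delivers the master measure $\mathcal{Q}_I$ together with the extension formula for $Q_{\bm{t}_n}$; the stated formula for $P_{\bm{t}_n}$ follows at once by setting $\bm{f}_n^+=\bm{f}_n^-=\bm{f}_n$. The main obstacle is, as anticipated in the paper, the uniform bound: the entire argument hinges on replacing the lossy $\ell^2\to\ell^1$ estimate by the dynamical estimate above, whose decisive feature is that the accumulated factor depends only on $h$ and the length $|I|$ of the time window, and \emph{not} on the number $n$ of probing times nor on their precise locations.
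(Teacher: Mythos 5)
Your proposal is correct, and its overall architecture---verify hypotheses~\ref{assump:extendable_family:bi-consistency} and~\ref{assump:extendable_family:uni_bound}, then invoke Theorem~\ref{thm:extension_theorem}---is the same as the paper's; your bi-consistency argument is essentially identical to the paper's proof of~\ref{prop:bi-consistency}. However, your proof of the uniform bound, which is indeed the crux, follows a genuinely different route. The paper proceeds by (i) reducing without loss of generality to rank-one projectors, (ii) proving a mesh-refinement lemma (Lemma~\ref{lemma:refinement}) stating that $\|Q_{\bm{t}_n}\|_1$ can only grow under refinement of the time grid, (iii) embedding the given grid into a sequence of nearly uniform meshes and estimating the ``stay'' and ``leave'' transition amplitudes via Taylor's theorem, and (iv) taking the limit of infinitely fine meshes to produce the bound $d^2\exp\bigl[2(d-1)\int_I\|\hat H(s)\|_\mathrm{op}\,\mathrm{d}s\bigr]$. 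You instead work directly on the given grid: after decomposing $\hat\rho$ into pure states, you run a multiplicative recursion on the branch sums $S_j$, with the off-diagonal branches controlled by the Lipschitz estimate $\|\hat P_{t_j}(f)-\hat P_{t_{j-1}}(f)\|_\mathrm{op}\leq 2h\,(t_j-t_{j-1})$ (valid since $\tfrac{\mathrm{d}}{\mathrm{d}s}\hat P_s(f)=\i\,\hat U_{0,s}[\hat H(s),\hat P(f)]\hat U_{s,0}$) combined with the equal-time orthogonality $\hat P_{t_{j-1}}(f_j)\hat P_{t_{j-1}}(f_{j-1})=0$ for $f_j\neq f_{j-1}$. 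This buys several simplifications: no refinement lemma, no limiting procedure $N\to\infty$, no Taylor-remainder bookkeeping, and no need for the rank-one reduction, since your recursion is insensitive to degeneracy of the PVM. What it costs is a slightly weaker constant, $|\Omega|\,\e^{4h(|\Omega|-1)|I|}$ with $h=\sup_{s\in I}\|\hat H(s)\|_\mathrm{op}$, versus the paper's $d^2\exp\bigl[2(d-1)\int_I\|\hat H(s)\|_\mathrm{op}\,\mathrm{d}s\bigr]$; this is immaterial for the theorem, and you could in fact recover the integral form by retaining $\int_{t_{j-1}}^{t_j}\|\hat H(s)\|_\mathrm{op}\,\mathrm{d}s$ in place of $h\,(t_j-t_{j-1})$ in each step of the recursion.
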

This result follows from Theorem~\ref{thm:extension_theorem}, provided that the family $\mathbb{Q}_I^F$ is bi-consistent~\ref{assump:extendable_family:bi-consistency} and uniformly bounded~\ref{assump:extendable_family:uni_bound}. Accordingly, the remainder of this section is devoted to showing that $\mathbb{Q}^F_I$ does indeed satisfy both of these properties; while bi-consistency will simply follow by an immediate computation, uniform boundedness will be proven as a non-trivial consequence of the physical properties of the dynamics in quantum systems. We postpone the discussion on this result to Section~\ref{sec:discussion}.

\subsection{Properties of quantum bi-probabilities}

We shall start by listing some basic properties of the families $\mathbb{Q}^F_I$ and $\mathbb{P}^F_I$ that follow directly from the definition~\eqref{eq:quantum_bi-prob}.
\begin{proposition}\label{prop:properties_quantum_bi-probs}
 The family of quantum bi-probabilities $\mathbb{Q}_I^F$ as in Eq.~\eqref{eq:quantum_bi-prob} has the following properties:
    \begin{enumerate}[label=\textnormal{(Q\arabic*)}]
    \item \label{prop:normalization} normalization:
    \begin{equation*}
        \sum_{\bm{f}_n^+,\bm{f}_n^-\in\Omega^n}Q_{\bm{t}_n}(\bm{f}_n^+,\bm{f}_n^-)=1.
    \end{equation*}
    \item \label{prop:causality} causality:
    \begin{equation*}
        Q_{t_n,\ldots,t_1}(f^+_n,f^-_n;\dots;f^+_1,f^-_1)=\delta_{f^+_n,f^-_n}Q_{t_n,\dots,t_1}(f^+_n,f^-_n;\dots;f^+_1,f^-_1).
    \end{equation*}
    \item \label{prop:positivity} positive semidefiniteness: given any function $\Omega^n\ni \bm{f}_n\mapsto Z(\bm{f}_n)\in\mathbb{C}$,
    \begin{equation*}
        \sum_{\bm{f}_n^+,\bm{f}_n^-}Z(\bm{f}_n^+)Q_{\bm{t}_n}(\bm{f}_n^+,\bm{f}_n^-)Z(\bm{f}_n^-)^* \geq 0.
    \end{equation*}
    \item \label{prop:bi-consistency} \textbf{bi-consistency}:
    \begin{equation*}
        \sum_{f^+_j,f^-_j\in\Omega}Q_{\bm{t}_n}(\bm{f}_n^+,\bm{f}_n^-)=Q_{t_n,\ldots,\cancel{t_j},\ldots,t_1}(f^+_n,f^-_n;\ldots;\cancel{f^+_j,f^-_j};\ldots;f^+_1,f^-_1).
    \end{equation*}\normalsize
\end{enumerate}
\end{proposition}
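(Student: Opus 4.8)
The plan is to reduce all four statements to elementary manipulations of the projector chains appearing in~\eqref{eq:quantum_bi-prob}. First I would record the preliminary observation that, for each fixed time $t$, the conjugated family $\{\hat P_t(f)\}_{f\in\Omega}$ is itself a PVM: since $\hat U_{t,0}^\dagger=\hat U_{0,t}$, unitary conjugation preserves Hermiticity, idempotence and orthogonality, so that $\hat P_t(f)^\dagger=\hat P_t(f)$, $\hat P_t(f)\hat P_t(f')=\delta_{f,f'}\hat P_t(f)$, and completeness $\sum_{f\in\Omega}\hat P_t(f)=\hat U_{0,t}\big(\sum_f\hat P(f)\big)\hat U_{t,0}=\hat 1$ is inherited from $\{\hat P(f)\}$. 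It is then convenient to introduce the chain operator $\hat K(\bm f_n):=\prod_{j=1}^n\hat P_{t_j}(f_j)$, so that $\hat K(\bm f_n)^\dagger=\prod_{j=n}^1\hat P_{t_j}(f_j)$ and the bi-probability takes the compact form $Q_{\bm t_n}(\bm f_n^+,\bm f_n^-)=\operatorname{tr}[\hat K(\bm f_n^+)^\dagger\,\hat\rho\,\hat K(\bm f_n^-)]$. Every property below then follows by combining this representation with the PVM identities, the cyclicity of the trace, and the facts $\operatorname{tr}\hat\rho=1$ and $\hat\rho\geq 0$.

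For normalization~\ref{prop:normalization} and bi-consistency~\ref{prop:bi-consistency} the engine is completeness applied \emph{separately} to the two chains. Summing $Q_{\bm t_n}$ over all $\bm f_n^+$ and $\bm f_n^-$ factorizes the trace into $\operatorname{tr}[(\sum_{\bm f_n^+}\hat K(\bm f_n^+)^\dagger)\,\hat\rho\,(\sum_{\bm f_n^-}\hat K(\bm f_n^-))]$; collapsing each chain by repeated use of $\sum_f\hat P_{t_j}(f)=\hat 1$, starting from the outermost factor, gives $\sum_{\bm f_n}\hat K(\bm f_n)=\hat 1$, so the sum reduces to $\operatorname{tr}\hat\rho=1$. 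Bi-consistency is the same mechanism localized at a single time: writing $\hat K(\bm f_n^+)^\dagger=\hat L^+\,\hat P_{t_j}(f_j^+)\,\hat R^+$ and $\hat K(\bm f_n^-)=\hat R^-\,\hat P_{t_j}(f_j^-)\,\hat L^-$, so that only the displayed factors carry the index $f_j^\pm$ (with $\hat\rho$ and all remaining factors independent of it), the independent sums $\sum_{f_j^+}\hat P_{t_j}(f_j^+)=\hat 1$ and $\sum_{f_j^-}\hat P_{t_j}(f_j^-)=\hat 1$ delete precisely the $t_j$ factor from each chain and produce the $(n-1)$-time bi-probability with $t_j$ removed. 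This is exactly where the contrast with ordinary Kolmogorov consistency becomes visible: in a single-chain probability the \emph{same} index $f_j$ sits in both the left and right chains, so summing it produces a pinching $\sum_{f_j}\hat P_{t_j}(f_j)(\cdots)\hat P_{t_j}(f_j)$ rather than two independent resolutions of the identity, and consistency generically fails.

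Causality~\ref{prop:causality} follows from cyclicity together with orthogonality at the \emph{latest} time $t_n$: moving the rightmost factor $\hat P_{t_n}(f_n^-)$ to the front of the trace produces $\hat P_{t_n}(f_n^-)\hat P_{t_n}(f_n^+)=\delta_{f_n^+,f_n^-}\hat P_{t_n}(f_n^+)$, so $Q_{\bm t_n}$ vanishes unless $f_n^+=f_n^-$, which is the asserted identity. Positive semidefiniteness~\ref{prop:positivity} is the one step needing a small reorganization rather than a direct collapse: I would recognize the quadratic form $\sum_{\bm f_n^\pm}Z(\bm f_n^+)Q_{\bm t_n}(\bm f_n^+,\bm f_n^-)Z(\bm f_n^-)^*$ as $\operatorname{tr}[\hat B^\dagger\hat\rho\,\hat B]$ with $\hat B:=\sum_{\bm f_n}Z(\bm f_n)^*\hat K(\bm f_n)$, which is manifestly nonnegative because $\hat\rho\geq 0$ (expanding $\hat\rho=\sum_k p_k\ket{\psi_k}\bra{\psi_k}$ with $p_k\geq 0$ yields $\sum_k p_k\bra{\psi_k}\hat B\hat B^\dagger\ket{\psi_k}\geq 0$). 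None of the four arguments is genuinely deep; the only points demanding care are purely organizational---keeping the $+$ and $-$ orderings straight and checking that each chain carries the index $f_j^\pm$ in a single place, so that the two bi-consistency sums act independently---together with the recognition of the correct operator $\hat B$ in the positivity argument. I therefore expect the main (modest) obstacle to be bookkeeping rather than conceptual, with the real difficulty of the paper deferred to the uniform-boundedness estimate needed to invoke Theorem~\ref{thm:extension_theorem}.
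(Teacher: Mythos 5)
Your proposal is correct and follows essentially the same route as the paper's proof: PVM completeness (resolutions of the identity inserted \emph{separately} in the two projector chains) for normalization~\ref{prop:normalization} and bi-consistency~\ref{prop:bi-consistency}, trace cyclicity plus orthogonality of $\{\hat P_{t_n}(f)\}_f$ for causality~\ref{prop:causality}, and recognition of the quadratic form as a manifestly nonnegative trace for~\ref{prop:positivity}. The only (immaterial) difference is in~\ref{prop:positivity}, where the paper splits $\hat\rho=\hat\rho^{1/2}\hat\rho^{1/2}$ to write the form as $\operatorname{tr}[\hat A\hat A^\dagger]$, while you keep $\hat\rho$ whole as $\operatorname{tr}[\hat B^\dagger\hat\rho\,\hat B]$ and invoke its spectral decomposition---an equivalent argument.
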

\begin{proof}
   Property~\ref{prop:normalization} is an immediate consequence of the normalization of the PVM, $\sum_{f}\hat P_t(f) = \hat 1$ for any $t\in I$;~\ref{prop:causality} follows from the cyclic property of the trace and orthogonality of projector operators, $\hat P_{t_n}(f_n^-)\hat P_{t_n}(f_n^+) \propto \delta_{f_n^+,f_n^-}$. Let us show~\ref{prop:positivity},
    \begin{align}     
    \nonumber
        &\sum_{\bm{f}_n^+,\bm{f}_n^-} Z(\bm{f}^+_n)Q_{\bm{t}_n}(\bm{f}^+_n,\bm{f}^-_n)Z(\bm{f}_n^-)^*\\
    \nonumber
        &\phantom{=}=
            \operatorname{tr}\left[\left(\sum_{\bm{f}_n^+}Z(\bm{f}_n^+)\Big(\prod_{j=n}^1\hat{P}_{t_j}(f^+_j)\Big)\hat\rho^{1/2}\right)
            \left(\sum_{\bm{f}_n^-}\hat\rho^{1/2}\Big(\prod_{j=1}^n\hat{P}_{t_j}(f^-_j)\Big)Z(\bm{f}_n^-)^*\right)\right]\\
    \nonumber
        &\phantom{=}=
            \operatorname{tr}\left[\left(\sum_{\bm{f}_n}Z(\bm{f}_n)\Big(\prod_{j=n}^1\hat{P}_{t_j}(f_j)\Big)\hat\rho^{1/2}\right)
            \left(\sum_{\bm{f}_n}Z(\bm{f}_n)\Big(\prod_{j=n}^1\hat{P}_{t_j}(f_j)\Big)\hat\rho^{1/2}\right)^\dagger\right]
        \geq 0.
    \end{align}
    Finally, the key bi-consistency~\ref{prop:bi-consistency} results from a direct calculation,
    \begin{align}
    \nonumber
        \sum_{f_j^+,f_j^-}Q_{\bm{t}_n}(\bm{f}_n^+,\bm{f}_n^-) &= \operatorname{tr}\Big[
            \hat P_{t_n}(f_n^+)\cdots\Big(\sum_{f_j^+}\hat P_{t_j}(f_j^+)\Big)\cdots\hat P_{t_1}(f_1^+)
            \hat\rho\\
    \nonumber
        &\phantom{=\operatorname{tr}\Big[}\times
            \hat P_{t_1}(f_1^-)\cdots\Big(\sum_{f_j^-}\hat P_{t_j}(f_j^-)\Big)\cdots\hat P_{t_n}(f_n^-)
        \Big]\\
    \nonumber
        &=\operatorname{tr}\Big[\hat P_{t_n}(f_n^+)\cdots\cancel{\hat P_{t_j}(f_j^+)}\cdots\hat P_{t_1}(f_1^+)\hat\rho
            \hat P_{t_1}(f_1^-)\cdots\cancel{\hat P_{t_j}(f_j^-)}\cdots\hat P_{t_n}(f_n^-)\Big]\\
        &= Q_{t_n,\ldots,\cancel{t_j},\ldots,t_1}(f_n^+,f_n^-;\ldots;\cancel{f_j^+,f_j^-};\ldots;f_1^+,f_1^-).
    \end{align}
\end{proof}

\begin{proposition}\label{prop:diagonal_part}
 The family of functions $\mathbb{P}_I^F$ as in Eq.~\eqref{eq:quantum_prob} has the following properties:
\begin{enumerate}[label=\textnormal{(P\arabic*)}]
    \item\label{prop:diag_prob} joint probability distributions:
    \begin{align*}
        P_{\bm{t}_n}(\bm{f}_n) \geq 0,\qquad\sum_{\bm{f}_n}P_{\bm{t}_n}(\bm{f}_n) = 1;
    \end{align*}
    \item\label{prop:CS} \textit{bi-probability bounding}:
    \begin{equation*}
        |Q_{\bm{t}_n}(\bm{f}_n^+,\bm{f}^-_n)|\leq \sqrt{P_{\bm{t}_n}(\bm{f}^+_n)}\,\sqrt{P_{\bm{t}_n}(\bm{f}^-_n)};
    \end{equation*}
    \item\label{prop:diag_causal} \textit{causality of measurements}:
    \begin{equation*}
         \sum_{f_{n+1}}P_{t_{n+1},t_{n},\dots,t_1}(f_{n+1};f_{n};\dots;f_1)=P_{\bm{t}_n}(\bm{f}_n);
    \end{equation*}
    \item\label{prop:inconsistency} \text{measurement inconsistency}:
        \begin{align*}
            P_{t_n,\ldots,\cancel{t_j},\ldots,t_1}(f_n,\ldots,\cancel{f_j},\ldots,f_1) - \sum_{f_j}P_{\bm{t}_n}(\bm{f}_n)
            = \sum_{f_j^+ \neq f_j^-} Q_{\bm{t}_n}(f_n,f_n;\ldots;f_j^+,f_j^-;\ldots;f_1,f_1).
        \end{align*}
\end{enumerate}
\end{proposition}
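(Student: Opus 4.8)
The plan is to derive all four properties directly from the trace representation in Eq.~\eqref{eq:quantum_bi-prob}, using only self-adjointness, idempotence, and completeness of the projectors together with the cyclicity and positivity of the trace. Throughout I would abbreviate the ordered product as $\hat A_{\bm f_n}:=\prod_{j=n}^1\hat P_{t_j}(f_j)$; since each $\hat P_t(f)=\hat U_{0,t}\hat P(f)\hat U_{t,0}$ is self-adjoint, the reversed product satisfies $\prod_{j=1}^n\hat P_{t_j}(f_j)=\hat A_{\bm f_n}^\dagger$, so that $Q_{\bm t_n}(\bm f_n^+,\bm f_n^-)=\operatorname{tr}[\hat A_{\bm f_n^+}\hat\rho\,\hat A_{\bm f_n^-}^\dagger]$ and $P_{\bm t_n}(\bm f_n)=\operatorname{tr}[\hat A_{\bm f_n}\hat\rho\,\hat A_{\bm f_n}^\dagger]$. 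With this in hand the positivity half of~\ref{prop:diag_prob} is immediate: factoring $\hat\rho=\hat\rho^{1/2}\hat\rho^{1/2}$ gives $P_{\bm t_n}(\bm f_n)=\operatorname{tr}[(\hat A_{\bm f_n}\hat\rho^{1/2})(\hat A_{\bm f_n}\hat\rho^{1/2})^\dagger]=\|\hat A_{\bm f_n}\hat\rho^{1/2}\|_{\mathrm{HS}}^2\ge 0$, a squared Hilbert--Schmidt norm.

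I would next prove causality~\ref{prop:diag_causal}, since the normalization half of~\ref{prop:diag_prob} is most cleanly obtained by iterating it. The key observation is that the projector carrying the \emph{latest} time sits on the outermost left and right of the trace; summing over its outcome $f_{n+1}$ and using cyclicity turns the two outer factors into $\hat P_{t_{n+1}}(f_{n+1})^2=\hat P_{t_{n+1}}(f_{n+1})$, after which completeness $\sum_{f_{n+1}}\hat P_{t_{n+1}}(f_{n+1})=\hat 1$ collapses the expression to $P_{\bm t_n}(\bm f_n)$. Iterating this identity from the latest time down to $t_1$ reduces $\sum_{\bm f_n}P_{\bm t_n}(\bm f_n)$ to $\sum_{f_1}\operatorname{tr}[\hat P_{t_1}(f_1)\hat\rho]=\operatorname{tr}\hat\rho=1$, which completes~\ref{prop:diag_prob}.

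For the bounding inequality~\ref{prop:CS} I would recognize $Q_{\bm t_n}(\bm f_n^+,\bm f_n^-)=\langle \hat A_{\bm f_n^-}\hat\rho^{1/2},\,\hat A_{\bm f_n^+}\hat\rho^{1/2}\rangle_{\mathrm{HS}}$ as a Hilbert--Schmidt inner product and apply Cauchy--Schwarz, using $\|\hat A_{\bm f_n^\pm}\hat\rho^{1/2}\|_{\mathrm{HS}}^2=P_{\bm t_n}(\bm f_n^\pm)$ from the preceding step; this gives $|Q_{\bm t_n}(\bm f_n^+,\bm f_n^-)|\le\sqrt{P_{\bm t_n}(\bm f_n^+)}\,\sqrt{P_{\bm t_n}(\bm f_n^-)}$ at once. (Equivalently, this is the non-negativity of the $2\times2$ principal minor of the positive-semidefinite form~\ref{prop:positivity}.) Finally,~\ref{prop:inconsistency} I would get purely combinatorially from bi-consistency~\ref{prop:bi-consistency}: applied at position $j$ to the fully diagonal bi-probability, it rewrites the marginal $P_{t_n,\dots,\cancel{t_j},\dots,t_1}(f_n,\dots,\cancel{f_j},\dots,f_1)$ as the double sum $\sum_{f_j^+,f_j^-}Q_{\bm t_n}(f_n,f_n;\dots;f_j^+,f_j^-;\dots;f_1,f_1)$, and splitting this into its diagonal part $f_j^+=f_j^-$ (which reassembles $\sum_{f_j}P_{\bm t_n}(\bm f_n)$) and its off-diagonal remainder yields exactly the claimed identity.

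I expect no deep obstacle here, as these are structural consequences of the definitions; the only genuinely delicate point is the normalization in~\ref{prop:diag_prob}. There one must resist invoking the bi-probability normalization~\ref{prop:normalization}, which sums the two labels $\bm f^+$ and $\bm f^-$ \emph{independently}, whereas the diagonal sum $\sum_{\bm f_n}P_{\bm t_n}(\bm f_n)$ enforces $\bm f^+=\bm f^-$; routing instead through the iterated causality~\ref{prop:diag_causal} is what makes the argument correct.
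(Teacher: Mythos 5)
Your proposal is correct and takes essentially the same route as the paper: Cauchy--Schwarz on the Frobenius/Hilbert--Schmidt inner product for~\ref{prop:CS}, iterated causality for the normalization in~\ref{prop:diag_prob}, and splitting the bi-consistency sum into its diagonal and off-diagonal parts for~\ref{prop:inconsistency}. The only cosmetic difference is that you re-derive positivity and causality directly at the trace level (cyclicity, idempotence, completeness of the projectors), whereas the paper routes these through the already-established properties~\ref{prop:causality},~\ref{prop:positivity} and~\ref{prop:bi-consistency} of Proposition~\ref{prop:properties_quantum_bi-probs} --- the same manipulations, merely inlined rather than cited.
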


\begin{proof}
    The estimate~\ref{prop:CS} comes from the Cauchy--Schwarz inequality for the Frobenius inner product $(\hat A,\hat B) = \operatorname{tr}[\hat A^\dagger\hat B]$,
    \begin{align}
    \nonumber
        |Q_{\bm{t}_n}(\bm{f}_n^+,\bm{f}_n^-)|^2 &= \Big|\Big(
            \hat\rho^{1/2}\prod_{j=1}^n\hat P_{t_j}(f_j^+) \ ,\  \hat\rho^{1/2}\prod_{j=1}^n\hat P_{t_j}(f^-_j)
        \Big)\Big|^2\\
    \nonumber
        &\leq 
        \Big(\hat\rho^{1/2}\prod_{j=1}^n\hat P_{t_j}(f_j^+) \ ,\  \hat\rho^{1/2}\prod_{j=1}^n\hat P_{t_j}(f^+_j)\Big)
        \Big(\hat\rho^{1/2}\prod_{j=1}^n\hat P_{t_j}(f_j^-) \ ,\  \hat\rho^{1/2}\prod_{j=1}^n\hat P_{t_j}(f^-_j)\Big)\\
        &=Q_{\bm{t}_n}(\bm{f}_n^+,\bm{f}_n^+)Q_{\bm{t}_n}(\bm{f}_n^-,\bm{f}_n^-).
    \end{align}
    Property~\ref{prop:diag_causal} follows from the causality~\ref{prop:causality} and the bi-consistency~\ref{prop:bi-consistency} of bi-probabilities,
    \begin{align}
        \sum_{f_n}P_{\bm{t}_n}(\bm{f}_n)
            &=\sum_{f_n}Q_{\bm{t}_n}(\bm{f}_n,\bm{f}_n)
                = \sum_{f_n^+,f_n^-}Q_{\bm{t}_n}(f_n^+,f_n^-;\bm{f}_{n-1},\bm{f}_{n-1})
                = P_{\bm{t}_{n-1}}(\bm{f}_{n-1}),
    \end{align}
Then, the normalization of $P_{\bm{t}_n}$ is obtained by a repeated application of~\ref{prop:diag_causal} and the non-negativity follows from positive semidefiniteness~\ref{prop:positivity} with $Z(\bm{f}_n^\pm) = \prod_{j=1}^n \delta_{f_j^\pm,f_j}$. Finally,~\ref{prop:inconsistency} is simply the bi-consistency~\ref{prop:bi-consistency} for the diagonal part of bi-probabilities.
\end{proof}

The relation~\ref{prop:inconsistency} explains why the single-trajectory picture is inadequate for the description of quantum measurements~\cite{Szankowski_2024}: The remainder on the right-hand-side, that consists of a combination of ``off-diagonal'' parts of bi-probabilities, is the cause for violation of the classical Kolmogorov consistency~\eqref{eq:kolmogorov_consistency}. Consequently, the (classical) extension theorem for probabilities $\mathbb{P}_I^F$ does not work here, and without it the (single) trajectory interpretation for the observed measurements cannot be valid.

\subsection{Uniform bound for quantum bi-probabilities}

While bi-consistency is trivially satisfied for $\mathbb{Q}_I^F$, see property~\ref{prop:bi-consistency}, uniform boundedness,
\begin{align}
    \sup\big\{\|Q_{\bm{t}_n}\|_1\,:\, Q_{\bm{t}_n}\in\mathbb{Q}_I^F\big\} < \infty,
\end{align}
is not obvious. It would be tempting to guess that the uniform bound can be somehow derived from the normalization~\ref{prop:normalization} coupled with some of the other already identified properties of $\mathbb{Q}_I^F$. This turns out to not be the case, however. By only relying on the previous properties, the best one can do is the following (non-uniform) bound,
\begin{align}\label{eq:nonuniform_bound}
\|Q_{\bm{t}_n}\|_1&=\sum_{\bm{f}^+_n,\bm{f}^-_n}|Q_{\bm{t}_n}(\bm{f}^+_n,\bm{f}^-_n)|
    \leq\left(\sum_{\bm{f}_n}\sqrt{P_{\bm{t}_n}(\bm{f}_n)}\right)^2
    \leq\left(|\Omega|^{\frac{n}{2}}\sum_{\bm{f}_n}P_{\bm{t}_n}(\bm{f}_n)\right)^2= |\Omega|^{n},
\end{align}
where we used H\"older's inequality and properties~\ref{prop:diag_prob} and~\ref{prop:CS}, and $|\Omega|\leq d$ is the number of elements in $\Omega$. As this bound depends explicitly on the number of arguments $n$, it cannot be used for our purpose.

To show that the family of quantum bi-probabilities $\mathbb{Q}_I^F$ is uniformly bounded, one has to supplement its mathematical properties with two additional \textit{physical} constraints. The first one is to restrict bi-probabilities to a \textit{finite} time window, i.e., hereafter, $I=[0,T]$, $0<T<\infty$. The second necessary constraint---or rather a natural feature of quantum dynamics---is the strong continuity of quantum evolution. The significance of both will soon become apparent.

\begin{definition}[Time mesh refinement]\label{def:refinement}
    Let $n,N\in\mathbb{N}$ and $\bm{t}_n\in I^n$, $\bm{\tau}_N\in I^N$ be two ordered tuples of times. We say that $\bm{\tau}_N$ is a \textit{refinement} of $\bm{t}_n$ if the following conditions hold: $N\geq n$, and there exists a map $\iota:\{1,2,\dots,n\}\rightarrow\{1,2,\dots,N\}$ such that $\iota(n)=N$, and
    \begin{equation}
        \tau_{\iota(j)}=t_j,\qquad j=1,\dots,n.
    \end{equation}
\end{definition}    
    
\begin{lemma}\label{lemma:refinement}
    Let $n,N\in\mathbb{N}$ with $n\leq N$, $\bm{t}_n\in I^n$, and $\bm{\tau}_N\in I^N$ such that $\bm{\tau}_N$ is a \textit{refinement} of $\bm{t}_n$, then
    \begin{equation}
        \|Q_{\bm{t}_n}\|_1\leq \|Q_{\bm{\tau}_N}\|_1.
    \end{equation}
\end{lemma}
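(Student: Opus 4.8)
The plan is to turn the passage from the fine grid $\bm{\tau}_N$ to the coarse grid $\bm{t}_n$ into a marginalization and then control it with the triangle inequality. Bi-consistency~\ref{prop:bi-consistency} is exactly the statement that summing $Q_{\bm{\tau}_N}$ over the pair of arguments attached to one of its time points reproduces the bi-probability with that time point deleted. Since $\bm{\tau}_N$ refines $\bm{t}_n$, the $N-n$ positions of $\bm{\tau}_N$ lying outside the image of $\iota$ are precisely the ones to be eliminated, and once they are removed the surviving entries sit at the positions $\iota(1)<\dots<\iota(n)$ where $\tau_{\iota(j)}=t_j$.

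First I would reduce to the case $N=n+1$. Any refinement can be reached from $\bm{t}_n$ by inserting the extra times one at a time in increasing order, producing a chain $\bm{t}_n=\bm{t}^{(0)}\subset\bm{t}^{(1)}\subset\dots\subset\bm{t}^{(N-n)}=\bm{\tau}_N$ in which each consecutive tuple is a single-point refinement of its predecessor and is itself a legitimate ordered tuple of times in $I$; by transitivity it then suffices to establish the inequality across one such step. For a single inserted time, bi-consistency~\ref{prop:bi-consistency} gives
\begin{equation*}
    Q_{\bm{t}_n}(\bm{f}_n^+,\bm{f}_n^-)=\sum_{g^+,g^-\in\Omega} Q_{\bm{\tau}_{n+1}}(\dots;g^+,g^-;\dots),
\end{equation*}
where $g^\pm$ occupy the inserted position and all other entries are fixed to the corresponding $f_j^\pm$.

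Then I would simply apply the triangle inequality and sum over the remaining arguments:
\begin{align*}
    \|Q_{\bm{t}_n}\|_1&=\sum_{\bm{f}_n^+,\bm{f}_n^-}\Big|\sum_{g^+,g^-}Q_{\bm{\tau}_{n+1}}\Big|\\
    &\leq\sum_{\bm{f}_n^+,\bm{f}_n^-}\sum_{g^+,g^-}\big|Q_{\bm{\tau}_{n+1}}\big|
    =\|Q_{\bm{\tau}_{n+1}}\|_1,
\end{align*}
the last equality being the reindexing that merges the fixed arguments and the summed argument into the full sum over $\Omega^{n+1}\times\Omega^{n+1}$ defining $\|Q_{\bm{\tau}_{n+1}}\|_1$. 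Iterating along the chain yields $\|Q_{\bm{t}_n}\|_1\leq\|Q_{\bm{\tau}_N}\|_1$.

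I do not expect a genuine obstacle here: the only nontrivial input is bi-consistency, and the remaining manipulations are elementary. The one point deserving care is the bookkeeping---checking that the insertions can indeed be carried out one at a time while preserving the ordering, and that the final reindexing covers every argument of $Q_{\bm{\tau}_N}$ exactly once---but this is purely combinatorial. Note, in particular, that the requirement $\iota(n)=N$ in the definition of refinement plays no role in the estimate, since bi-consistency permits the elimination of \emph{any} time point, including the largest.
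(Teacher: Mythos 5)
Your proof is correct and takes essentially the same route as the paper's: bi-consistency is used to write the coarse-grid bi-probability as a marginal of the fine-grid one, followed by the triangle inequality and a reindexing of the sum over $\Omega^N\times\Omega^N$. The only difference is organizational---the paper sums out all the extra times at once and applies the triangle inequality a single time, whereas you insert one time per step and iterate---and your closing remark that the requirement $\iota(n)=N$ plays no role in the estimate is also accurate, since bi-consistency~\ref{prop:bi-consistency} holds for every position $j$, including the largest time.
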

\begin{proof}
    Denote, for all $j=1,\dots,n$, $\phi^\pm_{\iota(j)}\equiv f^\pm_j$, where $\tau_{\iota(j)}=t_j$ as per Definition~\ref{def:refinement}. Then, using bi-consistency~\ref{prop:bi-consistency} we write
\begin{align}
\nonumber
    Q_{\bm{t}_n}(\bm{f}^+_n,\bm{f}^-_n)&=Q_{\tau_{\iota(n)},\dots,\tau_{\iota(1)}}(\bm{f}_n^+,\bm{f}_n^-)
    = Q_{\tau_{\iota(n)},\ldots,\tau_{\iota(1)}}(\phi^+_{\iota(n)},\phi^-_{\iota(n)};\ldots;\phi^+_{\iota(1)},\phi^-_{\iota(1)})\\
    &=\sum_{\phi^\pm_j:j\notin\iota(\{1,\dots,n\})}Q_{\bm{\tau}_N}(\bm{\phi}_N^+,\bm{\phi}_N^-),
\end{align}
which leads to the claimed property,
\begin{align}
\nonumber
    \|Q_{\bm{t}_n}\|_1 &=\sum_{\bm{f}^\pm_n}|Q_{\bm{t}_n}(\bm{f}_n^+,\bm{f}_n^-)|
    =\sum_{\phi_j^\pm:j\in\iota(\{1,\ldots,n\})}\bigg|\sum_{\phi^\pm_j:j\notin\iota(\{1,\ldots,n\})}
        Q_{\bm{\tau}_N}(\bm{\phi}_N^+,\bm{\phi}_N^-)\bigg|\\
\nonumber
    &\leq\sum_{\phi_j^\pm:j\in\iota(\{1,\ldots,n\})}\sum_{\phi^\pm_j:j\notin\iota(\{1,\ldots,n\})}
        |Q_{\bm{\tau}_N}(\bm{\phi}_N^+,\bm{\phi}_N^-)|
        =\sum_{\bm{\phi}_N^\pm}|Q_{\bm{\tau}_N}(\bm{\phi}_N^+,\bm{\phi}_N^-)|
        =\|Q_{\bm{\tau}_N}\|_1.
\end{align}
\end{proof}

\begin{proposition}\label{prop:quantum_uniform}
    Let $\mathbb{Q}^F_{[0,T]}$ be the family of quantum bi-probabilities associated with the time interval $I = [0,T]$ and the observable $F$ in the $d$-dimensional Hilbert space $\mathcal{H}$. Then
    \begin{equation}\label{eq:uniform_bound}
        \sup\left\{\|Q_{\bm{t}_n}\|_1\,:\, Q_{\bm{t}_n}\in\mathbb{Q}^F_{[0,T]}\right\}\leq d^2\,
                {\exp}\Big[2(d-1) \int_0^T\!\!\|\hat H(s)\|_\mathrm{op}\,\mathrm{d}s\Big]
            <\infty,
    \end{equation}
    where $\hat H(t)$ is the Hamiltonian of the system, and $\|\cdot\|_\mathrm{op}$ the operator norm. That is, the family is uniformly bounded in the sense of assumption~\ref{assump:extendable_family:uni_bound}.
\end{proposition}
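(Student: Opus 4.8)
The plan is to bound $\|Q_{\bm t_n}\|_1$ directly for an arbitrary ordered tuple $\bm t_n=(t_n,\dots,t_1)\subset[0,T]$, extracting an estimate that depends only on the total weight $\int_0^T\|\hat H(s)\|_\mathrm{op}\,\mathrm d s$ and is therefore independent of $n$ and of the mesh. First I would repackage the bi-probability into a Liouville-space object built inside-out: setting $\hat\Sigma_0:=\hat\rho$ and
\[
  \hat\Sigma_k:=\hat U_{t_k,0}\Big(\prod_{j=k}^1\hat P_{t_j}(f_j^+)\Big)\hat\rho\Big(\prod_{j=1}^k\hat P_{t_j}(f_j^-)\Big)\hat U_{0,t_k},
\]
one checks, using $\hat P_t(f)=\hat U_{0,t}\hat P(f)\hat U_{t,0}$ together with $\hat U_{t_k,0}\hat U_{0,t_{k-1}}=\hat U_{t_k,t_{k-1}}$, that these satisfy the one-step recursion
\[
  \hat\Sigma_k=\hat P(f_k^+)\,\hat U_{t_k,t_{k-1}}\,\hat\Sigma_{k-1}\,\hat U_{t_{k-1},t_k}\,\hat P(f_k^-).
\]
By cyclicity of the trace $Q_{\bm t_n}(\bm f_n^+,\bm f_n^-)=\operatorname{tr}[\hat\Sigma_n]$, so $|Q_{\bm t_n}(\bm f_n^+,\bm f_n^-)|\le\|\hat\Sigma_n\|_\mathrm{tr}$ with $\|\cdot\|_\mathrm{tr}$ the trace norm, and it suffices to bound $S_n:=\sum_{\bm f_n^+,\bm f_n^-}\|\hat\Sigma_n\|_\mathrm{tr}\ge\|Q_{\bm t_n}\|_1$.

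The core of the argument is to track how $S_n$ grows one measurement at a time, organized by measurement blocks. Each $\hat\Sigma_{k-1}$ with fixed final outcomes $f^+_{k-1}=a'$, $f^-_{k-1}=b'$ is supported in the block $(a',b')$, i.e.\ $\hat\Sigma_{k-1}=\hat P(a')\,\hat\Sigma_{k-1}\,\hat P(b')$. Inserting these projectors and applying $\|\hat A\hat X\hat B\|_\mathrm{tr}\le\|\hat A\|_\mathrm{op}\|\hat X\|_\mathrm{tr}\|\hat B\|_\mathrm{op}$ to the recursion, the left and right factors produce the transition amplitudes $w_k(a,a'):=\|\hat P(a)\,\hat U_{t_k,t_{k-1}}\,\hat P(a')\|_\mathrm{op}$ (the $-$-side one after taking an adjoint). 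Two elementary estimates drive everything: on the diagonal $w_k(a,a)\le\|\hat U_{t_k,t_{k-1}}\|_\mathrm{op}=1$, while off the diagonal, orthogonality $\hat P(a)\hat P(a')=0$ lets me write $\hat P(a)\hat U_{t_k,t_{k-1}}\hat P(a')=\hat P(a)(\hat U_{t_k,t_{k-1}}-\hat 1)\hat P(a')$, so that $w_k(a,a')\le\|\hat U_{t_k,t_{k-1}}-\hat 1\|_\mathrm{op}\le h_k$, where $h_k:=\int_{t_{k-1}}^{t_k}\|\hat H(s)\|_\mathrm{op}\,\mathrm d s$ (from $\hat U_{t_k,t_{k-1}}-\hat 1=-\i\int_{t_{k-1}}^{t_k}\hat H(s)\hat U_{s,t_{k-1}}\,\mathrm d s$). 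Writing $\sigma_k(a,b)$ for the total trace-norm mass carried by block $(a,b)$ after $k$ steps, the recursion yields $\sigma_k(a,b)\le\sum_{a',b'}w_k(a,a')w_k(b,b')\,\sigma_{k-1}(a',b')$; summing over $(a,b)$ and using the column-sum bound $\sum_a w_k(a,a')\le 1+(m-1)h_k$ with $m:=|\Omega|\le d$, I get the clean one-step inequality $S_k\le\big(1+(m-1)h_k\big)^2 S_{k-1}$.

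It then remains to iterate and handle the first step. For $k=1$ there is no block structure below, so I bound crudely: each of the $m^2$ blocks of $\hat U_{t_1,0}\hat\rho\,\hat U_{0,t_1}$ has trace norm $\le\|\hat\rho\|_\mathrm{tr}=1$, whence $S_1\le m^2\le d^2$. Telescoping the remaining factors and using $1+x\le\e^{x}$ together with $\sum_{k\ge2}h_k=\int_{t_1}^{t_n}\|\hat H(s)\|_\mathrm{op}\,\mathrm d s\le\int_0^T\|\hat H(s)\|_\mathrm{op}\,\mathrm d s$ gives
\[
  \|Q_{\bm t_n}\|_1\le S_n\le d^2\prod_{k\ge2}\big(1+(d-1)h_k\big)^2\le d^2\,\exp\Big[2(d-1)\int_0^T\|\hat H(s)\|_\mathrm{op}\,\mathrm d s\Big],
\]
which is exactly \eqref{eq:uniform_bound}; since $\hat H$ is continuous on the compact interval $[0,T]$, the map $s\mapsto\|\hat H(s)\|_\mathrm{op}$ is bounded and the integral is finite. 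As the bound is uniform in $\bm t_n$, taking the supremum completes the proof.

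The main obstacle — and the point where the physics enters — is precisely the per-step factor. The naive estimate, where summing independently over $f_k^+$ and $f_k^-$ costs a factor $m^2$ at every step, reproduces the useless $|\Omega|^n$ growth of \eqref{eq:nonuniform_bound}. The improvement to $(1+(m-1)h_k)^2$ hinges on the observation that transitions between \emph{distinct} measurement subspaces are suppressed by $h_k$, a combination of projector orthogonality and the near-identity of $\hat U_{t_k,t_{k-1}}$ over a short interval, so that only the diagonal survival channel contributes at order one. Crucially, there is no free lunch from refining the mesh, since $\sum_k h_k$ equals the fixed total $\int_0^T\|\hat H\|_\mathrm{op}$ however finely $[0,T]$ is partitioned; this is what renders the product, and hence the bound, mesh-independent. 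I note that one could equivalently invoke the refinement Lemma~\ref{lemma:refinement} to reduce the supremum to the fine-mesh limit, in which each $h_k\to0$ and the same estimate reappears as a Grönwall inequality $\dot g\le 2(d-1)\|\hat H(t)\|_\mathrm{op}\,g$; the direct discrete route above, however, already delivers the uniform bound without passing to a limit.
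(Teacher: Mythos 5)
Your proof is correct, and it reaches the bound \eqref{eq:uniform_bound} by a genuinely different route than the paper. The paper's proof (i) reduces to rank-one projectors via a coarse-graining argument, (ii) constructs a sequence of nearly uniform meshes $\bm{\tau}_N$ ``snapped'' onto $\bm{t}_n$ and uses the refinement Lemma~\ref{lemma:refinement} to dominate $\|Q_{\bm{t}_n}\|_1$ by $\lim_{N\to\infty}\|Q_{\bm{\tau}_N}\|_1$, and (iii) bounds the off-diagonal amplitudes on each short interval by Taylor expansion, $\sqrt{p^{\text{leave}}_{\tau_{j+1},\tau_j}(f)}\le v(\tau_j)\Delta\tau_j+\tfrac{1}{2}M\Delta\tau_j^2$, so that the exponential emerges only in the limit $N\to\infty$, in which the quadratic remainders disappear. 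You dispense with all three steps: your estimate $\|\hat P(a)\hat U_{t_k,t_{k-1}}\hat P(a')\|_{\mathrm{op}}=\|\hat P(a)(\hat U_{t_k,t_{k-1}}-\hat 1)\hat P(a')\|_{\mathrm{op}}\le\int_{t_{k-1}}^{t_k}\|\hat H(s)\|_{\mathrm{op}}\,\mathrm{d}s=h_k$ is exact on intervals of arbitrary length (orthogonality of the projectors plus the Duhamel representation of $\hat U_{t_k,t_{k-1}}-\hat 1$), so the telescoping inequality $S_k\le\big(1+(d-1)h_k\big)^2S_{k-1}$, seeded by the crude $S_1\le d^2$, closes the argument at the discrete level with no mesh refinement, no limiting procedure, and no remainder bookkeeping; mesh independence is automatic because $\sum_{k\ge 2}h_k\le\int_0^T\|\hat H(s)\|_{\mathrm{op}}\,\mathrm{d}s$ for every partition. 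Your route also buys generality: the trace-norm/operator-norm (H\"older) bookkeeping through the recursion for $\hat\Sigma_k$ applies verbatim to degenerate PVMs, making the paper's preliminary reduction to non-degenerate observables unnecessary. What the paper's longer route makes explicit is the physical mechanism (stay versus leave probabilities over infinitesimal time steps) and the refinement Lemma~\ref{lemma:refinement} as a statement of independent interest; as a proof of Proposition~\ref{prop:quantum_uniform}, however, your argument is self-contained, shorter, and yields exactly the same constant $d^2\,\e^{2(d-1)\int_0^T\|\hat H(s)\|_{\mathrm{op}}\mathrm{d}s}$.
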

\begin{proof}
To show that the whole family is bounded, we will proceed by first finding a uniform bound for its individual members. Without loss of generality, we will hereafter assume that the PVM $\{\hat P(f)\}_{f\in\Omega}$ representing the observable $F$ is composed entirely of rank-$1$ projectors, $\hat{P}_t(f)=\hat U_{0,t}|f\rangle\langle f|\hat U_{t,0}$, where $\{\ket{f}\}_{f\in\Omega}$ is a complete orthonormal basis of $\mathcal{H}$ and $|\Omega| = d$. This can be done without loss of generality. Indeed, if $\mathbb{Q}_I^F$ associated with such a non-degenerate observable is bounded, then the family associated with any degenerate observable $G$ represented by a PVM $\{\hat P(g)\}_{g\in\Omega'}$ such that $\hat P(g) = \sum_{f\in\omega(g)}|f\rangle\langle f|$,
\begin{align}
    \mathbb{Q}_I^{G} &= \Big\{ Q^G_{\bm{t}_n}(\bm{g}_n^+,\bm{g}_n^-) = \sum_{f_n^\pm\in\omega(g_n^\pm)}\cdots\sum_{f_1^\pm\in\omega(g_1^\pm)}Q_{\bm{t}_n}(\bm{f}_n^+,\bm{f}_n^-)\ :\  n\in\mathbb{N},\,t_1<\ldots<t_n,t_1,\ldots,t_n\in I\Big\},
\end{align}
will also be uniformly bounded,
\begin{align}
\nonumber
    \sup_{Q_{\bm{t}_n}^G\in\mathbb{Q}_I^G}\|Q_{\bm{t}_n}^G\|_1 &= \sup_{Q_{\bm{t}_n}^G\in\mathbb{Q}_I^G}
    \sum_{\bm{g}_n^\pm}\Big|\sum_{f_n^\pm\in\omega(g_n^\pm)}\cdots\sum_{f_1^\pm\in\omega(g_1^\pm)}Q_{\bm{t}_n}(\bm{f}_n^+,\bm{f}_n^-)\Big|\\
\nonumber
    &\leq \sup_{Q_{\bm{t}_n}\in\mathbb{Q}_I^F}
        \sum_{\bm{g}_n^\pm}\sum_{f_n^\pm\in\omega(g_n^\pm)}\cdots\sum_{f_1^\pm\in\omega(g^\pm_1)}|Q_{\bm{t}_n}(\bm{f}_n^+,\bm{f}_n^-)|\\
    &= \sup_{Q_{\bm{t}_n}\in\mathbb{Q}_I^F}\sum_{\bm{f}_n^\pm}|Q_{\bm{t}_n}(\bm{f}_n^+,\bm{f}_n^-)|
    = \sup_{Q_{\bm{t}_n}\in\mathbb{Q}_I^F}\|Q_{\bm{t}_n}\|_1.
\end{align}
Therefore, in what follows we will be dealing with quantum bi-probabilities for which Eq.~\eqref{eq:quantum_bi-prob} reads
\begin{equation}\label{eq:quantum_bi-prob_simple}
    Q_{\bm{t}_n}(\bm{f}^+_n,\bm{f}^-_n)=\delta_{f_n^+f_n^-}
    \sum_{f_0}\langle f_0^+|\hat\rho|f_0^-\rangle
    \prod_{j=0}^{n-1} 
        \langle f^+_{j+1}|\hat U_{t_{j+1},t_{j}}|f_{j}^+\rangle\,
        \langle f^-_{j+1}|\hat U_{t_{j+1},t_{j}}|f_{j}^-\rangle^*.
\end{equation}
In the next step, starting from a given time partition $\bm{t}_n = (t_n,\ldots,t_1)$ designating the chosen member bi-probability $Q_{\bm{t}_n}$, we define a sequence of its refinements $(\bm{\tau}_N)_{N=N_0}^\infty$. The initial $N_0>n$ is chosen in such a way that, for every $j=1,\ldots,n-1$, there exists $k_j\in\mathbb{N}$ for which
\begin{align}
    t_{j-1} < \frac{(k_j-1)}{N_0}t_n \leq t_j \leq \frac{k_j}{N_0}t_n < t_{j+1}.
\end{align}
This $N_0$ always exists because the rational numbers form a dense set of the real axis. Then, define the time partitioning scheme $\bm{\tau}_N$ for $N\geq N_0$ with the following procedure: (1) Assign a uniform partition,
\begin{align}
    \tau_k &= t_n k /N,\quad k=1,\ldots,N.
\end{align}
(2) Identify the times that are closest to the original partitioning,
\begin{align}
    K(N) = \{ k_j\ :\ t_n (k_j-1) /N \leq t_j \leq t_n k_j /N \}.
\end{align}
(3) ``Snap'' the identified times to the originals they approximate,
\begin{align}
    \tau_{k_j} = t_n k_j/N\, \to\, \tau_{k_j} = t_j,\quad k_j\in K(N),
\end{align}
which makes the resultant $\bm{\tau}_N$ a refinement of $\bm{t}_n$. The refinements defined this way are mostly uniform: $\Delta\tau_k = \tau_{k+1} -\tau_{k} = t_n/N$ for all $k\notin K(N)$, except for all $k\in K(N)$ such that
\begin{align}
    \Delta\tau_{k} = \frac{t_n}{N}+h_{k}\leq \frac{2t_n}{N},\quad \Delta\tau_{k-1} = \frac{t_n}{N} - h_{k} \leq \frac{t_n}{N},
    \quad 0 \leq h_k \leq \frac{t_n}{N}.
\end{align}
We then apply the refinements in a sequence according to Lemma~\ref{lemma:refinement},
\begin{align}
    \|Q_{\bm{t}_n}\|_1 \leq \| Q_{\bm{\tau}_{N_0}}\|_1 \leq \|Q_{\bm{\tau}_{2N_0}}\|_1 \leq \cdots
    \leq \|Q_{\bm{\tau}_{2^k N_0}}\|_1\leq \cdots \leq \lim_{N\to\infty}\|Q_{\bm{\tau}_N}\|_1.
\end{align}
In the process we have determined that $Q_{\bm{t}_n}$ is bounded by the limit of infinitely refined time mesh. Since for any pair of sequences $(a_N)_{N=N_0}^\infty$ and $(b_N)_{N=N_0}^\infty$ such that $a_N \leq b_N$ for all $N\geq N_0$ one has $\lim_{N\to\infty}a_N \leq \lim_{N\to\infty}b_N$, we are now allowed to estimate $\|Q_{\bm{\tau}_N}\|_1$ before taking the limit. To this end, we define the transition probabilities,
\begin{align}
    p^\text{stay}_{t,t'}(f) &:= |\langle f|\hat U_{t,t'}|f\rangle|^2,
    \qquad p^\text{leave}_{t,t'}(f) := 1 - p^\text{stay}_{t,t'}(f) =\sum_{f'\neq f}|\langle f'|\hat U_{t,t'}|f\rangle|^2.
\end{align}
The two functions satisfy $0\leq p^{\text{stay/leave}}_{t,t'}(f) \leq 1$, and their square roots can be estimated from above (using the Taylor theorem in the case of the ``leave'' probability),
\begin{subequations}
\begin{align}
    \delta_{f_{j+1}^\pm,f_{j}^\pm}
    |\langle f_{j+1}^{\pm}|\hat U_{\tau_{j+1},\tau_{j}}|f_{j}^\pm\rangle| &= \delta_{f_{j+1}^\pm,f_{j}^\pm}\sqrt{p_{\tau_{j+1},\tau_{j}}^\text{stay}(f_j^\pm)} \leq \delta_{f_{j+1}^\pm,f_{j}^\pm};\\
\nonumber
    (1\!-\!\delta_{f_{j+1}^\pm,f_{j}^\pm})|\langle f_{j+1}^\pm|\hat U_{\tau_{j+1},\tau_{j}}|f_{j}^\pm\rangle| 
    &\leq (1\!-\!\delta_{f_{j+1}^\pm,f_{j}^\pm})\sqrt{p^\text{leave}_{\tau_{j+1},\tau_{j}}(f_j^\pm)}\\
\nonumber
    &= (1\!-\!\delta_{f_{j+1}^\pm,f_{j}^\pm})\Big[
        \Delta\tau_j \sqrt{\langle f_j^\pm|\hat H(\tau_{j})^2|f_j^\pm\rangle -\langle f_j^\pm|\hat H(\tau_{j})|f_j^\pm\rangle^2}\\
\nonumber
    &\phantom{= (1\!-\!\delta_{f_j^\pm,f_{j-1}^\pm})}
        + R_{\tau_{j+1}}^{(1)}(f_j^\pm)\Big]\\
\nonumber
    &\leq (1\!-\!\delta_{f_{j+1}^\pm,f_{j}^\pm})\Big(\Delta\tau_j\|\hat H(\tau_{j})|f_j^\pm\rangle\|+\big|R_{\tau_{j+1}}^{(1)}(f_j^\pm)\big|\Big)\\
\nonumber
    &\leq (1\!-\!\delta_{f_{j+1}^\pm,f_{j}^\pm})\left(v(\tau_{j})\Delta\tau_j + \frac{1}{2}M\Delta\tau_j^2\right)\\
    &\leq (1\!-\!\delta_{f_{j+1}^\pm,f_{j}^\pm})\left(v(\tau_{j}) \Delta\tau_j + \frac{2M t_n^2}{N^2}\right),
\end{align}
\end{subequations}
where $0 < M <\infty$, $v(t) = \sup_{|f\rangle}\|\hat H(t)|f\rangle\| = \|\hat H(t)\|_\text{op}$, and $\Delta\tau_j^2 \leq 4 t_n^2/N^2$ which covers the possibility $j\in K(N)$ and $\Delta\tau_j = t_n/N+h_{j}$. These estimates lead to
\begin{align}
\nonumber
    \|Q_{\bm{\tau}_N}\|_1 &\leq \sum_{f_0^\pm}|\langle f_0^+|\hat\rho|f_0^-\rangle|
        \prod_{j=0}^{N-1}\sum_{f_{j+1}^+}\left[\delta_{f_{j+1}^+,f_{j}^+}+(1-\delta_{f_{j+1}^+,f_{j}^+})\left(v(\tau_{j})\Delta\tau_j + \frac{2Mt_n^2}{N^2}\right)\right]\\
\nonumber
    &\phantom{\sum_{f_0^\pm}|\langle f_0^+|\hat\rho|f_0^-\rangle|}\times
            \prod_{k=0}^{N-1}\sum_{f_{k+1}^-}\left[\delta_{f_{k+1}^-,f_{k}^-}+(1-\delta_{f_{k+1}^-,f_{k}^-})\left(v(\tau_{k})\Delta\tau_k +\frac{2Mt_n^2}{N^2}\right)\right]\\
\nonumber
    &= \left(\sum_{f_0^\pm}|\langle f_0^+|\hat\rho|f_0^-\rangle|\right)
        \prod_{j=0}^{N-1}\left[1+(d-1)\left(v(\tau_{j})\Delta\tau_j + \frac{2Mt_n^2}{N^2}\right)\right]^2\\
\nonumber
    &\leq  d^2 \prod_{j=0}^{N-1}\left[1+(d-1)\left(v(\tau_{j})\Delta\tau_j + \frac{2Mt_n^2}{N^2}\right)\right]^2\\
    &\leq d^2 \left[1+(d-1)\frac{2Mt_n^2}{N^2}\right]^{2N}
        \prod_{j=0}^{N-1}\left[1+(d-1)v(\tau_{j})\Delta\tau_j\right]^2.
\end{align}
In this form, the limit $N\to\infty$, which simultaneously implies $\Delta\tau_j \to 0$, results in
\begin{align}
    \|Q_{\bm{t}_n}\|_1 & \leq \lim_{N\to\infty}\|Q_{\bm{\tau}_N}\|_1
    \leq d^2
        {\exp}\Bigg[2(d-1)\lim_{\substack{N\to\infty;\\\Delta\tau_j\to 0}}\sum_{j=0}^{N-1}v(\tau_{j})\Delta\tau_j\Bigg]
        = d^2\,\e^{2(d-1)\int_0^{t_n}v(s)\mathrm{d}s}.
\end{align}
Since the above bound for individual bi-probabilities $Q_{t_n,\ldots,t_1}$ depends only on the latest time $t_n$, the bound for the whole family is identified without further difficulty,
\begin{align}
   \sup\left\{\|Q_{\bm{t}_n}\|_1\,:\, Q_{\bm{t}_n}\in\mathbb{Q}_{[0,T]}^F\right\} &\leq d^2\, 
        {\exp}\Big[2(d-1)\int_0^T\!\!\|\hat H(s)\|_\text{op}\,\mathrm{d}s\Big]
    < \infty,
\end{align}
so that the claim is proven.
\end{proof}
We have therefore shown that the family of quantum bi-probabilities introduced in Eq.~\eqref{eq:quantum_bi-prob} satisfies both assumptions~\ref{assump:extendable_family:bi-consistency} (by Proposition~\ref{prop:properties_quantum_bi-probs}) and~\ref{assump:extendable_family:uni_bound} (by Proposition~\ref{prop:quantum_uniform}) of the abstract extension result proven in Section~\ref{sec:gen_extension_thrm}, i.e.~Theorem~\ref{thm:extension_theorem}; this concludes the proof of Theorem~\ref{thm:quantum_extension}.

\section{Discussion}\label{sec:discussion}

\subsection{The extension theorem and the choice of time interval}\label{sec:the_matter_of_time}

We have shown that, in quantum mechanics, the extension theorem works for families of bi-probabilities confined to a specific time interval $I=[0,T]$. Note, however, that the members of the family associated with one interval also belong, in general, to other families associated with different intervals; in particular, if a given bi-probability $Q_{\bm{t}_n}$ is a member of $\mathbb{Q}_{[0,T]}^F$, then so is $Q_{\bm{t}_n}\in\mathbb{Q}_{[0,T']}^F$ when $T<T'$. Therefore, this $Q_{\bm{t}_n}$ is a \textit{restriction} of the master bi-measure $\mathcal{Q}_{[0,T]}$ that extends $\mathbb{Q}_{[0,T]}^F$, and simultaneously, of $\mathcal{Q}_{[0,T']}$ that extends $\mathbb{Q}_{[0,T']}^F$,
\begin{align}
\nonumber
    Q_{\bm{t}_n}(\bm{f}_n^+,\bm{f}_n^-) &=\iint \Big(\prod_{j=1}^n
            \delta_{f^+(t_j),f_j^+}\,\delta_{f^-(t_j),f_j^-}
        \Big)\mathcal{Q}_{[0,T]}[f^+,f^-][\mathcal{D}f^+][\mathcal{D}f^-]\\
        &= \iint \Big(\prod_{j=1}^n
            \delta_{f^+(t_j),f_j^+}\,\delta_{f^-(t_j),f_j^-}
        \Big)\mathcal{Q}_{[0,T']}[f^+,f^-][\mathcal{D}f^+][\mathcal{D}f^-].
\end{align}
This begs the question: what is the relation between the two master bi-measures $\mathcal{Q}_{[0,T]}$ and $\mathcal{Q}_{[0,T']}$?

We find the answer by considering the relations between the corresponding families of bi-probabilities. Observe that the bi-consistent family associated with a shorter interval is a \textit{subset} of the family corresponding to the longer interval, $\mathbb{Q}^F_{[0,T]}\subseteq \mathbb{Q}^F_{[0,T']}$. Moreover, the union of both families is a bi-consistent family itself and it is simply equal to the larger family,
\begin{align}
    \nonumber
    \mathbb{Q}_{[0,T]}^F\cup\mathbb{Q}_{[0,T']}^F &= \{ Q_{\bm{\tau}_k}(\bm{f}_k^+,\bm{f}^-_k)\ :\ 
        \forall_{k\geqslant 1}\ T\geqslant \tau_k \geqslant \cdots \geqslant \tau_1 \geqslant 0\,\lor\, T' \geqslant \tau_k \geqslant \cdots\geqslant \tau_1 \geqslant 0\}\\
    &=\{Q_{\boldsymbol{\tau}_k}(\bm{f}_k^+,\bm{f}^-_k)\ :\ 
        \forall_{k\geqslant 1}\ T'\geqslant \tau_k \geqslant \cdots \geqslant \tau_1 \geqslant 0\} = \mathbb{Q}^F_{[0,T']},
\end{align}
implying that the master bi-measure $\mathcal{Q}_{[0,T']}$ is an extension to $\mathbb{Q}_{[0,T']}^F$ as well as $\mathbb{Q}_{[0,T]}^F$. Therefore, given that the extension of a family is unique, it follows that $\mathcal{Q}_{[0,T]}$ has to be the \textit{restriction} of $\mathcal{Q}_{[0,T']}$.

This reasoning can be iterated any number of times; given a sequence of increasing time intervals, $I_1 \subset I_2 \subset \cdots \subset I_N$, with $I_n = [0,T_n]$ and $0<T_1<\cdots<T_N$, we have an increasing sequence of bi-consistent families, $\mathbb{Q}^F_{I_1}\subseteq \cdots \subseteq \mathbb{Q}^F_{I_N}$, such that $\bigcup_{n=1}^N\mathbb{Q}^F_{I_n} = \mathbb{Q}^F_{I_N}$. As long as the largest time $T_N$ is a finite number, the bi-consistent families are uniformly bounded and they extend to the same master bi-measure $\mathcal{Q}_{I_N}$ (modulo an appropriate restriction to the time interval corresponding to a given family). Therefore, since $N$ can be arbitrary large---and thus, so can $T_N$---we can state that a family $\mathbb{Q}^F = \lim_{N\to\infty}\bigcup_{n=1}^N \mathbb{Q}^F_{I_n}$ extends to the master bi-measure $\mathcal{Q}$ such that any $\mathcal{Q}_{[0,T]}$ with a finite $T$ is its restriction. 

Note that this result cannot be interpreted to mean that $\mathbb{Q}^F = \mathbb{Q}^F_{[0,\infty)}$ or $\mathcal{Q} = \mathcal{Q}_{[0,\infty)}$. Indeed, a bi-consistent family associated with an interval $I$ is extendable only if it is uniformly bounded, which is not the case when $I$ is not compact, e.g., for $I=[0,\infty)$ the bound $d^2{\exp}[2(d-1)\int_0^\infty\|\hat H(s)\|_\mathrm{op}\mathrm{d}s]$ generally diverges. 
In fact, the inability to bound, and thus to extend, quantum bi-probabilities on $[0,\infty)$ in the general case could be regarded as a desirable feature rather than some limitation of the theory. If the master bi-probability $\mathcal{Q}_{[0,\infty)}$ would exist for any quantum system, then its single-time restriction,
\begin{align}
    Q_\infty(f,f') &= \delta_{f,f'}P_\infty(f) = \delta_{f,f'}\lim_{T\to\infty}\operatorname{tr}\big[\hat P_T(f)\hat\rho\big],
\end{align}
would also be sensible. However, a simple physical analysis shows that such a thing cannot be the case. We can illustrate this point with a basic example. Consider a $2$-dimensional system (qubit) with $\hat F = \hat\sigma_z$, $\hat H = \omega\hat\sigma_x/2$ and $\hat\rho = |{+1}\rangle\langle{+1}|$: then, for as long as time $T$ is a finite number (no matter how large or small it is), one gets the probability that has a well-defined physical and mathematical meaning,
\begin{align}
    Q_T(\pm 1,\pm 1) &= \frac{1\pm\cos\omega T}{2};
\end{align}
however, of course, this ceases to be the case when $T\to\infty$, as $\lim_{T\to\infty}Q_T$ does not exist. To make an infinite time a viable consideration, one would have to postulate some additional physical constraints on the system.

To sum up, as far as the extension theorem goes, a subtle but fundamental distinction between ``arbitrarily large time'' and ``infinite time'' holds. On the other hand, this implies that every property of quantum system which is formulated in terms of those of the underlying master measure only makes sense on finite time windows. For example, if one decides to deem a quantum system ``classical'' depending on whether the corresponding master measure $\mathcal{Q}_I[f^+,f^-]$ reduces to an actual, positive-valued measure on \textit{single} trajectories (i.e., $\mathcal{Q}_I[f,f']=\delta[f-f']\,\mathcal{P}_I[f]$), then one can only speak about classicality in finite times. Broadly speaking, this is reminiscent of the fact that quantum non-Markovianity, in its diverse declinations, is undecidable unless a finite time interval is selected \textit{a priori}.~\cite{burgarth2021hidden,burgarth2021quantum}

\subsection{Quantum bi-probabilities associated with multiple non-commuting observables}\label{subsec:multiple_observables}

Up to this point we have considered quantum bi-probabilities associated with a single observable. As stated previously, $\mathbb{Q}_I^F$ suffices for the purpose of employing the bi-probability parameterization in contexts involving only $F$ (and observables that commute with it), including the sequential measurements of $F$. A natural question is whether the bi-probability formalism can be generalized to handle \textit{multiple} non-commuting observables as well, so that it could parameterize, among others, multi-observable measurements. 

Let $\{F\}$ be a collection of observables, each represented by a corresponding PVM, 
\begin{align}
    \big\{\hat P^F(f)\big\}_{f\in\Omega(F)}\quad\text{where $F\in\{F\}$ and $|\Omega(F)|\leq d$.}
\end{align} 
Then define the family associated with time interval $I$ and the collection $\{F\}$,
\begin{align}
    \mathbb{Q}_{I|\rho}^{\{F\}} &= \big\{
        Q^{\bm{F}_n}_{\bm{t}_n}(\bm{f}_n^+,\bm{f}_n^-|\rho)\ :\ n\in\mathbb{N},\,0=t_0<t_1<\cdots<t_n,\, \forall_{j=1,\ldots,n}\ t_j\in I,\,F_{j}\in\{F\}
    \big\},
\end{align}
with its members given by
\begin{align}\label{eq:multi-obs_bi-prob}
    Q_{\bm{t}_n}^{\bm{F}_n}(\bm{f}_n^+,\bm{f}_n^-|\rho) := \operatorname{tr}\Big[
        \Big(\prod_{j=n}^1 \hat P^{F_{j}}_{t_j}(f_j^+)\Big)\hat\rho\Big(\prod_{j=1}^n \hat P^{F_{j}}_{t_j}(f_j^-)\Big)
    \Big].
\end{align}
These functions can be considered as bi-probability distributions, as it is straightforward to check that they share all the properties~\ref{prop:properties_quantum_bi-probs} of the single-observable quantum bi-probabilities; crucially, this includes the bi-consistency of $\mathbb{Q}_{I|\rho}^{\{F\}}$. Moreover, the family $\mathbb{P}_{I|\rho}^{\{F\}}$ composed of the diagonal parts of $Q_{\bm{t}_n}^{\bm{F}_n}$,
\begin{align}
    P_{\bm{t}_n}^{\bm{F}_n}(\bm{f}_n|\rho) &:= Q_{\bm{t}_n}^{\bm{F}_n}(\bm{f}_n,\bm{f}_n|\rho),
\end{align}
has the properties~\ref{prop:diagonal_part}, and thus, it constitutes the standard quantum mechanical description for the sequential projective measurements of \textit{multiple} observables. Finally, it can also be shown that the bi-probabilities belonging to $\mathbb{Q}_{I|\rho}^{\{F\}}$ are sufficient for parameterizing all other contexts involving observables $\{F\}$~\cite{Szankowski_SciRep20,Szankowski_SciPostLecNotes23}. Therefore, the simple answer to the question from before is that, indeed, the bi-probability formalism is not restricted to the case of single observable.

However, even though $\mathbb{Q}_{I|\rho}^{\{F\}}$ is a bi-consistent family, the matter of its extension to a multi-observable master bi-measure is not fully resolved at this moment---the issue of uniform boundedness turns out to be problematic. The method used in the single-observable case cannot be directly applied to the multi-observable one, at least not without introducing some additional structure on top of $\mathbb{Q}^{\{F\}}_{I|\rho}$. To better explain why this is the case, we first introduce the collection of unitary basis transformations $\{\hat U_F\}$ corresponding to each observable in $\{F\}$. That is, given the eigenbasis of $F\in\{F\}$, $\{ |f_k\rangle \}_{k=1}^d$, the operators of the PVM representing this observable can be written in the following form:
\begin{align}
    \hat P_t^{F}(f) &= \sum_{k=1}^d\delta\big(\langle f_k|\hat F|f_k\rangle - f\big)\,\hat U_{0,t}|f_k\rangle\langle f_k|\hat U_{t,0}
        = \sum_{k=1}^d\delta\big(\langle k|\hat U_F^\dagger \hat F\hat U_F|k\rangle - f\big)\hat U_{0,t}\hat U_F|k\rangle\langle k|\hat U_{F}^\dagger\hat U_{t,0},
\end{align}
where the unitary operator $\hat U_F$ transforms the chosen reference basis $\{|k\rangle\}_{k=1}^d$ into $\{|f_k\rangle\}_{k=1}^d$. Then, making use of our previous methods whenever possible, we arrive at the following estimation for the norm of a multi-observable family of bi-probabilities:
\begin{align}
\nonumber
    \|Q_{\bm{t}_n}^{\bm{F}_n}\|_1 &\leq \left(\sum_{k_n,\ldots,k_0=1}^d\sqrt{\rho(k_0)}\prod_{j=1}^n 
        |\langle k_j|\hat U_{F_j}^\dagger\hat U_{t_j,t_{j-1}}\hat U_{F_{j-1}}|k_{j-1}\rangle|
    \right)^2\\
\nonumber
    &=\left(\sum_{k_n,\ldots,k_0=1}^d\sqrt{\rho(k_0)}\prod_{j=1}^n 
        |\langle k_j|\hat U_{F_j}^\dagger\hat U_{t_j,t_{j-1}}
        \Big(\hat U_{F_j}\sum_{k'=1}^d |k'\rangle\langle k'|\hat U_{F_j}^\dagger\Big)\hat U_{F_{j-1}}|k_{j-1}\rangle|
    \right)^2\\
\nonumber
    &\leq \left(\sum_{k_n,\ldots,k_0=1}^d\sum_{k_n',\ldots,k_1'=1}^d\sqrt{\rho(k_0)}\prod_{j=1}^n 
        |\langle k_j|\hat U_{F_j}^\dagger\hat U_{t_j,t_{j-1}}
        \hat U_{F_j}|k_j'\rangle|\,|\langle k_j'|\hat U_{F_j}^\dagger\hat U_{F_{j-1}}|k_{j-1}\rangle|
    \right)^2\\
\nonumber
    &\leq \left(\sum_{k_n,\ldots,k_0=1}^d\sqrt{\rho(k_0)}\prod_{j=1}^n 
        \e^{(d-1)\int_{t_{j-1}}^{t_j}\|\hat H(s)\|_\mathrm{op}\mathrm{d}s}|\langle k_j|\hat U_{F_j}^\dagger\hat U_{F_{j-1}}|k_{j-1}\rangle|
    \right)^2\\
    &\leq d^2\e^{2(d-1)\int_I \|\hat H(s)\|_\mathrm{op}\mathrm{d}s}
        \left(\sum_{k_n,\ldots,k_0=1}^d\prod_{j=1}^n|\langle k_{j}|\hat U_{F_j}^\dagger\hat U_{F_{j-1}}|k_{j-1}\rangle|\right)^2,
\end{align}
where $\hat U_{F_0} = \hat U_\rho$ such that $\hat\rho\hat U_\rho|k\rangle = \rho(k)\hat U_\rho|k\rangle$. If there were only one observable (or if all observables commuted with other), then $\hat U_{F_j}^\dagger\hat U_{F_{j-1}} = \hat 1$ for all $j$ and one would recover the single-observable uniform bound~\eqref{eq:uniform_bound}. In general, $|\langle k_j|\hat U_{F_j}^\dagger\hat U_{F_{j-1}}|k_{j-1}\rangle|$ is greater than $\delta_{k_j,k_{j-1}}$, and in extreme cases it may even reach a $k_j,k_{j-1}$-independent value $1/\sqrt{d}$, which then leads to an additional factor $d^2$. Consequently, the above estimate depends explicitly on the number of arguments $n$ of the distribution, and it is thus unfit for obtaining a \textit{uniform} bound for the family.

Of course, if one cannot demonstrate that $\mathbb{Q}_{I|\rho}^{\{F\}}$ is uniformly bounded, then one cannot apply here the extension theorem to find the corresponding master bi-measure. Nevertheless, there is still a way to show that every multi-observable bi-probability $Q_{\bm{t}_n}^{\bm{F}_n}\in\mathbb{Q}_{I|\rho}^{\{F\}}$ is a restriction of a certain form of master bi-measure (compare with Section~\ref{sec:the_matter_of_time}). To this end we introduce two new notions. First are the \textit{generic bi-probabilities},
\begin{align}
    Q_{\bm{\hat{U}}_n}(\bm{k}^+_n,\bm{k}^-_n) &:= \delta_{k_n^+,k_n^-}\delta_{k_1^+,k_1^-}\operatorname{tr}\Big[
        \Big(\prod_{j=n}^1 \hat U_j|k_j^+\rangle\langle k_j^+|\hat U_j^\dagger\Big)
        \Big(\prod_{j=1}^n \hat U_j|k_j^-\rangle\langle k_j^-|\hat U_j^\dagger\Big)
    \Big].
\end{align}
These are indeed bi-probabilities because the family of such functions, together with the family of their diagonal parts, fulfills the properties~\ref{prop:properties_quantum_bi-probs} and~\ref{prop:diagonal_part}. Also, this is the most generic form of quantum bi-probability, as any multi-observable bi-probability (including the single-observable variants) can be expressed as a linear combination of $Q_{\bm{\hat{U}}_n}$,
\begin{align}
\nonumber
    Q_{\bm{t}_n}^{\bm{F}_n}(\bm{f}_n^+,\bm{f}_n^-|\rho) &=\sum_{\bm{k}_{n+1}^\pm}\prod_{j=1}^n
        \delta\big(\langle k_{j+1}^+|\hat U_{F_j}^\dagger\hat F_{j}\hat U_{F_j}|k_{j+1}^+\rangle - f_j^+\big)
        \delta\big(\langle k_{j+1}^-|\hat U_{F_j}^\dagger\hat F_{j}\hat U_{F_j}|k_{j+1}^-\rangle - f_j^-\big)\\
\label{eq:mulit-obs_bi-prob_decomp}
    &\phantom{=}\times
        \sqrt{\rho(k_1^+)\rho(k_1^-)}\,Q_{\hat U_{0,t_n}\hat U_{F_n},\,\ldots\,,\hat U_{0,t_1}\hat U_{F_1},\,\hat U_\rho}(\bm{k}_{n+1}^+,\bm{k}_{n+1}^-),
\end{align}
where $\hat\rho\hat U_\rho|k\rangle = \rho(k)\hat U_\rho|k\rangle$. The second is a \textit{path} in the space of unitary transformations (i.e., the Lie group $\operatorname{SU}(d)$),
\begin{align}
    \hat\gamma :\  [0,1]\in\tau \mapsto \mathcal{T}\e^{-\i\int_0^\tau \hat V(s)\mathrm{d}s}\hat U_0,
\end{align}
where $\hat V(s)$ is a Hermitian operator-valued function on $[0,1]$, and $\hat U_0$ is a fixed unitary operator. We combine the two into the family of generic bi-probabilities associated with a path,
\begin{align}
    \mathbb{Q}_{\gamma} &:= \big\{ Q_{\hat{\bm{U}}_n}(\bm{k}^+,\bm{k}^-)\ :\ 
        n\in\mathbb{N},
        \,\forall_{j=1,\ldots,n}\ \tau_j\in [0,1],
        \,\tau_1<\cdots<\tau_n,
        \,\hat U_j = \hat\gamma(\tau_j)
    \big\}.
\end{align}
The family is clearly bi-consistent, and it is also uniformly bounded,
\begin{align}
    \sup\Big\{\big\|Q_{\hat{\bm{U}}_n}\big\|_1\ :\ Q_{\hat{\bm{U}}_n}\in\mathbb{Q}_\gamma\Big\}
    \leqslant d^2\e^{2(d-1)\operatorname{Length}(\gamma)},
\end{align}
where the length of path $\gamma$ is given by
\begin{align}
    \operatorname{Length}(\gamma) &:= \int_0^1\bigg\|\frac{\mathrm{d}}{\mathrm{d}\tau}\hat\gamma(\tau)\bigg\|_\mathrm{op}\mathrm{d}\tau = \int_0^1\|\hat V(\tau)\|_{\mathrm{op}}\,\mathrm{d}\tau.
\end{align}
This bound is immediately obtained by noticing that $\mathbb{Q}_\gamma$ equals $\mathbb{Q}_{[0,T]}^F$ provided that $\hat V(s) = T\hat H(T s)$, $\hat U_0 = \hat U_F$, and $\hat\rho = \hat U_F|k_0\rangle\langle k_0|\hat U_F^\dagger$. Therefore, any family of generic bi-probabilities associated with a path of finite length extends to a master bi-measure $\mathcal{Q}_\gamma$ on the space of \textit{field} pairs,
\begin{align}
    \operatorname{SU}(d)\supset\operatorname{Img}(\gamma)\ni \hat U \mapsto \big( k^+(\hat U) , k^-(\hat U)\big)\in \{1,\ldots,d\}\times\{1,\ldots,d\}.
\end{align}
Conversely, when a given generic bi-probability $Q_{\bm{\hat{U}}_n}$ is found to be a member of a family associated with a certain path $\gamma$---i.e., there exist $\tau_1<\ldots<\tau_n$ such that $\hat U_j = \hat\gamma(\tau_j)$ for all $j=1,\ldots,n$---then this bi-probability is a restriction of the master bi-measure $\mathcal{Q}_\gamma$ that extends $\mathbb{Q}_\gamma$,
\begin{align}
    Q_{\hat{\bm{U}}_n}(\bm{k}^+_n,\bm{k}^-_n) &= \iint \Big(
        \prod_{j=1}^n
        \delta_{k^+(\hat\gamma(\tau_j)),k^+_j}\,\delta_{k^-(\hat\gamma(\tau_j)),k^-_j}
        \Big)\mathcal{Q}_\gamma[k^+,k^-][\mathcal{D}k^+][\mathcal{D}k^-].
\end{align}
Using this construction we can conclude that each multi-observable bi-probability is a restriction of a master bi-measure. A given bi-probability $Q_{\bm{t}_n}^{\bm{F}_n}$ decomposes into the linear combination of generic bi-probabilities, as demonstrated in Eq.~\eqref{eq:mulit-obs_bi-prob_decomp}. For every constituent bi-probability $Q_{\bm{\hat{U}}_n}$, there exists a path $\gamma_{\bm{\hat{U}}_n}$ such that $Q_{\bm{\hat{U}}_n}\in\mathbb{Q}_{\gamma_{\bm{\hat{U}}_n}}$. Then, the paths $\gamma_{\bm{\hat{U}}_n}$ can be ``stitched'' together into a master path $\Gamma$ (or, equivalently, they can be viewed as segments of $\Gamma$), so that each $\mathbb{Q}_{\gamma_{\bm{\hat{U}}_n}}\subset\mathbb{Q}_{\Gamma}$. And so, each bi-probability $Q_{\bm{\hat{U}}_n}$ becomes a restriction of the corresponding master bi-measures $\mathcal{Q}_{\gamma_{\bm{\hat{U}}_n}}$, that in turn are all restrictions of the one master bi-measure $\mathcal{Q}_{\Gamma}$ which extends $\mathbb{Q}_{\Gamma}$. Since each constituent generic bi-probability is a restriction of $\mathcal{Q}_\Gamma$, it follows that so is $Q_{\bm{t}_n}^{\bm{F}_n}$. This reasoning is a direct generalization of the result obtained in Sec.~\ref{sec:the_matter_of_time}.

The above line of argumentation is however incomplete; there are still several yet unresolved issues and open questions that, hopefully, shall be addressed in the future. For example, one such question is whether it is possible to construct a single \textit{grandmaster} path $\Sigma$ such that all $Q_{\bm{t}_n}^{\bm{F}_n}\in\mathbb{Q}_{I|\rho}^{\{F\}}$ are restrictions of the same $\mathcal{Q}_\Sigma$. Attempting to find an answer, one could opt for a ``brute force'' solution where $\Sigma$ is simply obtained by stitching together all master paths $\Gamma$ considered previously. It is however unclear whether the corresponding $\mathbb{Q}_\Sigma$ would actually be bounded, and thus, be extendable to $\mathcal{Q}_\Sigma$---it all depends on whether the length of such a $\Sigma$ is finite. Moreover, even if this solution is correct, then one still has to grapple with the problem of uniqueness of the grandmaster bi-measure $\mathcal{Q}_\Sigma$. Indeed, note that in the path-stitching method we have employed here so liberally, the order at which the constituent paths are being connected is arbitrary. Therefore, one can stitch together a possibly infinite number of grandmaster paths that only differ by the ordering of master paths $\Gamma$. What are the relations between grandmaster bi-measures corresponding to those different orderings is yet another open question.

\subsection{Connection to the quantum comb formalism}\label{subsec:quantum_combs}
As mentioned in the introduction, there are several studies concerning generalized counterparts of stochastic processes in quantum theory~\cite{Milz2020,Chiribella2009,Lindblad1979,Accardi1982}, together with generalizations of the Kolmogorov extension theorem adapted to such different descriptions of quantum processes. For instance, Accardi et al. prove a generalized Kolmogorov theorem for quantum stochastic process in the language of operator algebras, where a probability (or a correlation function) is defined as a linear functional over sequential products of quantum instruments. Similarly, a Kolmogorov extension theorem for the so-called \textit{quantum combs}\footnote{
Here we focus our discussion on the comparison between the quantum comb and the bi-probability formalisms. The connections between the bi-probability formalism and other trajectory-based formulations can be conceptually conducted in the similar fashion.
} or \textit{quantum process matrices}, was recently shown, with a more general index set, in~\cite{Milz2020}. Those are probability functionals on a sequence, or a mixture of sequences, of completely positive and trace non-increasing maps on some state spaces. 

In principle, there are similarities and differences between the bi-probability and comb formulations, in both fundamental and technical aspects, which we shall elucidate in the following. We stress, however, that the extension theorem for combs presented in~\cite{Milz2020} and the extension theorem presented in the current paper are different. Essentially, the extension theorem for combs states that the consistent family of $n$-step combs (see below) can be derived from a {\em master comb}. Equivalently, it states that there exists a quantum stochastic process~\cite{Lindblad1979,Accardi1982} which reproduces the family of $n$-step combs (equivalently, correlation kernels~\cite{Accardi1982}). Here, the quantum process is already defined in terms of evolution operator $\hat U_{t,0}$ generated by the Hamiltonian $\hat H(t)$, that is, given $\hat U_{t,0}$ (or $\hat H(t)$) one defines a family of bi-probabilities $\mathbb{Q}_I^F$. We proved that this family can be reconstructed from a master, complex-valued bi-probability measure $\mathcal{Q}[f^+,f^-][\mathcal{D}f^+][\mathcal{D}f^-]$ on the space of trajectory  $t\mapsto (f^+(t),f^-(t))$.

Let us briefly discuss the construction of quantum combs and the corresponding extension theorem. Let $\mathcal{H}$ be a finite-dimensional Hilbert space, and $\mathcal{B}(\mathcal{H})$ the Banach space of bounded linear operators on $\mathcal{H}.$ A completely positive (CP) and trace non-increasing map is a map $\mathcal{M}(m):\mathcal{B}(\mathcal{H})\rightarrow \mathcal{B}(\mathcal{H})$  such that $\mathcal{M}(m)\otimes\mathcal{I}$ is a positive map, where $\mathcal{I}$ is the identity on a Hilbert space $\mathbb{C}^d$ of arbitrary dimension $d.$ A sequence of CP maps can be represented by an operator $\mathcal{M}_{t_n,\ldots,t_1}(m_n,\ldots,m_1)=\bigotimes_{j=n}^1\mathcal{M}_{t_j}(m_j)$ on a product space $\mathcal{B}(\mathcal{H})^n$, where the subscript $t_j$ denotes the time at which the map $\mathcal{M}_{t_j}(m_j)$ acts. For a given time step $j$, if $\mathcal{M}_{t_j}:=\sum_{m_j}\mathcal{M}_{t_j}(m_j)$ is trace-preserving, we will call it an instrument.

In this setting, a quantum comb is a positive linear functional $\mathcal{T}$ on a set of all CP maps $\mathcal{M}_{t_n,\ldots,t_1}(m_n,\ldots,m_1)$ taking values in $[0,1]$ and such that $\mathcal{T}[\mathcal{M}_{t_n,\ldots,t_1}(m_n,\ldots,m_1)]=1$ if and only if all $\mathcal{M}_{t_j}(m_j)$ are identity maps. For instance, for a sequential measurement, one can assign $\mathcal{M}_{t_j}(m_j)$ as a measurement apparatus acting at a time step $j$ and associating with an outcome $m_j.$ In this sense, a quantum comb can describe the joint probability of a sequence of outcomes $(m_n,\ldots,m_1),$ given a sequence of measurement apparatuses $(\mathcal{M}_{t_n}(m_n),\ldots,\mathcal{M}_{t_1}(m_1))$: namely, 
\begin{equation}
    P_{t_n,\ldots,t_1}(m_n,\ldots,m_1)=\mathcal{T}\Big[\mathcal{M}_{t_n,\ldots,t_1}(m_n,\ldots,m_1)\Big].
\end{equation}
A similar idea can be extended to a probabilistic measurement by considering a mixture of different operators $\mathcal{M}_{t_n,\ldots,t_1}(m_n,\ldots,m_1).$  

Since we assume finite dimensions, by Riesz' representation theorem one can find a trace-$1$ positive semidefinite matrix $\mathbf{Q}$ in $M_{d^{2n}}(\mathbb{C})$ such that 
\begin{equation}\label{eq:comb-sequence}
    \mathcal{T}\Big[\mathcal{M}_{t_n,\ldots,t_1}(m_n,\ldots,m_1)\Big]= (\mathbf{Q},\mathbf{M}_{t_n,\ldots,t_1}(m_n,\ldots,m_1)),
\end{equation}
 where $\mathbf{M}_{t_n,\ldots,t_1}(m_n,\ldots,m_1)$ is a matrix representation of $\mathcal{M}_{t_n,\ldots,t_1}\!(m_n,\ldots,m_1)$ in $M_{d^{2n}}(\mathbb{C})$ and $(\cdot,\cdot)$ is an inner product on that matrix space. Recalling the sequential measurement picture outlined above, one can now define an observable \begin{align}\mathbf{X}=\sum_{m_n,\ldots,m_1}x(m_n,\ldots,m_1)\mathbf{M}_{t_n,\ldots,t_1}(m_n,\ldots,m_1),\end{align} where $x(m_n,\ldots,m_1)$ is a complex-valued function such that the matrix $\mathbf{X}$ is Hermitian. In other words, $\mathbf{M}_{t_n,\ldots,t_1}(m_n,\ldots,m_1)$ represents a measurement of the observable $\mathbf{X}$, and the average of $\mathbf{X}$ can written as
    \begin{equation}
        \operatorname{E}_{\mathbf{Q}}\big[\mathbf{X}\big] = (\mathbf{Q},\mathbf{X}) \label{eq:average-comb}.
    \end{equation}
This is similar to the average functional Eq.~\eqref{eq:average} used in the sketch of the proof for Theorem~\ref{thm:extension_theorem}. 

The connection between the formulation of quantum combs and the bi-probability one is made manifest by the average function~\eqref{eq:average-comb}: one can consider Eq.~\eqref{eq:average} in a different way by defining a matrix $\mathbf{Q}$ through $[\mathbf{Q}]_{\bm{f}_n^-,\bm{f}_n^+}:=Q_{\bm{t}_n}(\bm{f}_n^+,\bm{f}_n^-)^*$ using the trajectory set as an index set for the canonical coordinate, and also defining $\mathbf{X}$ through $[\mathbf{X}]_{\bm{f}_n^+,\bm{f}_n^-}:=X_{\bm{t}_n}(\bm{f}_n^+,\bm{f}_n^-).$ The average~\eqref{eq:average} can be represented as an inner product of these two matrices, i.e.~$\operatorname{E}_{\bm{t}_n}[X_{\bm{t}_n}]=\operatorname{tr}(\mathbf{Q}^\dagger\cdot\mathbf{X}),$ in a similar fashion as in Eq.~\eqref{eq:average-comb} in the quantum comb formulation. 

On the other hand, it is worth noting that the quantum comb formalism outlined above and the one developed in the present paper exhibit a crucial difference in the way they pursue a generalization of the trajectory picture in quantum theory. Let us elaborate on that point. In classical stochastic processes, one only has single trajectories which, in a standard quantum description, can be represented by a sequence of projection operators, possibly alternated by evolution maps. These \textit{elementary} trajectories and their mixtures can describe all possible classical stochastic processes; however, quantum mechanics allows processes that cannot be represented in this way~\cite{Sakuldee2018}. Formally, such a process is represented by a ``non-classical'' trajectory in the form of a sequence of general CP maps that \textit{cannot} be written as a mixture of elementary trajectories. This is the type of generalization of the trajectory concept customarily employed in the literature, e.g. in~\cite{Lindblad1979,Accardi1982,Chiribella2009}, including the comb formulation in~\cite{Milz2020}. The present work explores an \textit{alternative} direction---to keep the notion of elementary trajectories by keeping the intervention of the projective type, and ``doubling'' the trajectories to a bi-trajectory picture. Within this picture, one still preserves the fundamental aspect of classical mechanics but one needs two trajectories to describe quantum phenomena. Let us remark that in the first generalization, one can obtain two (non-classical) trajectories by a polarization technique, but they are not necessarily elementary in the classical sense, i.e., each trajectory in the pair does not need to correspond to a sequence of projective measurements.

Although the origins of both formulations are motivated by different generalizations of trajectory pictures in quantum mechanics, on the collective level they represent the same physical scenarios. This can be seen from the definitions of the average functions, cf.~Eq.~\eqref{eq:average} and Eq.~\eqref{eq:average-comb}, which show no notions of representation dependence. Indeed, one can extend the quantum comb formulations to cover the bi-probabilities and vice versa. For instance, from the non-classical trajectory picture, one relaxes the CP condition and allows the interventions to be non-CP. In this sense, one can employ a \textit{bi-instrument}
    \begin{equation}\label{eq:N-maps}
        \mathcal{N}_{t_j}(f^+_j,f^-_j)\hat A := \hat P_{t_j}(f^+_j) \hat A \hat P_{t_j}(f^-_j),
    \end{equation}
which is non-CP but satisfies the consistency property $\sum_{f^+_j,f^-_j} \mathcal{N}_{t_j}(f^+_j,f^-_j)=\mathcal{I}$, where $\mathcal{I}$ is an identity channel. Using this type of maps in a similar construction to that of quantum combs, it follows that
    \begin{equation}\label{eq:T-to-Q}
        \mathcal{T}[\mathcal{N}_{t_n}(f^+_n,f^-_n)\otimes\cdots\otimes\mathcal{N}_{t_1}(f^+_1,f^-_1)] = Q_{t_n,\ldots,t_1}(f_n^+,f_n^-;\ldots;f_1^+,f_1^-). 
    \end{equation}
Note that, by allowing non-CP maps, one will lose the notion of trajectories, resolving in the appearance of bi-trajectories.

Apart from the arguments above, the corresponding generalizations of the Kolmogorov extension theorem show clear similarities. One can observe that bi-consistency for quantum bi-probabilities is the same as the consistency condition for quantum combs in Ref.~\cite{Milz2020}. This simply follows from the fact that the sum of projective operators both on the right and the left-hand sides is an identity channel, as explicitly written in Eq.~\eqref{eq:N-maps}. 
As for the uniform boundedness, it can be seen that the constraint $(\mathbf{Q},\mathbf{M}_{t_n,\ldots,t_1}(m_n,\ldots,m_1))\leq 1$ for the generalized extension theorem (GET) in Ref.~\cite{Milz2020} can also imply our assumption~\ref{assump:extendable_family:uni_bound}. In particular, we know from the H{\"o}lder inequality that
    \begin{equation}
        0\leq(\mathbf{Q},\mathbf{M}_{t_n,\ldots,t_1}(m_n,\ldots,m_1))\leq\|\mathbf{Q}\|_1 \|\mathbf{M}_{t_n,\ldots,t_1}(m_n,\ldots,m_1)\|_\infty.
    \end{equation}
Up to some scaling, one may expect the connection between the uniform boundedness and the constraint $(\mathbf{Q},\mathbf{M}_{t_n,\ldots,t_1}(m_n,\ldots,m_1))\leq 1$ used for GET, and hence the connection between our results on the extension theorem for the bi-trajectory picture and GET for the non-classical trajectory picture would be established. We leave a detailed analysis of this aspect for further studies.

\section{Concluding remarks}\label{sec:conclusion}

We have shown that the multitime complex bi-probability distributions $Q_{\bm{t}_n}(\bm{f}_n^+,\bm{f}_n^-)$---and thus the corresponding (positive-valued) probability distributions $P_{\bm{t}_n}(\bm{f}_n)$---associated with an observable $F$ of a finite-dimensional quantum system are a discrete-time restrictions of a unique complex \textit{master} bi-measure $\mathcal{Q}[f^+,f^-][\mathcal{D}f^+][\mathcal{D}f^-]$ on the space of \textit{pairs} of trajectories traced through the spectrum of $F$. Mathematically, this follows from a conceptually simple, yet nontrivial generalization of Kolmogorov's extension theorem to the complex-valued scenario. Consequently, the multitime probabilities $P_{t_n,\ldots,t_1}(f_n,\ldots,f_1)$ used to describe sequential projective measurements of quantum observable do not correspond to a sampling of an underlying trajectory (akin to classical theories), but rather they result from a superposition of trajectory \textit{pairs}, $t\mapsto(f^+(t),f^-(t))$, such that $f^+(t_j)=f^-(t_j)=f_j$. Therefore, in the bi-trajectory picture, the classical limit is achieved when the trajectory pairs that do not overlap \textit{completely} have vanishing measure.

Our results may pave the way to diverse research directions. The most natural one would be the extension of our results to the infinite-dimensional scenario, in which the spectrum of the observable $F$ may be unbounded and/or admit continuous components; while the generalization of Kolmogorov's extension theorem does not forbid, in principle, either possibility (see Theorem~\ref{thm:extension_theorem_pro} in the appendix), a nontrivial refinement of the bounds presented in Section~\ref{sec:quantum_extension} is needed to this purpose.

Besides, it is our firm belief that the proof of the existence of a master bi-measure---and thus, of the bi-trajectory picture---must entail some physical ramifications for the quantum theory. Admittedly, the analysis of these potential physical implications is outside the scope of the present work, but it is a natural direction for future investigations. However, it is likely that the comprehensive assessment of the impact of our result is pending further research; to see why, let us review the ``program'' we were following in this work: The starting point was the phenomenology of quantum mechanics that gives us families of inconsistent multitime probability distributions. Taking the inspiration from classical theories, we hypothesized that the individual probabilities are connected by being components of some \textit{master} object that describe to some extent the physical system in question. The hypothesis was then confirmed with the constructive mathematical proof of the object's existence. This last step, the mathematical inference, is the only real way to uncover the master object's form---the object itself cannot be directly ``seen'', only its components, the probabilities, are accessible. Here, we were able to carry out this program to its conclusion---and thus, we have found the master bi-measure---but only in the case of single observable. As indicated in Section~\ref{subsec:multiple_observables}, we were unable to do the same for the multi-observable scenario. Consequently, we cannot be sure what is the exact nature of the \textit{true} master object, the single object that is the source of all phenomenologically accessible multi-observable probabilities (including the single observable as a special case). We believe that this true master object does exist, but the question of its form remains; only a well-constructed mathematical proof will be able to provide the answer.

\section*{Acknowledgments}

D.L. acknowledges financial support from the PNRR MUR project CN00000013 - National Centre for HPC, Big Data and Quantum Computing, and partial support by Istituto Nazionale di Fisica Nucleare (INFN) through the project “QUANTUM” and the Italian National Group of Mathematical Physics (GNFM-INdAM). 
F.S. acknowledges support by the Foundation for Polish Science (IRAP project, ICTQT, contract no. 2018/MAB/5, co-financed by EU within Smart Growth Operational Programme). 
D.C. was supported by the Polish National Science Center project No. 2018/30/A/ST2/00837. 
P.S. acknowledges support from the Polish National Science Centre through the project MAQS under QuantERA, which has received funding from the European Union’s Horizon 2020 research and innovation program under Grant Agreement No. 731473, Project No. 2019/32/Z/ST2/00016. P.S. also extends sincere gratitude to J. Korbicz and J. Wehr for their invaluable advice and insightful discussions; they have been immensely helpful during the development of this work.

\appendix

\section{Kolmogorov extension theorem for complex measures}\label{app:measure_theory}

In this appendix we will collect some basic notions about complex measures, and we will ultimately state and prove an extension theorem (Theorem~\ref{thm:extension_theorem_pro}) for consistent families of complex measures, which also reduces to the ``standard'' Kolmogorov extension theorem in the case in which all measures are positive-valued. Theorem~\ref{thm:extension_theorem} in the main text will henceforth follow as a particular case of Theorem~\ref{thm:extension_theorem_pro}. Further details on measure theory can be found in any of the standard monographs on the subjects---among them, we mainly refer to Rudin's classic textbook~\cite{rudin1986real} for the main concepts, and Bhattacharya and Waymire's book~\cite{bhattacharya2017basic} for Kolmogorov's extension theorem and related concepts.  

The aim of this discussion (which, needless to say, has no pretense of completeness) is twofold. From one hand, this appendix could be regarded as a quick ``handbook'' of measure-theoretical notions of immediate use for the purpose of the paper---incidentally, also providing a relatively simple, functional analysis--oriented proof of the Kolmogorov extension theorem which should be of interest for quantum physicists even beyond the scope of this paper. From the other hand, to prove the desired extension theorem, we will need to explicitly state some properties of complex measures which, in the literature, are often only discussed in the positive-valued scenario. For the convenience of the reader, we therefore decided to provide an explicit, if concise, discussion of complex measures.

\subsection{Basic notions on complex measures}

\begin{definition}[Measurable space]\label{def:sigma_algebra}
			A \textit{measurable space} is a pair $(M,\mathcal{F})$, where $M$ is a set and $\mathcal{F}$ is a \textit{$\sigma$-algebra} on $M$, that is, a family of subsets of $M$ such that
			\begin{itemize}
				\item $M\in\mathcal{F}$;
				\item $A\in\mathcal{F}$ iff $M\setminus A\in\mathcal{F}$;
				\item given a countable family $\{A_i\}_{i\in\mathbb{N}}\subset\mathcal{F}$, we have $\bigcup_{i\in\mathbb{N}}A_i\in\mathcal{F}$.
			\end{itemize}
	$M$ is called \textit{sample space}, and the elements of $\mathcal{F}$ are its \textit{measurable} subsets.
	\end{definition}
	It is easy to see that $\mathcal{F}$ is also closed with respect to countable intersections, and that the null set $\emptyset$ also belongs to $\mathcal{F}$. By construction, the intersection of $\sigma$-algebras is also a $\sigma$-algebra itself. An obvious example of $\sigma$-algebra on $M$ is the class of \textit{all} its subsets---its \textit{power set}: this is indeed the standard choice when dealing with spaces of finite cardinality. Nevertheless, to define measures on spaces of trajectories, we shall also need to work with spaces with infinite cardinality, in which case, for technical reasons, other choices of $\sigma$-algebras are to be made.
   
   Given a family $F$ of subsets of $M$, there always exists a $\sigma$-algebra containing all elements of $F$ (e.g.~the power set itself). Then the intersection of all $\sigma$-algebras containing $F$ is still a $\sigma$-algebra containing $F$---by construction, the \textit{smallest} one. We will call it the $\sigma$-algebra \textit{generated} by $F$. A particularly useful example is the following: whenever $M$ is endowed with a \textit{topological space} structure (e.g., $\mathbb{C}$ or $\mathbb{R}$), the \textit{Borel $\sigma$-algebra} of $M$, which we shall denote as $\mathfrak{B}(M)$, is the $\sigma$-algebra generated by the open subsets of $M$. In such a case, all elements of $\mathfrak{B}(M)$ are denoted as the \textit{Borel (sub)sets} of $M$.

    The notion of $\sigma$-algebra generated by a family of subsets is also involved in the following definition, of primary importance for our purposes:
   \begin{definition}[Product space and product $\sigma$-algebra]\label{def:product_space}
       Let $(M_t)_{t\in I}$ an arbitrary collection of sets, each endowed with a $\sigma$-algebra $\mathcal{F}_t$. The \textit{product space} of $(M_t)_{t\in I}$ is defined as
       \begin{equation}
           \prod_{t\in I}M_t=\left\{f=(f_t)_{t\in I}:\;f_t\in M_t\right\},
       \end{equation}
       that is, the space of functions from $I$ to $\cup_{t\in I}M_t$. Besides, the \textit{product $\sigma$-algebra} of $(\mathcal{F}_t)_{t\in I}$, denoted by $ \bigotimes_{t\in I}\mathcal{F}_t$, is the $\sigma$-algebra on $\prod_{t\in I}M_t$ generated by all \textit{finite-dimensional rectangles}, that is, all subsets of $\prod_{t\in I}M_t$ in the form
       \begin{equation}
           \left\{f\in\prod_{t\in I}M_t:\;(f_{t_1},\dots,f_{t_n})\in A_{t_1}\times\cdots\times A_{t_n},\;A_{t_1}\in\mathcal{F}_{t_1},\ldots,A_{t_n}\in\mathcal{F}_{t_n}\right\}.
       \end{equation}
  \end{definition}
With these definitions, the pair $(\prod_{t\in I}M_t,\otimes_{t\in I}\mathcal{F}_t)$ is itself a measurable space. In the definition above, $I$ (which will eventually play the role of time in our applications) can either be finite, countably finite, or having the cardinality of the continuum. 

\begin{definition}[Complex-valued measure]\label{def:measure}
			Let $(M,\mathcal{F})$ a measurable space. A \textit{complex-valued measure} (or complex measure) on $(M,\mathcal{F})$ is a function $\nu:\mathcal{F}\rightarrow\mathbb{C}$ which satisfies the following properties:
			\begin{itemize}
				\item[(i)] \textit{grade-1 additivity}: given two \textit{disjoint} sets $A_1,A_2\in\mathcal{F}$,
				\begin{equation}
					\nu(A_1\sqcup A_2)=\nu(A_1)+\nu(A_2);
				\end{equation}
				\item[(ii)] \textit{continuity}: given an increasing sequence of sets $\{A_i\}_{i\in\mathbb{N}}$ (that is, $A_{i}\subset A_{i+1}$),
				\begin{equation}\label{eq:continuity}
					\lim_{i\to\infty}\nu(A_i)=\nu\left(\bigcup_{i\in\mathbb{N}}A_i\right);
				\end{equation}
			\end{itemize}
   If $M$ is a topological space and $\mathcal{F}:=\mathfrak{B}(M)$ its corresponding Borel $\sigma$-algebra, we will refer to $\nu$ as a \textit{Borel} measure.
		\end{definition}
Note that we are using the symbol $\sqcup$ for the union of \textit{disjoint} sets: this should be understood anytime such symbols appear.	By iterating (i), one obtains, for all $n\geq2$,
		\begin{equation}
			\nu\left(\bigsqcup_{i=1}^nA_i\right)=\sum_{i=1}^n\nu(A_i),
		\end{equation}
	whence the measure of the union of any family of disjoint sets can be entirely reconstructed by knowing the measure associated with each separate set; furthermore, (ii) allows us to take the limit $n\to\infty$ in the expression above ($\sigma$-additivity).

    If $\nu(A)$ is positive for all $A\in\mathcal{F}$, one recovers the familiar definition of \textit{finite} positive-valued measures---in fact, without further specification, one often reserves the noun ``measure'' for the positive-valued case. The adjective ``finite'' is here due to the fact that, differently from the complex case, positive-valued measures may admit infinite values.   
    In the remainder of this work we shall always take for granted that all measures are finite and possibly complex-valued.

The following definition is of major importance for complex measures:
\begin{definition}\label{def:variation}
Let $(M,\mathcal{F})$ be a measurable space, and $\nu$ a complex-valued measure on it. Its \textit{variation} $|\nu|$ is the positive measure defined as follows:
\begin{equation}
    |\nu|(A)=\sup\left\{\sum_k|\nu(A_k)|:(A_k)_k\text{ pairwise disjoint, }\bigsqcup_kA_k=A\right\}.
\end{equation}
Its \textit{total variation} is the quantity $|\nu|(M)$.
\end{definition}
It can be proven that this is indeed a positive measure on $(M,\mathcal{F})$, and $|\nu(A)|\leq|\nu|(A)$~\cite[Theorem 6.2]{rudin1986real}; besides, of course $|\nu|=\nu$ whenever $\nu$ is positive. As such, the definition above only acquires importance when dealing with non-positive measures---indeed, it will be of primary importance for obtaining a Kolmogorov extension theorem beyond the positive scenario.

Proceeding in analogy with positive measures, we can likewise define a notion of \textit{integral} with respect to a complex measure. Recall that a function $f:M\rightarrow\mathbb{C}$ is said to be \textit{measurable} if, for every Borel subset $A\subset\mathbb{C}$, its preimage
\begin{equation}
    f^{-1}(A):=\{\omega\in M:\;f(\omega)\in A\}
\end{equation}
satisfies $f^{-1}(A)\in\mathcal{F}$. For such functions, we can define the \textit{Lebesgue integral} with respect to a complex measure $\nu$, denoted by
\begin{equation}
    \int_M f(x)\;\mathrm{d}\nu(x)\qquad\text{or}\qquad\int_M f\;\mathrm{d}\nu
\end{equation}
via the same procedure that is used for positive measures, that is, by approximating a measurable function with simple functions (i.e., linear combinations of indicator functions), defining the integral of such functions in the usual way, and taking the limit.

An important relation between a complex measure $\nu$ and its variation $|\nu|$ is the \textit{polar decomposition} of $\nu$, see e.g.~\cite[Theorem 6.12]{rudin1986real}: there always exists a measurable function $h:M\rightarrow\mathbb{C}$ with $|h(x)|\leq1$ such that $\mathrm{d}\nu(x)=h(x)\;\mathrm{d}|\nu|(x)$, 
which implies
\begin{equation}\label{eq:integrals}
    \int_Mf\;\mathrm{d}\nu=\int_Mfh\;\mathrm{d}|\nu|
\end{equation}
and, as a direct consequence,
\begin{equation}\label{eq:triangular}
        \left|\int_Mf\;\mathrm{d}\nu\right|\leq\int_M|f|\;\mathrm{d}|\nu|.
\end{equation}
Eq.~\eqref{eq:integrals} can be used to show that the \textit{dominated convergence theorem} also holds for complex measures. We shall state and prove this fact explicitly.
\begin{theorem}[Dominated convergence theorem]
    Let $(f_n)_{n\in\mathbb{N}}$ be a sequence of complex-valued measurable functions on $(M,\mathcal{F})$, pointwise converging to a complex-valued function $f$. Suppose that there exists a positive-valued function $g$ which is integrable with respect to a complex measure $\nu$ and such that $|f_n(x)|\leq g(x)$. Then
    \begin{equation}
        \lim_{n\to\infty}\int_M f_n\;\mathrm{d}\nu=\int_M f\;\mathrm{d}\nu.
    \end{equation}
\end{theorem}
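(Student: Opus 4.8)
The plan is to reduce the statement to the classical dominated convergence theorem for finite, positive measures, exploiting the polar decomposition recalled just above. First I would write $\mathrm{d}\nu = h\,\mathrm{d}|\nu|$ with $h$ measurable and $|h(x)|\le 1$, so that Eq.~\eqref{eq:integrals} gives
\begin{equation*}
    \int_M f_n\,\mathrm{d}\nu = \int_M f_n h\,\mathrm{d}|\nu|,\qquad \int_M f\,\mathrm{d}\nu = \int_M f h\,\mathrm{d}|\nu|.
\end{equation*}
Since $|\nu|$ is a positive measure (and, in our setting, finite, as we only consider finite complex measures), the problem is thereby converted into a convergence statement for the sequence $(f_n h)_{n\in\mathbb{N}}$ against the \emph{single positive} measure $|\nu|$, where the ordinary dominated convergence theorem is available.

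Next I would check that the hypotheses of the classical theorem are met for $(f_n h)_{n\in\mathbb{N}}$ with respect to $|\nu|$. Pointwise convergence $f_n h\to f h$ is immediate, as $h$ is a fixed function and $f_n\to f$ pointwise by assumption. For the domination, the bound $|h|\le 1$ yields $|f_n h| = |f_n|\,|h|\le |f_n|\le g$ everywhere; and the assumed integrability of $g$ with respect to the complex measure $\nu$ means, by definition, that $g$ is integrable with respect to $|\nu|$, i.e. $\int_M g\,\mathrm{d}|\nu|<\infty$. Hence the classical dominated convergence theorem applies and gives $\int_M f_n h\,\mathrm{d}|\nu|\to\int_M f h\,\mathrm{d}|\nu|$. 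Translating both sides back through Eq.~\eqref{eq:integrals} then produces exactly $\lim_{n\to\infty}\int_M f_n\,\mathrm{d}\nu=\int_M f\,\mathrm{d}\nu$, which is the claim.

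The construction is essentially routine once the reduction is in place, so there is no deep obstacle; the only genuine point requiring care is the bookkeeping about what ``integrability with respect to a complex measure'' means. The clean route is to adopt (and state) the convention that $\nu$-integrability is $|\nu|$-integrability, after which the domination hypothesis transfers verbatim to the positive measure $|\nu|$. One should also remark that the uniform bound $|f_n h|\le g$, together with $g\in L^1(|\nu|)$, guarantees that each $f_n$ and the limit $f$ are themselves $\nu$-integrable, so that every integral written above is well-defined. A more pedestrian alternative—decomposing $\nu$ into its real and imaginary parts and each of those, via the Jordan decomposition, into positive and negative parts, then applying the positive-measure theorem four times—also works, but the polar-decomposition argument is shorter and avoids introducing four auxiliary measures.
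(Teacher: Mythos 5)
Your proof is correct and takes essentially the same route as the paper's: polar decomposition $\mathrm{d}\nu(x)=h(x)\,\mathrm{d}|\nu|(x)$ with $|h(x)|\leq 1$, then the classical dominated convergence theorem for the positive finite measure $|\nu|$ applied to $f_n h\to fh$ with dominating function $g$, and translation back via the identity $\int_M f\,\mathrm{d}\nu=\int_M fh\,\mathrm{d}|\nu|$. Your extra bookkeeping (stating the convention that $\nu$-integrability means $|\nu|$-integrability, and noting the Jordan-decomposition alternative) is sound but goes beyond what the paper's proof records.
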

\begin{proof}
    If $\nu$ is a positive measure, this is the usual statement of the dominated convergence theorem, see e.g.~\cite[Theorem 1.34]{rudin1986real}. In the general case, we know that $\nu$ admits a polar decomposition $\mathrm{d}\nu(x)=h(x)\;\mathrm{d}|\nu|(x)$, with $|h(x)|\leq1$; then clearly $f_nh$ converges pointwise to $fh$ and $|f_n(x)h(x)|\leq|f_n(x)|\leq g(x)$, whence, applying the dominated convergence theorem for positive measures,
    \begin{equation}
        \lim_{n\to\infty}\int_M f_nh\;\mathrm{d}|\nu|=\int_M fh\;\mathrm{d}|\nu|,
    \end{equation}
    which, by Eq.~\eqref{eq:integrals}, is the desired equality.
\end{proof}

    \subsection{Kolmogorov extension theorem for complex measures}

    We will now provide an extension theorem for consistent families of complex-valued Borel measures, akin to the ``standard'' Kolmogorov extension theorem for positive ones. The result hereby presented can be regarded as an immediate generalization of Theorem~\ref{thm:extension_theorem} (cf.~Remark~\ref{rem:pro}).
    
    Given an arbitrary set $I$, we consider a family $(M_t)_{t\in I}$ of compact metric spaces, each endowed with the corresponding Borel $\sigma$-algebra $\mathcal{F}_t:=\mathfrak{B}(M_t)$. Recalling Definition~\ref{def:product_space}, for every $n\in\mathbb{N}$ and every (not necessarily ordered) $\bm{t}_n=(t_1,\dots,t_n)\in I^n$, with $t_j\neq t_\ell$ for all $j\neq\ell$, we are automatically given a product space (cf.~Definition~\ref{def:product_space})
    \begin{equation}
        \left(M_{\bm{t}_n},\mathcal{F}_{\bm{t}_n}\right),\qquad M_{\bm{t}_n}=M_{t_1}\times\cdots\times M_{t_n},\qquad\mathcal{F}_{\bm{t}_n}=\mathcal{F}_{t_1}\otimes\cdots\otimes\mathcal{F}_{t_n},
    \end{equation}
    with each $M_{\bm{t}_n}$ being a compact metric space, and $\mathcal{F}_{\bm{t}_n}$ being the corresponding Borel $\sigma$-algebra. Besides, we can also consider the ``total'' product space
    \begin{equation}
        \left(\prod_{t\in I}M_t,\;\;\bigotimes_{t\in I}\mathcal{F}_t\right),
    \end{equation}
    whose elements, cf.~Definition~\ref{def:product_space}, can be either regarded as (possibly uncountable) families $(f_t)_{t\in I}$, with $f_t\in M_t$, or equivalently as functions $f:I\rightarrow\cup_tM_t$.
    
    \begin{definition}[Consistent family of measures]\label{def:consistency}
       Consider a family $(\nu_{\bm{t}_n})_{\bm{t}_n\in I^n}$, with $\nu_{\bm{t}_n}$ being a complex-valued Borel measure on $(M_{\bm{t}_n},\mathcal{F}_{\bm{t}_n})$. The family is said to be \textit{consistent} if it satisfies the following properties:
       \begin{itemize}
           \item[(a)] for every $n\in\mathbb{N}$, every permutation $\sigma:\{1,\dots,n\}\rightarrow\{1,\dots,n\}$, and every $A_1\in\mathcal{F}_{t_1},\ldots, A_n\in\mathcal{F}_{t_n}$, 
           \begin{equation}\label{eq:permutations}
               \nu_{t_n,\dots,t_1}(A_n\times\ldots\times A_1)=\nu_{t_{\sigma(n)},\dots,t_{\sigma(1)}}(A_{\sigma(n)}\times\ldots\times A_{\sigma(1)});
           \end{equation}
         \item[(b)] for every $n\in\mathbb{N}$, every distinct $t_1,\dots,t_n,t_{n+1}\in I$ and every $A\in\mathcal{F}_{t_1}\times\ldots\times\mathcal{F}_{t_n}$,
         \begin{equation}\label{eq:consistency_pro}
             \nu_{t_1,\ldots,t_n}(A)=\nu_{t_1,\dots,t_n,t_{n+1}}(A\times M_{t_{n+1}}).
         \end{equation}
       \end{itemize}
    \end{definition}

    \begin{remark}\label{rem:consistency}
        Eq.~\eqref{eq:consistency_pro} can be equivalently rephrased as follows: for every $n\in\mathbb{N}$, every distinct $t_1,\dots,t_n,t_{n+1}\in I$ and every $B\in\mathcal{F}_{t_1}\times\ldots\times\mathcal{F}_{t_n}$,
        \begin{equation}
            \int1_{B}\;\mathrm{d}\nu_{t_1,\dots,t_n}=\int1_{B\times M_{t_{n+1}}}\;\mathrm{d}\nu_{t_1,\dots,t_n,t_{n+1}},
        \end{equation}
        which therefore implies the equality
        \begin{equation}
          \int X_{t_1,\dots,t_n}\mathrm{d}\nu_{t_1,\dots,t_n}=\int X_{t_1,\dots,t_n}\;\mathrm{d}\nu_{t_1,\dots,t_n,t_{n+1}}
        \end{equation}
        for every simple function $X_{t_1,\dots,t_n}:M_{\bm{t}_n}\rightarrow\mathbb{C}$ and thus, via a standard approximation argument, for \textit{every measurable} function $X_{t_1,\dots,t_n}$. Besides, by Eq.~\eqref{eq:permutations}, the same holds if the role of $t_{n+1}$ is replaced by any other coordinate:  
        \begin{equation}\label{eq:consistency_pro2}
          \int X_{t_1,\ldots,\cancel{t_j},\ldots,t_{n+1}}\mathrm{d}\nu_{t_1,\ldots,\cancel{t_j},\ldots,t_{n+1}}=\int X_{t_1,\ldots,\cancel{t_j},\ldots,t_{n+1}}\mathrm{d}\nu_{t_1,\ldots,t_j,\ldots,t_{n+1}}
        \end{equation}
        and so on; that is, ``adding times'' to the integration measure does not modify the value of the integral.
    \end{remark}
    
    \begin{theorem}\label{thm:extension_theorem_pro}
        Let $(M_t)_{t\in I}$ a family of compact metric spaces, with $\mathcal{F}_t$ being the corresponding Borel $\sigma$-algebra, and $(\nu_{\bm{t}_n})_{\bm{t}_n\in I}$ a family of complex-valued Borel measures on $(M_{\bm{t}_n},\mathcal{F}_{\bm{t}_n})$ satisfying the following assumptions:
        \begin{itemize}
            \item consistency in the sense of Definition~\ref{def:consistency};
            \item uniformly bounded total variation:
            \begin{equation}\label{eq:uniform_boundedness_pro}
                \sup_{n\in\mathbb{N},\bm{t}_n\in I^n}\left|\nu_{\bm{t}_n}\right|(M_{\bm{t}_n})<\infty,
            \end{equation}
            with $|\nu_{\bm{t}_n}|$ as in Definition~\ref{def:variation}.
        \end{itemize}
        Then there exists a unique complex-valued measure $\nu$ on the product space $(\mathcal{M},\otimes_t\mathcal{F}_t)$ such that, for all $n\in\mathbb{N},\bm{t}_n\in I^n$ and $A\in\mathcal{F}_{\bm{t}_n}$,
        \begin{equation}\label{eq:cylinders_pro}
            \nu_{\bm{t}_n}(A)=\nu\left(\left\{f\in\prod_{t\in I}M_t:(f(t_1),\ldots,f(t_n))\in A\right\}\right).
        \end{equation}
    \end{theorem}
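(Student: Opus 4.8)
The plan is to reduce the construction of the master measure $\nu$ to the construction of a single bounded linear functional on continuous functions over the product space, and then to recover $\nu$ from that functional via the Riesz--Markov representation theorem (Theorem~\ref{thm:riesz-markov}); this is the functional-analytic strategy already sketched for Theorem~\ref{thm:extension_theorem}, now carried out in full measure-theoretic generality. First I would fix the topology: since each $M_t$ is a compact metric space, the product $\mathcal{M}=\prod_{t\in I}M_t$ with the product topology is compact Hausdorff by Tychonoff's theorem, and I write $C(\mathcal{M})$ for the Banach space of continuous complex-valued functions on it with the supremum norm. On the subalgebra of \emph{cylinder functions} --- those of the form $\mathcal{X}(f)=X_{\bm{t}_n}(f(t_1),\dots,f(t_n))$ for some $n$, some $\bm{t}_n$, and some continuous $X_{\bm{t}_n}$ on $M_{\bm{t}_n}$ --- I would define the candidate functional by
\[
\mathcal{E}(\mathcal{X}) := \int_{M_{\bm{t}_n}} X_{\bm{t}_n}\,\mathrm{d}\nu_{\bm{t}_n}.
\]
The first thing to check is that this is well defined: a cylinder function admits many representations, since one may always adjoin extra coordinates to $\bm{t}_n$ and extend $X_{\bm{t}_n}$ trivially, and consistency in the sense of Definition~\ref{def:consistency}, in the integrated form of Remark~\ref{rem:consistency}, guarantees that $\mathcal{E}(\mathcal{X})$ is independent of the representation. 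Linearity on this algebra then follows by passing to common refinements of the index tuples.

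Next I would establish boundedness. Using the triangle inequality~\eqref{eq:triangular} together with Definition~\ref{def:variation},
\[
|\mathcal{E}(\mathcal{X})| \le \int |X_{\bm{t}_n}|\,\mathrm{d}|\nu_{\bm{t}_n}| \le \|X_{\bm{t}_n}\|_\infty\,|\nu_{\bm{t}_n}|(M_{\bm{t}_n}) \le C\,\|\mathcal{X}\|_\infty,
\]
where $C=\sup_{n,\bm{t}_n}|\nu_{\bm{t}_n}|(M_{\bm{t}_n})<\infty$ is precisely the uniform bound~\eqref{eq:uniform_boundedness_pro}. The cylinder functions form a subalgebra of $C(\mathcal{M})$ containing the constants and separating points, so the Stone--Weierstrass theorem makes them dense; the bound just obtained then lets me extend $\mathcal{E}$ uniquely and continuously to a bounded linear functional on all of $C(\mathcal{M})$. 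Applying Riesz--Markov produces a unique complex Borel measure $\nu$ on $\mathcal{M}$ with $\int \mathcal{X}\,\mathrm{d}\nu=\mathcal{E}(\mathcal{X})$ for every $\mathcal{X}\in C(\mathcal{M})$.

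Finally I would verify the marginal identity~\eqref{eq:cylinders_pro} and uniqueness. Evaluating $\nu$ against the indicator of a finite-dimensional rectangle $A=A_1\times\cdots\times A_n$ requires passing from continuous test functions to indicators: using regularity of Borel measures on compact metric spaces I would approximate $1_A$ by continuous functions, and the dominated convergence theorem for complex measures proven earlier in this appendix transfers the limit through $\nu$ and through each $\nu_{\bm{t}_n}$, yielding $\nu(\{f:(f(t_1),\dots,f(t_n))\in A\})=\nu_{\bm{t}_n}(A)$ first on rectangles and then, by a $\pi$--$\lambda$ argument, on all of $\mathcal{F}_{\bm{t}_n}$. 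Uniqueness is immediate, since any two complex measures satisfying~\eqref{eq:cylinders_pro} agree on the cylinder sets, which form a $\pi$-system generating $\bigotimes_t\mathcal{F}_t$.

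The step I expect to be the main obstacle is exactly this passage from continuous functions back to measurable sets in a way valid for \emph{complex} (rather than positive) measures. For positive measures the inner/outer regularity approximation is standard, but in the complex case one must route the estimates through the variation $|\nu|$ and the polar decomposition in order to legitimately invoke dominated convergence and the regularity bounds. One must also reconcile the $\sigma$-algebra on which Riesz--Markov delivers $\nu$ with the coordinate-generated product $\sigma$-algebra $\bigotimes_t\mathcal{F}_t$: this identification is automatic when $I$ is countable but, for uncountable $I$, requires working with the Baire-set formulation of the representation theorem, since the product need not be metrizable and the full Borel and Baire $\sigma$-algebras on $\mathcal{M}$ may differ.
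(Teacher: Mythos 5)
Your proposal is correct and follows essentially the same route as the paper's proof: continuous cylinder functions depending on finitely many coordinates, well-definedness of the functional $\mathcal{E}$ via consistency (Remark~\ref{rem:consistency}), boundedness via the uniform total-variation bound, Stone--Weierstrass density, continuous linear extension, and Riesz--Markov (Theorem~\ref{thm:riesz-markov}).

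The only divergence is the endgame, and it is worth comparing. The paper passes from the functional identity to the set-level identity~\eqref{eq:cylinders_pro} by viewing $\nu_{\bm{t}_n}$ and the pushforward $\nu\circ\pi_{\bm{t}_n}^{-1}$ as two regular complex Borel measures on $M_{\bm{t}_n}$ that integrate all continuous functions equally, and then invoking a dedicated measure-determining lemma (Lemma~\ref{lemma:measure_determining}, itself proved by approximating indicators of closed sets and routing dominated convergence and regularity estimates through the variation $|\nu|$); you instead approximate indicators of rectangles directly and close with a $\pi$--$\lambda$ argument. These are the same computation packaged differently, though you should note explicitly that the Dynkin-type step must be adapted to complex measures: the family of sets on which two finite complex measures agree is still a $\lambda$-system (finiteness gives closure under complements and continuity gives closure under increasing unions), so the argument does go through. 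Your uniqueness argument is in fact slightly more faithful to the statement than the paper's: Riesz--Markov yields uniqueness of the \emph{regular Borel} measure representing the functional, whereas the theorem asserts uniqueness of a measure on the coordinate-generated product $\sigma$-algebra $\bigotimes_t\mathcal{F}_t$, which for uncountable $I$ is strictly smaller than the Borel $\sigma$-algebra of the (non-metrizable) product topology; agreement on cylinder sets, which form a $\pi$-system generating $\bigotimes_t\mathcal{F}_t$, settles uniqueness in exactly the sense claimed. Your closing remark about reconciling the two $\sigma$-algebras flags a genuine subtlety that the paper leaves tacit, and restricting the Riesz--Markov measure to $\bigotimes_t\mathcal{F}_t$ is the correct resolution.
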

    The proof of the theorem, postponed to Section~\ref{app:proof_pro}, is an adaptation to the complex-valued case of the ``Functional Analysis Version'' of the proof of the standard Kolmogorov extension theorem~\cite[Theorem 9.1]{bhattacharya2017basic} (also see~\cite[Theorem 11.1]{Davies_76}). The assumption of compactness of the spaces $M_t$ is not strictly necessary and may be removed by following a similar argument as the one presented in both referenced proofs to cover the non-compact case.

    Eq.~\eqref{eq:cylinders_pro} states the following: given any Borel set $A\in\mathcal{F}_{\bm{t}_n}$, the value of $\nu_{\bm{t}_n}$ on it corresponds to the value of $\nu$ on the corresponding \textit{cylinder}---the set of all functions $f$ whose values at the times $t_1,\dots,t_n$ are in $A$. As such, $\nu$ unambiguously generates the full consistent family of measures $\nu_{\bm{t}_n}$. 
    Theorem~\ref{thm:extension_theorem_pro} reduces to the ``standard'' Kolmogorov extension theorem (at least in the case of compact metric spaces) when all measures involved are positive-valued---importantly, in such a case, the request~\eqref{eq:uniform_boundedness_pro} becomes trivial for probability measures since $|\nu_{\bm{t}_n}|=\nu_{\bm{t}_n}$ and all quantities in Eq.~\eqref{eq:uniform_boundedness_pro} are simply equal to one by definition. In this sense, the request of uniformly bounded total variation is \textit{the} additional ingredient required in order to extend the result to complex-valued measures.

    \begin{remark}\label{rem:pro}
        We can easily check that Theorem~\ref{thm:extension_theorem} in the main text is a corollary of Theorem~\ref{thm:extension_theorem_pro} corresponding to the case in which $M_t=\Omega\times\Omega$ (and thus $\prod_tM_t=(\Omega\times\Omega)^I$) for all $t\in I$, where $\Omega$ is a finite set of real numbers (and thus a compact metric space), in which case each $\nu_{\bm{t}_n}$ is thus given by
        \begin{equation}
            \nu_{\bm{t}_n}(A^+\times A^-)=\sum_{\bm{f}^\pm_n\in A^\pm}B_{\bm{t}_n}(\bm{f}^+_n,\bm{f}^-_n)
        \end{equation}
        for all couple of Borel sets $A^\pm$. This is shown by taking into account the following facts:
        \begin{itemize}
            \item the Borel $\sigma$-algebras on $\Omega\times\Omega$ and all products of finitely many copies of it simply coincide with the respective power sets;
            \item the integral with respect to each measure $\nu_{\bm{t}_n}$ reduces to a finite sum:
            \begin{equation}\label{eq:discrete_integral}
                \int X_{\bm{t}_n}(\bm{f}_n^+,\bm{f}_n^-)\;\mathrm{d}\nu_{\bm{t}_n}(\bm{f}_n^+,\bm{f}_n^-)=\sum_{\bm{f}_n^+,\bm{f}_n^-}X_{\bm{t}_n}(\bm{f}_n^+,\bm{f}_n^-)Q_{\bm{t}_n}(\bm{f}_n^+,\bm{f}_n^-);
            \end{equation}
            \item having Eq.~\eqref{eq:discrete_integral} into account, the consistency condition~\eqref{eq:consistency_pro2} (up to permutations) reduces to Eq.~\eqref{eq:kolmogorov_biconsistency};
            \item similarly, the uniform boundedness condition~\eqref{eq:uniform_boundedness_pro} reduces to Eq.~\eqref{eq:uniform_bound}.
        \end{itemize}
       Finally, setting $A=\{\bm{f}_n^+\times\bm{f}_n^-\}$ in Eq.~\eqref{eq:cylinders_pro}, one obtains
       \begin{align}\label{eq:cylinders_bis}
            Q_{\bm{t}_n}(\bm{f}_n^+,\bm{f}_n^-)&=\nu\Bigl(\bigl\{(f^+,f^-)\in(\Omega\times\Omega)^I:f^\pm(t_j)=f^\pm_j,\;j=1,\dots,n\bigr\}\Bigr),
       \end{align}
       and the latter quantity can be formally expressed as the integral over a product of Dirac distributions as in Eq.~\eqref{eq:gen_extension_thrm:extension_formula}, where $\mathcal{Q}_I[f^+,f^-]\,[\mathcal{D}f^+][\mathcal{D}f^-]$ stands for the measure element $\mathrm{d}\nu[f^+,f^-]$.

       Some final comments are in order. The quantity $\nu_{\bm{t}_n}$ can be equivalently interpreted, with an obvious abuse of notation, as a \textit{bi-measure} $(A^+,A^-)\mapsto\nu_{\bm{t}_n}(A^+,A^-)$, that is, a function of \textit{pairs} of sets that is additive in both arguments. If, in particular, we assume that the functions $B_{\bm{t}_n}$ satisfy the property~\ref{prop:positivity} (as is always the case for those associated with a quantum mechanical system, cf.~Section~\ref{sec:quantum_extension}), then for any finite collection $(A_j)_j$, the matrix $\{\nu_{\bm{t}_n}(A_j,A_k)\}_{j,k}$ is positive semidefinite. Bi-measures satisfying this property are sometimes called \textit{decoherence functionals} (or functions). In particular, their ``diagonal'' part of such measures, $\mu_{\bm{t}_n}(A):=\nu_{\bm{t}_n}(A,A)$, are positive-valued functions which, however, are \textit{not} additive (and thus, not measures). Instead, they satisfy a weaker condition, known as grade-2 additivity:
       \begin{align}\label{eq:grade-2}
	       \mu_{\bm{t}_n}\left(A_1\sqcup A_2\sqcup A_3\right)&=\mu_{\bm{t}_n}(A_1\sqcup A_2)+\mu_{\bm{t}_n}(A_2\sqcup A_3)+\mu_{\bm{t}_n}(A_1\sqcup A_3)\nonumber\\
        &\phantom{=}
            -\mu_{\bm{t}_n}(A_1)-\mu_{\bm{t}_n}(A_2)-\mu_{\bm{t}_n}(A_3),
        \end{align}
     and are thus known as grade-2 or \textit{quantum} measures, cf.~\cite{Sorkin_ModPhysLettA94,gudder2009quantum,gudder2010finite,Gudder_MathSlov12}.
\end{remark}

\subsection{Proof of Theorem~\ref{thm:extension_theorem_pro}}\label{app:proof_pro}

In order to prove Theorem~\ref{thm:extension_theorem_pro}, we will need to state some intermediate results that hold for \textit{regular} complex measures.    
To begin with, a complex Borel measure is said to be \textit{regular} if, for every Borel set $A$,
    \begin{equation}
        |\nu|(A)=\sup\{|\nu|(A'): A'\subset A\text{ is closed}\}=\inf\{|\nu|(A'): A'\supset A\text{ is open}\};
    \end{equation}    
    in particular, all Borel measures on a compact metric space are regular.

    If $M$ is a compact topological space and is also Hausdorff (that is, every couple of points admit disjoint neighborhoods), the space $\mathrm{C}(M)$ of complex-valued continuous functions on $M$, endowed with the supremum norm, is a Banach space. The following result, often referred to as the Riesz--Markov representation theorem, establishes an important connection between bounded functionals on $\mathrm{C}(M)$ and complex-valued measures.

    \begin{theorem}[Riesz--Markov theorem]\label{thm:riesz-markov}
        Let $M$ be a compact Hausdorff space. Then, for every bounded linear functional $\mathcal{E}:\mathrm{C}(M)\rightarrow\mathbb{C}$, there exists a unique regular complex-valued Borel measure $\nu$ such that
        \begin{equation}\label{eq:riesz_markov}
            \mathcal{E}(f)=\int_M f\;\mathrm{d}\nu;
        \end{equation}
        besides, $\mathcal{E}$ is a positive functional iff $\nu$ is a positive-valued measure. The operator norm of $\mathcal{E}$ equals the \textit{total variation} of $\nu$:
        \begin{equation}
            \left\|\mathcal{E}\right\|:=\sup_{\|f\|_\infty=1}|\mathcal{E}(f)|=|\nu|(M).
        \end{equation}
    \end{theorem}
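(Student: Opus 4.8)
The plan is to reduce the general complex case to the classical Riesz representation theorem for \emph{positive} linear functionals, which I take as the technical cornerstone. First I would treat a bounded real-linear functional $u$ on the real Banach space $\mathrm{C}_{\mathbb{R}}(M)$ of real-valued continuous functions and decompose it as a difference of two positive functionals $u=u^+-u^-$. The construction mimics the Jordan decomposition of signed measures at the level of functionals: for $f\geq 0$ set $u^+(f):=\sup\{u(g):0\leq g\leq f\}$, extend to arbitrary real $f$ by $u^+(f):=u^+(f^+)-u^+(f^-)$ using the positive/negative part decomposition $f=f^+-f^-$, and then verify that $u^+$ is well defined, additive, positively homogeneous, linear, and bounded, with $u^-:=u^+-u$ likewise positive. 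Applying this to $f\mapsto\operatorname{Re}\mathcal{E}(f)$ and $f\mapsto\operatorname{Im}\mathcal{E}(f)$ (for real $f$) yields four positive functionals.

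For each of these I would invoke the Riesz representation theorem for positive linear functionals on $\mathrm{C}(M)$, obtaining four finite positive regular Borel measures, and assemble them into $\nu:=(\mu_1^+-\mu_1^-)+\mathrm{i}(\mu_2^+-\mu_2^-)$, a regular complex measure representing $\mathcal{E}$ on real functions and hence, by complex linearity, on all of $\mathrm{C}(M)$. \emph{Uniqueness} follows from regularity: if $\nu'$ also represents $\mathcal{E}$, then $\nu-\nu'$ integrates every continuous function to zero, and since on a compact metric space a regular complex Borel measure is determined by its action on $\mathrm{C}(M)$ (continuous functions approximate indicators of open sets in $L^1$ of the variation), one concludes $\nu=\nu'$. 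The positivity equivalence is then immediate in one direction, and in the other it follows because a positive $\mathcal{E}$ forces $u^-\equiv 0$ in the construction above, leaving $\nu$ positive.

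It remains to establish the norm identity $\|\mathcal{E}\|=|\nu|(M)$. The inequality $\|\mathcal{E}\|\leq|\nu|(M)$ is immediate from the triangular estimate~\eqref{eq:triangular}, since $|\mathcal{E}(f)|=\left|\int_M f\,\mathrm{d}\nu\right|\leq\int_M|f|\,\mathrm{d}|\nu|\leq\|f\|_\infty\,|\nu|(M)$. For the reverse inequality I would use the polar decomposition $\mathrm{d}\nu=h\,\mathrm{d}|\nu|$ with $|h|=1$ holding $|\nu|$-almost everywhere; formally $\mathcal{E}(\bar h)=\int_M\bar h\,h\,\mathrm{d}|\nu|=|\nu|(M)$, but $\bar h$ need not be continuous. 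I would therefore approximate $\bar h$ in $L^1(|\nu|)$ by continuous functions $g_k$ with $\|g_k\|_\infty\leq 1$ (using Lusin's theorem together with the regularity of $|\nu|$), so that $|\mathcal{E}(g_k)-|\nu|(M)|\leq\int_M|g_k-\bar h|\,\mathrm{d}|\nu|\to 0$, whence $\|\mathcal{E}\|\geq\lim_k|\mathcal{E}(g_k)|=|\nu|(M)$.

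The main obstacle I anticipate is the positive case itself---the Riesz representation theorem for positive functionals---whose self-contained proof requires building the representing measure from scratch: defining it on open sets via $\mu(U)=\sup\{\mathcal{E}(f):0\leq f\leq 1,\ \operatorname{supp}f\subset U\}$, extending by the Carathéodory procedure, and verifying Borel regularity and the integral representation. This is the bulk of the work in a fully self-contained treatment. Given the paper's reliance on Rudin's monograph, I would instead cite the positive-case result and concentrate the original effort on the functional-level Jordan decomposition and on the Lusin approximation underlying the norm identity.
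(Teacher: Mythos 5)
Your proposal is correct in substance, but it necessarily diverges from the paper, because the paper does not prove this theorem at all: it is treated as an imported standard result, with a pointer to Rudin's Theorem~6.19 and only the remark that the converse direction is also true. What you sketch is essentially the Folland-style proof: a functional-level Jordan decomposition reducing the complex case to four positive functionals, the positive Riesz representation theorem as the cited cornerstone, uniqueness from the fact that regular measures are determined by integration against continuous functions, and the norm identity $\|\mathcal{E}\|=|\nu|(M)$ from the polar decomposition $\mathrm{d}\nu=h\,\mathrm{d}|\nu|$ plus Lusin approximation of $\bar{h}$. The proof behind the paper's citation is organized differently: Rudin constructs a single dominating \emph{positive} functional $\Lambda$ via $\Lambda f=\sup\{|\mathcal{E}(g)|:|g|\le f\}$ for $f\ge 0$, represents it by a positive regular measure $\lambda$, and then obtains $\mathrm{d}\nu=g\,\mathrm{d}\lambda$ with $|g|\le 1$ from the duality $L^1(\lambda)^*\cong L^\infty(\lambda)$, i.e.\ ultimately from the Radon--Nikodym theorem; the norm identity then follows by showing $|g|=1$ $\lambda$-a.e.\ and $\lambda=|\nu|$. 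Your route avoids Radon--Nikodym entirely, at the price of carrying four measures and a separate Lusin argument; both reduce the real work to the positive-case Riesz theorem, which you correctly identify as the part to cite rather than rebuild.

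One point you should repair: in the uniqueness step you appeal to determination of regular measures by $\mathrm{C}(M)$ \emph{on a compact metric space}, but the theorem is stated for compact Hausdorff $M$, and this generality is actually used in the paper --- in the proof of Theorem~\ref{thm:extension_theorem_pro} the Riesz--Markov theorem is applied to $\mathcal{M}=\prod_{t\in I}M_t$, which is compact Hausdorff but generally \emph{not} metrizable when $I$ is uncountable. Fortunately your parenthetical justification survives as stated: for a regular complex Borel measure on a compact Hausdorff space, inner regularity of the variation together with Urysohn's lemma produces continuous functions approximating indicators in $L^1$ of the variation, so the dominated-convergence argument goes through verbatim; you should simply drop the metric hypothesis rather than rely on distance-function constructions. (The same remark applies to your use of Lusin's theorem, which holds for regular measures on compact Hausdorff spaces, so the norm identity is unaffected.)
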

    This is a standard result in measure theory, see e.g.~\cite[Theorem 6.19]{rudin1986real} (in a slightly more general setting). Of course, the converse of the theorem is also true---every functional in the form~\eqref{eq:riesz_markov}, for a given complex measure $\nu$, has the desired properties. This establishes a one-to-one correspondence between bounded linear functionals on $\mathrm{C}(M)$ and regular complex Borel measures on $M$.

    We will also need the following property which, while elementary, is usually stated for positive measures only---as such, we shall provide an explicit proof for it in the general case:
    \begin{lemma}\label{lemma:measure_determining}
        Let $M$ be a metric space, and $\nu_1$, $\nu_2$ two complex regular Borel measures such that the equality
        \begin{equation}
            \int_M f\;\mathrm{d}\nu_1=\int_Mf\;\mathrm{d}\nu_2 
        \end{equation}
        holds for all real-valued continuous functions $f$ on it. Then $\nu_1=\nu_2$.
    \end{lemma}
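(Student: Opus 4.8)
The plan is to reduce the statement to showing that the complex measure $\nu := \nu_1 - \nu_2$ vanishes identically, using the hypothesis in the equivalent form $\int_M f\,\mathrm{d}\nu = 0$ for every real-valued continuous $f$. First observe that $\nu$ is again a finite complex Borel measure, and that it is regular: its variation obeys $|\nu|\le|\nu_1|+|\nu_2|$, and since a measure dominated by a regular one is itself regular, regularity is inherited from $\nu_1,\nu_2$. The strategy then has two stages: establish that $\nu$ annihilates every \emph{closed} set, and then upgrade this from closed sets to all Borel sets using regularity.

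For the first stage I would exploit the metric structure of $M$ to approximate indicators of closed sets by continuous functions. Given a closed set $C\subseteq M$, set
\begin{equation}
  f_n(x) := \max\!\bigl(0,\,1 - n\,d(x,C)\bigr),\qquad n\in\mathbb{N},
\end{equation}
with $d(\cdot,C)$ the distance to $C$. Each $f_n$ is real-valued, continuous, satisfies $0\le f_n\le 1$, and converges pointwise to the indicator $1_C$ (for $x\in C$ one has $f_n(x)=1$, while for $x\notin C$ one has $d(x,C)>0$, so $f_n(x)\to 0$). Since $\nu$ is finite, the constant $g\equiv 1$ is integrable against $\nu$ and dominates all the $f_n$, so the dominated convergence theorem for complex measures proven above applies and yields
\begin{equation}
  \nu(C)=\int_M 1_C\,\mathrm{d}\nu=\lim_{n\to\infty}\int_M f_n\,\mathrm{d}\nu=0,
\end{equation}
the last equality because each $f_n$ is real-valued and continuous, whence $\int_M f_n\,\mathrm{d}\nu=0$ by hypothesis. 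Thus $\nu$ vanishes on every closed subset of $M$.

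The second stage propagates this to an arbitrary Borel set $A$, and this is the step that genuinely requires regularity. By inner regularity of the positive measure $|\nu|$ (Definition~\ref{def:variation} together with the regularity of $\nu$), for every $\varepsilon>0$ there is a \emph{closed} set $C\subseteq A$ with $|\nu|(A\setminus C)<\varepsilon$. Writing $A = C\sqcup(A\setminus C)$, using grade-$1$ additivity and $\nu(C)=0$,
\begin{equation}
  |\nu(A)| = |\nu(A\setminus C)| \le |\nu|(A\setminus C) < \varepsilon,
\end{equation}
where the middle inequality is the standard bound $|\nu(E)|\le|\nu|(E)$. As $\varepsilon>0$ is arbitrary, $\nu(A)=0$ for every Borel set $A$, i.e.\ $\nu_1=\nu_2$.

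The main obstacle here is conceptual rather than computational. The dominated-convergence argument only controls $\nu$ on \emph{closed} sets, whereas regularity is formulated in terms of the variation $|\nu|$; the device that bridges the gap is the inequality $|\nu(E)|\le|\nu|(E)$, which absorbs the error term $\nu(A\setminus C)$ into the small quantity $|\nu|(A\setminus C)$ furnished by inner regularity. It is crucial that the approximating set $C$ be closed—precisely the class on which $\nu$ has already been shown to vanish—so the two stages dovetail exactly.
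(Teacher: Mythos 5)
Your proof is correct and follows essentially the same route as the paper's: approximate the indicator of a closed set by continuous functions (your explicit $f_n(x)=\max\bigl(0,1-n\,d(x,C)\bigr)$ is precisely the construction the paper imports from its reference), apply the dominated convergence theorem for complex measures, then pass to arbitrary Borel sets via regularity and the bound $|\nu(E)|\le|\nu|(E)$. The only cosmetic difference is that you work with the single difference measure $\nu=\nu_1-\nu_2$ (duly checking its regularity), whereas the paper keeps $\nu_1,\nu_2$ separate and runs an $\varepsilon/2$ argument; the substance is identical.
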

    The proof of this statement is analogous to the proof of~\cite[Proposition 1.6]{bhattacharya2017basic} (for positive measures); we shall sketch it below.
    \begin{proof}
         Following the construction in the proof of~\cite[Proposition 1.6]{bhattacharya2017basic}, one shows that, given any \textit{closed} Borel set $A\subset M$, then there exists a sequence of nonnegative continuous functions $(f_n)_{n\in\mathbb{N}}$ which converge pointwise to the indicator function $1_A$ of $A$. By hypothesis, for all $n\in\mathbb{N}$ we have
         \begin{equation}
             \int_M f_n\;\mathrm{d}\nu_1=\int_M f_n\;\mathrm{d}\nu_2;
         \end{equation}
         applying the dominated convergence theorem yields $\nu_1(A)=\nu_2(A)$, whence the two measures $\nu_1$ and $\nu_2$ agree on all closed Borel sets. 
         
         To obtain the desired equality for an arbitrary Borel set $A$, we note that, since the two measures are regular, for every $\epsilon>0$ there exists $A'\subset A$ closed such that
         \begin{equation}
             |\nu_1|(A\setminus A')<\frac{\epsilon}{2},\qquad  |\nu_2|(A\setminus A')<\frac{\epsilon}{2},
         \end{equation}
         whence
         \begin{align}
             |\nu_1(A)-\nu_2(A)|&=\left|\nu_1(A')+\nu_1(A\setminus A')-\nu_2(A')-\nu_2(A\setminus A')\right|\nonumber\\
             &=\left|\nu_1(A\setminus A')-\nu_2(A\setminus A')\right|\nonumber\\
             &=|\nu_1|(A\setminus A')+|\nu_2|(A\setminus A')<\epsilon.
         \end{align}
         Since $\epsilon$ is arbitrary, we can finally conclude $\nu_1(A)=\nu_2(A)$.
    \end{proof}

    Below, after two preliminary remarks, we will finally present the proof of Theorem~\ref{thm:extension_theorem_pro}, which---like the one presented in the main text---essentially follows step-by-step the second version of the proof of~\cite[Theorem 9.1]{bhattacharya2017basic}. 
    
    \begin{remark}\label{rem:square:brackets}
    When considering complex-valued functions on the product space, $\mathcal{X}:\prod_tM_t\rightarrow\mathbb{C}$, we shall use the following notation:
    \begin{equation}
        \mathcal{X}[f]:=\mathcal{X}\left((f_t)_{t\in I}\right),
    \end{equation}
    i.e.~with the square brackets reminding that we are considering a function, $\mathcal{X}$, of functions (the elements $f\in\prod_tM_t$).
    \end{remark}

     \begin{remark}\label{rem:pushforward}  
    A more compact way to state Eq.~\eqref{eq:cylinders_pro}, which we will use in the proof, is the following.  Given $n\in\mathbb{N},\bm{t}_n\in I^n$, let $\pi_{\bm{t}_n}:\prod_tM_t\rightarrow M_{\bm{t}_n}$ be defined (with the notation of Remark~\ref{rem:square:brackets}) by
    \begin{equation}
        \pi_{\bm{t}_n}[f]:=(f_{t_1},\ldots,f_{t_n}).
    \end{equation}
    Then for every $A\in\mathcal{F}_{\bm{t}_n}$ one has
    \begin{align}
        \pi_{\bm{t}_n}^{-1}(A)&=\left\{f\in\prod_tM_t:\pi_{\bm{t}_n}[f]\in A\right\}\nonumber\\
        &=\left\{f\in\prod_tM_t:(f_{t_1},\ldots,f_{t_n})\in A\right\},
    \end{align}
    and thus the validity of Eq.~\eqref{eq:cylinders_pro} for all Borel sets of $M_{\bm{t}_n}$ corresponds to the equality
    \begin{equation}
        \nu_{\bm{t}_n}=\nu\circ\pi_{\bm{t}_n}^{-1}.
    \end{equation}
    The right-hand side above is the \textit{pushforward} of $\nu$, i.e.~the measure on $M_{\bm{t}_n}$ which, to every Borel set $A$ of $M_{\bm{t}_n}$, associates the quantity $\nu(\pi_{\bm{t}_n}^{-1}(A))$.

    Also notice that $\pi_{\bm{t}_n}$ allows us to ``lift'' any $X_{\bm{t}_n}\in\mathrm{C}(M_{\bm{t}_n})$ to an element of $\mathrm{C}(\prod_tM_t)$ defined by $\mathcal{X}:=X_{\bm{t}_n}\circ\pi_{\bm{t}_n}$, that is, 
    \begin{equation}
        \mathcal{X}[f]:=X_{\bm{t}_n}(\pi_{\bm{t}_n}[f])=X_{\bm{t}_n}(f_{t_1},\ldots,f_{t_n});
    \end{equation}
    in fact, the set of all elements of $\mathrm{C}(\prod_tM_t)$ defined this way will play a fundamental role in the proof below.
    \end{remark}
    
    \begin{proof}[Proof of Theorem~\ref{thm:extension_theorem_pro}]

    To simplify the notation, hereafter we will set
    \begin{equation}
        \mathcal{M}:=\prod_{t\in I}M_t.
    \end{equation}    
    To begin with, since each space $M_{t}$, $t\in I$, is a compact metric space, their product space $\mathcal{M}=\prod_t M_t$, endowed with the product topology, is also compact (and thus a compact Hausdorff space) by Tychonoff's theorem; as such, the space $\mathrm{C}\left(\mathcal{M}\right)$ of continuous functions, endowed with the uniform norm
    \begin{equation}
        \|\mathcal{X}\|_\infty:=\max_{f\in\mathcal{M}}\left|\mathcal{X}[f]\right|,
    \end{equation}    
    is a Banach space. We will construct a bounded linear functional $\mathcal{E}$ on $\mathrm{C}\left(\mathcal{M}\right)$ defined as follows. Let $\mathrm{C}_{\rm fin}\left(\mathcal{M}\right)$ be the space of elements of $\mathrm{C}\left(\mathcal{M}\right)$ that only depend on \textit{finitely many} coordinates, that is: $\mathcal{X}\in\mathrm{C}_{\rm fin}\left(\mathcal{M}\right)$ if and only if there exist $n\in\mathbb{N}$, $\bm{t}_n\in I^n$, and a function $\hat{X}_{\bm{t}_n}\in\mathrm{C}(M_{\bm{t}_n})$ such that
    \begin{equation}\label{eq:finite_repr}
        \mathcal{X}[f]=\tilde{X}_{\bm{t}_n}(f_{t_1},\ldots,f_{t_n}),
    \end{equation}
    or, equivalently, such that $\mathcal{X}=\tilde{X}_{\bm{t}_n}\circ\pi_{\bm{t}_n}$ for some $n\in\mathbb{N}$, $\bm{t}_n\in I^n$, and $\tilde{X}_{\bm{t}_n}\in\mathrm{C}(M_{\bm{t}_n})$.   
    By the Stone--Weierstrass theorem, $\mathrm{C}_{\rm fin}\left(\mathcal{M}\right)$ is a dense subspace of $\mathrm{C}\left(\mathcal{M}\right)$. We now define a linear functional $\mathcal{E}:\mathrm{C}_{\rm fin}\left(\mathcal{M}\right)\rightarrow\mathbb{C}$ as follows: for any $\mathcal{X}\in\mathrm{C}_{\rm fin}\left(\mathcal{M}\right)$ admitting a representation as in Eq.~\eqref{eq:finite_repr}, we define
    \begin{equation}
        \mathcal{E}(\mathcal{X}):=\int\tilde{X}_{\bm{t}_n}(f_{t_1},\ldots,f_{t_n})\;\mathrm{d}\nu_{\bm{t}_n}(f_{t_1},\ldots,f_{t_n}).
    \end{equation}
    We will prove that this functional is (a) well-defined, that is, its value does not depend on the particular representation~\eqref{eq:finite_repr}; and (b) bounded. Point (a) arises from the following observation: a function $\mathcal{X}\in\mathrm{C}_{\rm fin}\left(\mathcal{M}\right)$ admitting the $n$-time representation~\eqref{eq:finite_repr} also admits the $(n+1)$-time representation
    \begin{equation}\label{eq:finite_repr_n-plus-1}
        \mathcal{X}[f]=\tilde{X}_{t_1,\dots,t_n,t_{n+1}}(f_{t_1},\ldots,f_{t_n},f_{t_{n+1}}),
    \end{equation}
    for any $t_{n+1}\in I$, by simply defining $\tilde{X}_{t_1,\dots,t_n,t_{n+1}}(f_{t_1},\ldots,f_{t_n},f_{t_{n+1}}):=\tilde{X}_{t_1,\dots,t_n}(f_{t_1},\ldots,f_{t_n})$. Using this representation, we have
    \begin{align}
        \mathcal{E}(\mathcal{X})&=\int\tilde{X}_{\bm{t}_{n+1}}(f_{t_1},\ldots,f_{t_n},f_{t_{n+1}})\;\mathrm{d}\nu_{\bm{t}_{n+1}}(f_{t_1},\ldots,f_{t_n},f_{t_{n+1}})\nonumber\\
        &=\int\tilde{X}_{\bm{t}_{n}}(f_{t_1},\ldots,f_{t_n})\;\mathrm{d}\nu_{\bm{t}_{n+1}}(f_{t_1},\ldots,f_{t_n},f_{t_{n+1}})\nonumber\\
        &=\int\tilde{X}_{\bm{t}_{n}}(f_{t_1},\ldots,f_{t_n})\;\mathrm{d}\nu_{\bm{t}_{n}}(f_{t_1},\ldots,f_{t_n}),
    \end{align}
    where the latter equality follows by consistency, see Remark~\ref{rem:consistency}. This proves that the functional is well-defined. As for boundedness, we have (cf.~Eq.~\eqref{eq:triangular})
    \begin{align}
        \left|\mathcal{E}(\mathcal{X})\right|&=\left|\int\tilde{X}_{\bm{t}_n}(f_{t_1},\ldots,f_{t_n})\;\mathrm{d}\nu_{\bm{t}_n}(f_{t_1},\ldots,f_{t_n})\right|\nonumber\\
        &\leq\int\left|\tilde{X}_{\bm{t}_n}(f_{t_1},\ldots,f_{t_n})\right|\;\mathrm{d}\left|\nu_{\bm{t}_n}\right|(f_{t_1},\ldots,f_{t_n})\nonumber\\
        &\leq\|\mathcal{X}\|_\infty\left|\nu_{\bm{t}_n}\right|(M_{\bm{t}_n}),
    \end{align}
    whence, maximizing,
    \begin{equation}
        \sup_{\mathcal{X}\in\mathrm{C}_{\rm fin}(\mathcal{M})}\frac{\left|\mathcal{E}(\mathcal{X})\right|}{\|\mathcal{X}\|_\infty}\leq\sup_{n\in\mathbb{N},\bm{t}_n\in I^n}\left|\nu_{\bm{t}_n}\right|(M_{\bm{t}_n})<\infty.
    \end{equation}
        This shows that $\mathcal{E}$ is a \textit{bounded} functional on $\mathrm{C}_{\rm fin}\left(\mathcal{M}\right)$. Since the latter, by the Stone--Weierstrass theorem, is a dense subset of the Banach space $\mathrm{C}\left(\mathcal{M}\right)$, by the continuous linear extension theorem $\mathcal{E}$ admits a unique extension to a bounded functional on $\mathrm{C}\left(\mathcal{M}\right)$, which we will denote by the same symbol. By the Riesz--Markov theorem (Theorem~\ref{thm:riesz-markov}), this implies the existence of a unique regular Borel measure $\nu$ on the product space such that, for every $\mathcal{X}\in\mathrm{C}\left(\mathcal{M}\right)$,
        \begin{equation}\label{eq:functional_integral_pro}
            \mathcal{E}(\mathcal{X})=\int\mathcal{X}[f]\;\mathrm{d}\nu[f].
        \end{equation}
        To conclude the proof, we will need to prove Eq.~\eqref{eq:cylinders_pro}. By construction, the functional $\mathcal{E}$ has the following property: with the notation introduced in Remark~\ref{rem:pushforward}, for any $X_{\bm{t}_n}\in\mathrm{C}(M_{\bm{t}_n})$ we have
    \begin{equation}
        \mathcal{E}(X_{\bm{t}_n}\circ\pi_{\bm{t}_n})=\int X_{\bm{t}_n}(f_{t_1},\ldots,f_{t_n})\;\mathrm{d}\nu_{\bm{t}_n}(f_{t_1},\ldots,f_{t_n}).
    \end{equation}
   But, by Eq.~\eqref{eq:functional_integral_pro}, the left-hand side of the above equation also reads:
   \begin{align}
       \mathcal{E}(X_{\bm{t}_n}\circ\pi_{\bm{t}_n})&=\int(X_{\bm{t}_n}\circ\pi_{\bm{t}_n})[f]\;\mathrm{d}\nu[f]\nonumber\\
       &=\int X_{\bm{t}_n}(f_{t_1},\ldots,f_{t_n})\;\mathrm{d}(\nu\circ\pi_{\bm{t}_n}^{-1})(f_{t_1},\ldots,f_{t_n}).
   \end{align}
   That is, the two Borel measures $\nu_{\bm{t}_n}$ and $\nu\circ\pi_{\bm{t}_n}^{-1}$ on $M_{\bm{t}_n}$ have the following property: their integral over any continuous function $X_{\bm{t}_n}$ is equal. Besides, both measures $\nu_{\bm{t}_n}$ and $\nu\circ\pi_{\bm{t}_n}^{-1}$ are regular (the former because it is defined on a compact metric space, the latter by the regularity of $\nu$ ensured by the Riesz--Markov theorem); by Lemma~\ref{lemma:measure_determining}, this is sufficient to conclude that $\nu_{\bm{t}_n}=\nu\circ\pi_{\bm{t}_n}^{-1}$, and thus (see Remark~\ref{rem:pushforward}) Eq.~\eqref{eq:cylinders_pro} holds, thus concluding the proof.
\end{proof}

\bibliography{bib_extension_theorem}

\end{document}